\documentclass[12pt]{article}%
\usepackage{amsmath}
\usepackage{amsfonts}
\usepackage{amssymb}
\usepackage{color}
\usepackage[colorlinks,linkcolor=red,anchorcolor=blue,citecolor=blue]%
{hyperref}
\usepackage{graphicx}
\usepackage{rotating}
\usepackage{caption}
\usepackage{subcaption}%
\setcounter{MaxMatrixCols}{30}
\providecommand{\U}[1]{\protect \rule{.1in}{.1in}}
\newtheorem{theorem}{Theorem}[section]

\newtheorem{assumption}{Assumption}[section]

\newtheorem{corollary}{Corollary}[section]

\newtheorem{lemma}{Lemma}[section]

\newtheorem{proposition}{Proposition}[section]
\newtheorem{remark}{Remark}[section]

\newenvironment{proof}[1][Proof]{\noindent \textbf{#1.} }{\  \rule{0.5em}{0.5em}}
\numberwithin{equation}{section}
\textwidth=6.45in
\textheight=9.0in
\oddsidemargin=0in
\evensidemargin=0in
\topmargin=-0.5in

\begin{document}

\title{Asymptotics of Sum of Heavy-tailed Risks with Copulas}
\author{Fan Yang$^{(a)}$\thanks{\textrm{Corresponding author: fan.yang@uwaterloo.ca}%
}\qquad Yi Zhang$^{(b)}$\quad \\$^{(a)}${\small \ Department of Statistics and Actuarial Science, University
of Waterloo, }\\{\small \ Waterloo, ON N2L 3G1, Canada}\\$^{(b)}${\small \ Department of Mathematics, Zhejiang University}\\{\small Hangzhou, Zhejiang, 310027, China}}
\date{August 22, 2023}
\maketitle

\begin{abstract}
We study the tail asymptotics of the sum of two heavy-tailed random variables.
The dependence structure is modeled by copulas with the so-called tail order
property. Examples are presented to illustrate the approach. Further for each
example we apply the main results to obtain the asymptotic expansions for
Value-at-Risk of aggregate risk.

\vspace*{0.2cm} \textbf{Keywords}: tail asymptotics, heavy tail, copula,
aggregate risk

\vspace*{0.2cm} \textbf{MSC Subject Classification}: 91G70, 60G70

\vspace*{0.2cm} \textbf{Statements and Declarations: }The Authors declare that
there is no conflict of interest.

\end{abstract}


\baselineskip18pt

\section{Introduction}

Let $X$ and $Y$ be two positive heavy-tailed random variables. We are
interested in the asymptotic behavior of the tail probabilities
\begin{equation}
\Pr \left(  X+Y>t\right)  \label{tail}%
\end{equation}
for large $t${. }Asymptotic analysis of the sum of heavy-tailed random
variables is of crucial importance in applied probability and quantitative
risk management. It is the key step in the study of asymptotic expansions for
quantile based risk measures, such as Value-at-Risk (VaR), Expected Shortfall (ES) and
expectiles, which further provides a convenient way to study the behavior of
dependent extreme risks; see e.g. {Embrechts et al. \cite{ELW2009}, Embrechts
et al. \cite{ENW2009}, }Tang and Yang \cite{TY2012}, Mao and Hu
\cite{Mao-Hu-2013} and {McNeil et al. \cite{MFE2015}}. These asymptotics can
be applied in many areas of quantitative risk management such as portfolio
diversification (e.g. Alink et al. \cite{ALW-2004}, Mao and Yang
\cite{MY2015}, {Mainik and R\"{u}schendorf \cite{MR2010}, }Mainik and
Embrechts \cite{ME2013}), credit risk (e.g. Bassamboo et al. \cite{BJZ2008},
Tang et al. \cite{TTY2019}), and efficient estimation of risk measures (e.g.
Cai et al. \cite{CEDZ15}, Daouia et al. \cite{DGS18}, and Zhao et al.
\cite{ZMY21}).

{When }$X$ and $Y$ are assumed to be completely independent, both the first-
and second-order asymptotic expansions of the tail probabilities (\ref{tail})
have been well studied in the literature, for example, Barbe and McCormick
\cite{BM-2005}, Geluk et al. \cite{GDRS1997}, Geluk et al. \cite{GPD2000}, and
Mao and Hu \cite{Mao-Hu-2013}. In contrast, the study regarding dependent
risks is far less than the independent case; {the literature in this aspect
includes} the first-order analysis in a general copula setting with constraint
in Albrecher et al. \cite{Albrecher-2006}, the first-order expansions with
Archimedean copula in Alink et al. \cite{ALW-2004}, and the second-order
expansions with FGM copula in Mao and Yang \cite{MY2015}.{ In addition,
Kortschak \cite{K2012} obtained the second-order expansions for the sum of
dependent risks under a set of assumptions different from ours.}

Obviously, the dependence structure, particularly the tail dependence between
random variables $X$ and $Y$, is a crucial factor in the analysis of the tail
probabilities defined in (\ref{tail}). Copula is a convenient tool to model
dependence structure; see the monograph of Nelson \cite{Nelsen-2006} for a
complete overview of copulas. Formally, {let $C$ be a $2$-dimensional copula
and }the corresponding survival copula $\widehat{C}$ satisfies
\[
\widehat{C}(u,v)=u+v-1+C(1-u,1-v)
\]
for $0\leq u,v\leq1${. The }joint survival probability can then be represented
by
\begin{equation}
\Pr(X>x,Y>y)=\widehat{C}(\overline{F}(x),\overline{G}(y)) \label{survival}%
\end{equation}
for $x,y\geq0$. In Hua and Joe \cite{Hua-Joe-2011} the authors defined the
{tail order function }$\tau$ to capture the tail dependence,
\begin{equation}
\tau(u,v)=\lim_{t\downarrow0}\frac{\widehat{C}(ut,vt)}{t^{\kappa}\ell
(t)},\qquad u,v>0, \label{def of tail order}%
\end{equation}
{where $\ell$ is a slowly varying function}. Note that the tail order function
$\tau$ is defined up to a constant of multiplication. The parameter $\kappa$
measures {the degree of dependence. It can be shown that under the
representation of (\ref{survival}), }when $1<\kappa<2$, $X$ and $Y$ are
positively associated; {when }$\kappa=1$ and $\ell(t)\not \rightarrow 0$, they
are tail dependent; when $\kappa=2$ and $\ell(t)\geq1$, they are
\textquotedblleft near independent\textquotedblright; when $\kappa>2$, they
have negative dependent structures, which is not the focus of this work. Tail
dependence has been well studied in the literature, for example, Schmidt and
Stadtm\"{u}ller \cite{Schmidt-2006} and Kl\"{u}ppelberg et al.
\cite{Klppeberg-2007}, De Haan and Zhou \cite{DZ2011}, the series of papers of
Ledford and Tawn \cite{LT1996}, \cite{LT1997}, \cite{LT1998} and
\cite{LT2003}. By specifying the tail order function via the limit in
(\ref{def of tail order}), it has great convenience and flexibility for the
analysis of tail asymptotics. In this paper, {we are dedicated to developing
explicit second-order asymptotic expansions {of (\ref{tail}) }under this
general dependence structure.}
%

As mentioned before, an important application of the asymptotic expansions of
the tail probabilities is to derive the asymptotic expansions for quantile
based risk measures. To illustrate our results, we derive the second-order
asymptotic expansions of VaR of the aggregate risk $X+Y$ in
Section \ref{example} for each example. 
Through a detailed numerical example, we show that the asymptotic expansions
are very close to true values obtained from large size of Monte Carlo simulations.



The remainder of this paper is organized as follows. Section \ref{MR} presents
the assumptions and main results. Section \ref{example} includes examples to
illustrate the main results and applications to the VaR of aggregate risk.
Proofs of main theorems are relegated to Section \ref{proof}.

\section{Main Results\label{MR}}

In this section, we first present the major assumptions with explanations.
Then we show the main theorems in this paper.

For two positive valued functions $a(x)$ and $b(x)$, throughout this paper, we
write $a(x)\sim b(x)$ {if} $\lim_{x\rightarrow \infty}a(x)/b(x)=1$, and write
$a(x)\lesssim b(x)$ {if} $\lim \sup_{x\rightarrow \infty}a(x)/b(x)\leq1$.

\bigskip The first assumption on the heavy tailedness of the risks.

\begin{assumption}
\label{A1}Random variables $X$ and $Y$ are nonnegative and continuous with
identical distribution function $F=1-\overline{F}$ and $\overline{F}%
\in \mathrm{RV}_{-\alpha}$ for some $\alpha>0$.
\end{assumption}


By Sklar's Theorem (see e.g. Nelson \cite{Nelsen-2006} or Joe \cite{Joe-1997}%
), if the marginal distribution function $F$ is continuous, then the survival
copula $\widehat{C}$ for $X$ and $Y$ in (\ref{survival}) is unique.

The second assumption is a formal statement of the tail order function.

\begin{assumption}
\label{A2}Suppose that the survival copula $\widehat{C}(u,v)$ is symmetric,
that is, $\widehat{C}(u,v)=\widehat{C}(v,u)$ for any $0\leq u,v\leq1$. Further
assume that $\widehat{C}(u,v)$ is twice differentiable on $[0,1]^{2}$. There
exists a tail order $\kappa \in \lbrack1,2]$, a slowly varying function
$\ell(t)$, and a nondegenerate tail function $\tau(u,v)\not \equiv 0$ which is
continuously differentiable on $[0,\infty)^{2}$, such that, for any $u,v\geq
0$,%
\begin{equation}
\lim_{t\downarrow0}\frac{\widehat{C}(ut,vt)}{t^{\kappa}\ell(t)}=\tau(u,v).
\label{assumption on tail order}%
\end{equation}
Further, $\widehat{C}_{u}(u,v)=\frac{\partial}{\partial u}\widehat{C}(u,v)$
and $\widehat{C}_{v}(u,v)=\frac{\partial}{\partial v}\widehat{C}(u,v)$ are
nonincreasing in $u$ and $v$, respectively.
\end{assumption}

Note that, since $\widehat{C}(0,v)=\widehat{C}(u,0)=\widehat{C}(0,0)=0$, it is
naturally to have $\tau(0,v)=\tau(u,0)=\tau(0,0)=0$. Thus,
(\ref{assumption on tail order}) is well defined for any $u,v\geq0$. The
assumption that $\widehat{C}_{u}(u,v)$ is nonincreasing in $u$ or $\widehat
{C}_{v}(u,v)$ is nonincreasing in $v$ is the so-called stochastically
increasing property in Hua and Joe \cite{HJ2014}. Copulas with this property
have positive dependence structures.

Many widely used copulas satisfy Assumption \ref{A2}, for example the extreme
value survival copulas and {Archimedean survival copulas with regularly
varying generator. One counter example is} an {Archimedean} copula
$\widehat{C}(u,v)=uv\exp \left(  -\sigma \ln u\ln v\right)  $ with $0<\sigma
\leq1$; see the 9th copula on Page 116 of Nelson \cite{Nelsen-2006}. One
cannot find a $\kappa$ such that a nondegenerate $\tau$ exists in
(\ref{assumption on tail order}) for this copula.

In Assumption \ref{A2}, the convergence is measured for both arguments of the
survival copula approaching $(0,0)^{T}$. Such condition does not apply in the
case when one of the arguments does not approach $0$ as the other argument.
The following assumption is on the uniform convergence of the partial
derivative of $\widehat{C}$.

\begin{assumption}
\label{A3}Suppose for small $\delta>0$ there exist some $c>0$ and $t_{0}>0$,
such that for all $0<t<t_{0}$, $0\leq u\leq \delta$, and $0\leq v\leq1$,
\[
\frac{\widehat{C}_{v}(t(1+u),v)}{\widehat{C}_{v}(t,v)}-1\lesssim cu.
\]
{\large \textbf{ }}
\end{assumption}


Assumption \ref{A3} is related by the regular variation of $\widehat{C}%
_{v}(ut,v)$, which is formally defined in the next assumption.

\begin{assumption}
\label{A4}Assume that there exist a slowly varying function $h(t)$, and a
function $\varphi(u,v)$ on $(0,\infty)\times(0,1]$ such that for any $u>0$ and
$0<v\leq1$,%
\begin{equation}
\lim_{t\downarrow0}\frac{\widehat{C}_{v}(ut,v)}{t^{\theta}h(t)}=\varphi(u,v).
\label{part1}%
\end{equation}
Further assume that $\varphi(1,\frac{1}{\cdot})\in \mathrm{RV}_{\beta}$ for
some $\beta \geq0$.
\end{assumption}

The parameter $\theta$ is at least $1$, and actually in many examples
$\theta=1$. Justification for such assertions can be found in Lemma
\ref{Taylor} and Remark \ref{remark} in the sequel. \bigskip

Now we are ready{ to present the main results of the paper with proofs
relegated to Section \ref{proof}.}

\begin{theorem}
\label{TH1-P-D1}Suppose that random variables $X$ and $Y$ with survival copula
$\widehat{C}\left(  u,v\right)  $ satisfy Assumptions \ref{A1} -- \ref{A3}.
Further, assume that%
\begin{equation}
\lim_{\delta \downarrow0}\underset{t\rightarrow \infty}{\lim \sup}\frac{\int
_{0}^{\delta}y\widehat{C}_{v}\left(  \overline{F}(t),\overline{F}\left(
ty\right)  \right)  dF(ty)}{\left(  \overline{F}\left(  t\right)  \right)
^{\kappa}\ell \left(  \overline{F}\left(  t\right)  \right)  }=0. \label{A5}%
\end{equation}
If $D(\delta,t)\not =0$ for any $0<\delta<1/2$, $\eta<\infty$ and
$\alpha(1-\beta)<1$, then we have as $t\rightarrow \infty$,%
\[
\Pr \left(  X+Y>t\right)  =2\overline{F}(t)+\Delta_{1}\left(  \overline
{F}(t)\right)  ^{\kappa}\ell \left(  \overline{F}(t)\right)  (1+o(1)),
\]
where $\Delta_{1}=\left(  2\eta+\tau \left(  2^{\alpha},2^{\alpha}\right)
-2\tau \left(  1,2^{\alpha}\right)  \right)  $.
\end{theorem}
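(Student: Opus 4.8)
The plan is to decompose the event $\{X+Y>t\}$ into three regions according to where the "mass" sits, and to handle each region separately. Write
\[
\Pr(X+Y>t)=\Pr(X>t)+\Pr(Y>t)-\Pr(X>t,Y>t)+\Pr(X+Y>t,\,X\le t,\,Y\le t).
\]
The first two terms give $2\overline F(t)$. By Assumption~\ref{A1} and the regular variation of $\overline F$, together with the tail order property in Assumption~\ref{A2}, the joint term $\Pr(X>t,Y>t)=\widehat C(\overline F(t),\overline F(t))$ is $\tau(1,1)\,(\overline F(t))^\kappa\ell(\overline F(t))(1+o(1))$, which is of the order of the correction term; but in fact we expect to recombine it with the ``middle'' contribution rather than treat it in isolation. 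So the real work is in the last term, the contribution of the region where both $X\le t$ and $Y\le t$ but their sum exceeds $t$.

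For that region I would condition on $Y=ty$ and integrate, writing the contribution (up to the symmetric counterpart) as an integral of the form $\int \Pr(X>t-ty\mid Y=ty)\,dF(ty)$ over $y\in(0,1)$, and re-express the conditional probability through the partial derivative $\widehat C_v(\overline F(\cdot),\overline F(\cdot))$. Then split the $y$-integral at a small $\delta$ and at $1-\delta$. On $(0,\delta)$ the hypothesis~\eqref{A5} is exactly what kills that piece at order $(\overline F(t))^\kappa\ell(\overline F(t))$ after taking $\limsup_t$ then $\delta\downarrow0$; the symmetric piece $(1-\delta,1)$ is handled the same way by symmetry of $\widehat C$ (Assumption~\ref{A2}). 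On the bulk $[\delta,1-\delta]$ I would substitute $t-ty$ and use the uniform regular variation of $\overline F$ (Potter bounds) together with Assumption~\ref{A3} to justify replacing $\widehat C_v(\overline F(t),\overline F(ty))$ by its asymptotic form; Assumption~\ref{A4} provides the scaling $\varphi$ and the index $\theta$, and the condition $\alpha(1-\beta)<1$ is what guarantees the resulting integral converges (dominated convergence / Pratt's lemma). The parameter $\eta$ should appear as the value of that convergent integral, i.e. $\eta$ is (a constant times) $\int_0^1 y^{\alpha}\varphi(1,y^{\alpha})\,d(-y^{-\alpha})$ or the analogous limit dictated by the notation $D(\delta,t)$; the hypotheses $D(\delta,t)\neq0$ and $\eta<\infty$ are there precisely to make this limiting integral well-defined and nonzero.

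Assembling the pieces: the middle region contributes $2\eta(\overline F(t))^\kappa\ell(\overline F(t))(1+o(1))$ (the factor $2$ from the two symmetric halves), the boundary contributions near $y=0$ and $y=1$ vanish to this order, and combining with the inclusion–exclusion terms produces the stated constant
\[
\Delta_1=2\eta+\tau(2^\alpha,2^\alpha)-2\tau(1,2^\alpha).
\]
The terms $\tau(2^\alpha,2^\alpha)$ and $\tau(1,2^\alpha)$ arise from evaluating $\widehat C$ and $\widehat C_v$ at the endpoints of the bulk interval as $\delta\downarrow0$: since $\overline F\in\mathrm{RV}_{-\alpha}$, a point $y$ near $1$ contributes $\overline F(ty)/\overline F(t)\to y^{-\alpha}$, and at the merged endpoint one gets arguments scaling like $2^\alpha$ relative to $\overline F(t/2)$, which is where the $2^\alpha$ enters. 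I would track these boundary evaluations carefully — making the endpoint bookkeeping match the claimed $\Delta_1$ is the part most prone to sign or constant errors, and it is where the monotonicity of $\widehat C_u,\widehat C_v$ in Assumption~\ref{A2} and the uniform bound of Assumption~\ref{A3} are used to control the error terms uniformly in $\delta$.

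The main obstacle I anticipate is the uniform asymptotic analysis of the bulk integral over $[\delta,1-\delta]$: one needs a single dominating function, valid for all large $t$, that is integrable on $(0,1)$ and that legitimizes interchanging the limit with the integral, and this is exactly where the growth condition $\alpha(1-\beta)<1$ on the regular-variation index $\beta$ of $\varphi(1,1/\cdot)$ must be used in tandem with Potter bounds on $\overline F$. Getting that domination right — and checking that the contributions from $y$ near $0$ and near $1$ are controlled by~\eqref{A5} and symmetry rather than by the dominating function — is the technical heart of the argument.
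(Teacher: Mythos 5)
Your overall architecture --- peel off $2\overline F(t)$, reduce to the region where both variables are at most $t$, condition on $Y=ty$, rewrite the conditional probability via $\widehat C_v$, split the $y$-integral near the endpoints, and use \eqref{A5} to kill the piece near $y=0$ --- matches the paper's. The paper makes the bookkeeping cleaner by splitting at $t/2$ rather than at $t$: it writes $\{X+Y>t\}$ as the disjoint union of $\{X+Y>t,\,Y\le t/2\}$, $\{X+Y>t,\,X\le t/2\}$ and $\{X>t/2,\,Y>t/2\}$, so the only integral to analyze runs over $y\in(0,1/2]$ and the constants $\tau(2^{\alpha},2^{\alpha})$ and $\tau(1,2^{\alpha})$ come directly from $\widehat C(\overline F(t/2),\overline F(t/2))$ and $\widehat C(\overline F(t),\overline F(t/2))$. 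Your inclusion--exclusion version can be made to work, but two of your steps have genuine gaps.

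First, you identify $\eta$ through Assumption \ref{A4} and the function $\varphi$. That is not right: Assumption \ref{A4} is not among the hypotheses of this theorem, and the second-order term here lives at the scale $(\overline F(t))^{\kappa}\ell(\overline F(t))$ of Assumption \ref{A2}, not at the scale $(\overline F(t))^{\theta}h(\overline F(t))$ that governs Theorem \ref{Thm2}. By definition $\eta=\lim_{\delta\downarrow0}\alpha\int_{\delta}^{1/2}\bigl(\tau_v((1-y)^{-\alpha},y^{-\alpha})-\tau_v(1,y^{-\alpha})\bigr)y^{-\alpha-1}\,dy$, and the ingredient you are missing in the bulk is a locally uniform convergence statement for the partial derivative, namely $\widehat C_v(\overline F(xt),\overline F(yt))/\bigl((\overline F(t))^{\kappa-1}\ell(\overline F(t))\bigr)\to\tau_v(x^{-\alpha},y^{-\alpha})$ on compact subsets of $(0,\infty)^2$; the paper devotes Lemmas \ref{unifA2}--\ref{lemma: partial derivative of copula regular varying} and Corollary \ref{coro: partial derivative of copula regular varying-1} to deducing this from the pointwise limit \eqref{assumption on tail order}. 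Without it, ``replacing $\widehat C_v$ by its asymptotic form'' in the bulk is unjustified; moreover the finiteness of the limiting integral is taken as the hypothesis $\eta<\infty$, not derived from $\alpha(1-\beta)<1$.

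Second, the claims that the bulk contributes $2\eta$ ``from the two symmetric halves'' and that $y\in(1-\delta,1)$ is ``handled the same way by symmetry'' fail at the level of your integral. For $y$ near $1$ the first argument $\overline F(t(1-y))$ of $\widehat C_v$ is close to $1$, so neither the tail-order asymptotics nor \eqref{A5} applies there; and the limit of the integral over $(1/2,1)$ is not $\eta$ but $\eta+\tau(2^{\alpha},2^{\alpha})-2\tau(1,2^{\alpha})+\tau(1,1)$, the extra piece being exactly the corner $\{t/2<X\le t,\,t/2<Y\le t\}$, whose $+\tau(1,1)$ must cancel the $-\tau(1,1)$ from your $\Pr(X>t,Y>t)$ to produce $\Delta_1$. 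Exhibiting this requires a further event-level split at $t/2$ (equivalently, the paper's decomposition), not a symmetry of the integrand; as written, your accounting would not recover the stated constant.
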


\begin{theorem}
\label{Thm2}Suppose that random variables $X$ and $Y$ with survival copula
$\widehat{C}\left(  u,v\right)  $ satisfy Assumptions \ref{A1} -- \ref{A4}.
Assume that $\frac{\widehat{C}_{v}\left(  t,v\right)  }{t^{\theta}h\left(
t\right)  }\leq g(v)$ for all $t$ close to $0$ and $0\leq v\leq1$, where $h$
is the slowly varying function in (\ref{part1}) and $g:[0,1]\rightarrow
\lbrack0,\infty]$ such that $\int_{0}^{1}g\left(  x\right)  dx<\infty$.
Further assume that
\begin{equation}
\underset{t\rightarrow \infty}{\lim \sup}\frac{\left(  \overline{F}\left(
t\right)  \right)  ^{\kappa-\theta}\ell \left(  \overline{F}\left(  t\right)
\right)  }{\int_{0}^{1/2}y\varphi \left(  1,\overline{F}\left(  ty\right)
\right)  dF(ty)h\left(  \overline{F}\left(  t\right)  \right)  }=0.
\label{cond}%
\end{equation}
If $D(\delta,t)=0$ for some $0<\delta<1/2$, or $\eta=\infty$, or $0\leq
\beta \leq1$ and $\alpha(1-\beta)\geq1$, then we have as $t\rightarrow \infty$,%
\[
\Pr \left(  X+Y>t\right)  =2\overline{F}(t)+\Delta_{2}\left(  \overline
{F}(t)\right)  ^{\kappa}\ell \left(  \overline{F}(t)\right)  (1+o(1))+2\Delta
(t)\left(  \overline{F}(t)\right)  ^{\theta}h\left(  \overline{F}(t)\right)
(1+o(1)),
\]
where $\Delta_{2}=\left(  \tau \left(  2^{\alpha},2^{\alpha}\right)
-2\tau \left(  1,2^{\alpha}\right)  \right)  $ and $\Delta(t)=\int_{0}%
^{1/2}\left(  (1-y)^{-\alpha \theta}-1\right)  \varphi(1,\overline{F}\left(
ty\right)  )dF(ty)$.
\end{theorem}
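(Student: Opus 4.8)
The plan is to split $\{X+Y>t\}$ so that the three terms in the conclusion fall out separately, and then to do the real work on the one genuinely new piece. Since $X,Y\ge0$, the events $\{X>t\}$, $\{Y>t\}$ and $\{X>t/2,Y>t/2\}$ all lie in $\{X+Y>t\}$, and whatever remains of $\{X+Y>t\}$ after deleting their union breaks into the two disjoint pieces $\{X\le t/2<Y\le t,\,X+Y>t\}$ and its mirror image (if $X\le t/2$, then $X+Y>t$ forces $Y>t/2$). Inclusion--exclusion, using $\Pr(X>a,Y>b)=\widehat C(\overline F(a),\overline F(b))$ and the symmetry of $\widehat C$ (the term $\widehat C(\overline F(t),\overline F(t))$ cancels), yields
\[
\Pr(X+Y>t)=2\overline F(t)+\widehat C\bigl(\overline F(t/2),\overline F(t/2)\bigr)-2\widehat C\bigl(\overline F(t/2),\overline F(t)\bigr)+2P(t),
\]
with $P(t):=\Pr(X\le t/2<Y\le t,\,X+Y>t)$. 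By Assumption \ref{A1}, $\overline F(t/2)/\overline F(t)\to 2^{\alpha}$, so Assumption \ref{A2} together with the continuity of $\tau$ gives $\widehat C(\overline F(t/2),\overline F(t/2))\sim\tau(2^{\alpha},2^{\alpha})(\overline F(t))^{\kappa}\ell(\overline F(t))$ and $\widehat C(\overline F(t/2),\overline F(t))\sim\tau(1,2^{\alpha})(\overline F(t))^{\kappa}\ell(\overline F(t))$; this is precisely the $\Delta_2(\overline F(t))^{\kappa}\ell(\overline F(t))$ term. Everything now reduces to proving $P(t)=\Delta(t)(\overline F(t))^{\theta}h(\overline F(t))(1+o(1))$.

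For $P(t)$, condition on $X=x$, use $\Pr(Y>s\mid X=x)=\widehat C_v(\overline F(s),\overline F(x))$ (legitimate since $\widehat C$ is symmetric and differentiable, so this equals $\widehat C_u(\overline F(x),\overline F(s))$), and substitute $x=ty$:
\[
P(t)=\int_0^{1/2}\Bigl[\widehat C_v\bigl(\overline F(t(1-y)),\overline F(ty)\bigr)-\widehat C_v\bigl(\overline F(t),\overline F(ty)\bigr)\Bigr]\,dF(ty).
\]
On $y\in[0,1/2]$ the first argument $\overline F(t(1-y))\le\overline F(t/2)\to0$ while $1-y$ stays in the compact set $[1/2,1]$, so the uniform convergence theorem gives $\overline F(t(1-y))\sim(1-y)^{-\alpha}\overline F(t)$ uniformly in $y$; this is exactly the situation of Assumption \ref{A4}. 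From (\ref{part1}) and the slow variation of $h$ one gets the homogeneity relation $\varphi(u,v)=u^{\theta}\varphi(1,v)$ (and $\theta\ge1$; cf. Lemma \ref{Taylor}), whence
\[
\widehat C_v\bigl(\overline F(t(1-y)),\overline F(ty)\bigr)\sim(1-y)^{-\alpha\theta}\varphi\bigl(1,\overline F(ty)\bigr)(\overline F(t))^{\theta}h(\overline F(t)),\quad \widehat C_v\bigl(\overline F(t),\overline F(ty)\bigr)\sim\varphi\bigl(1,\overline F(ty)\bigr)(\overline F(t))^{\theta}h(\overline F(t)).
\]
Thus the integrand is asymptotically $((1-y)^{-\alpha\theta}-1)\varphi(1,\overline F(ty))(\overline F(t))^{\theta}h(\overline F(t))$; integrating against $dF(ty)$ and pulling out $(\overline F(t))^{\theta}h(\overline F(t))$ gives $P(t)\sim\Delta(t)(\overline F(t))^{\theta}h(\overline F(t))$, and $2P(t)$ is the last term of the theorem.

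The main obstacle, as I see it, is making this limit-under-the-integral rigorous: the convergence in (\ref{part1}) is pointwise in the second argument, whereas here $v=\overline F(ty)$ sweeps all of $(0,1]$ and $\varphi(1,v)\to\infty$ as $v\downarrow0$ at the rate set by $\varphi(1,1/\cdot)\in\mathrm{RV}_{\beta}$. I would invoke the extra hypothesis of Theorem \ref{Thm2}, $\widehat C_v(t,v)/(t^{\theta}h(t))\le g(v)$ with $\int_0^1 g<\infty$, as the uniform-in-$t$ envelope needed to run dominated convergence on $[0,1/2-\varepsilon]$, and then estimate the short strip $y\in[1/2-\varepsilon,1/2]$ — where $\overline F(ty)\to\overline F(t/2)\downarrow0$ and the one-variable ($\varphi$) estimate degenerates into the two-variable ($\tau$) one — separately, absorbing it into the $(\overline F(t))^{\kappa}\ell(\overline F(t))$ order already present. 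Checking that all remainders are genuinely $o(\cdot)$ of the two displayed contributions is what the rate condition (\ref{cond}) and the case hypotheses ($D(\delta,t)=0$, $\eta=\infty$, or $0\le\beta\le1$ with $\alpha(1-\beta)\ge1$) are there to guarantee: roughly, $\alpha(1-\beta)\ge1$ is the regime in which $P(t)$ is controlled by the range where $X$ is of order $1$ — governed by $\varphi$, hence of order $(\overline F(t))^{\theta}h(\overline F(t))$ — rather than by the near-diagonal range $X\approx t/2$, which is governed by $\tau$ and produces the order $(\overline F(t))^{\kappa}\ell(\overline F(t))$ that in Theorem \ref{TH1-P-D1} is absorbed into $\Delta_1=2\eta+\Delta_2$.
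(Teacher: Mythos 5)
Your decomposition is sound and, after inclusion--exclusion, is exactly equivalent to the paper's splitting into $A_1$, $A_2$, $A_3$: your $P(t)$ coincides with $\Pr\left(X+Y>t,\,X\le t,\,Y\le \frac{t}{2}\right)$, the $\Delta_2$ term is handled correctly, and the reduction of the theorem to $P(t)\sim\Delta(t)(\overline F(t))^{\theta}h(\overline F(t))$ (the content of Lemma \ref{term2}) is right. The gap is in how you propose to prove that asymptotic. For a \emph{fixed} $y\in(0,1/2)$ both arguments $\overline F(t(1-y))$ and $\overline F(ty)$ tend to $0$ proportionally to $\overline F(t)$, so the correct pointwise behavior of the integrand is governed by Corollary \ref{coro: partial derivative of copula regular varying-1}: it is $\bigl(\tau_v((1-y)^{-\alpha},y^{-\alpha})-\tau_v(1,y^{-\alpha})\bigr)(\overline F(t))^{\kappa-1}\ell(\overline F(t))$, \emph{not} $((1-y)^{-\alpha\theta}-1)\varphi(1,\overline F(ty))(\overline F(t))^{\theta}h(\overline F(t))$. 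Assumption \ref{A4} is a statement for a fixed second argument $v$, i.e.\ for $X$ of order $1$ unscaled, so your central display fails on every region $y\in[\delta,1/2-\varepsilon]$; the problematic regime is not only the strip near $y=1/2$ but all $y$ bounded away from $0$ in the scaled variable. The correct split is at $y=\delta$ with $\delta\downarrow0$ after $t\rightarrow\infty$: on $[\delta,1/2]$ both the true integral and the target integral are of order $(\overline F(t))^{\kappa}\ell(\overline F(t))$ and are shown to be negligible via condition (\ref{cond}) (the terms $I_2$ and $I_4$ in the paper's proof of Lemma \ref{term2}), while essentially all of the mass of $\Delta(t)$ sits in the shrinking neighbourhood $[0,\delta]$.

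Second, and more seriously, on that neighbourhood dominated convergence cannot deliver the result. In the unscaled variable the normalized integrand $\bigl(\widehat C_v(\overline F(t-y),\overline F(y))-\widehat C_v(\overline F(t),\overline F(y))\bigr)/\bigl((\overline F(t))^{\theta}h(\overline F(t))\bigr)$ converges pointwise to $\varphi(1,\overline F(y))-\varphi(1,\overline F(y))=0$, so the envelope $g$ plus dominated convergence yields only $P(t)=o\bigl((\overline F(t))^{\theta}h(\overline F(t))\bigr)$, which is consistent with but strictly weaker than $P(t)\sim\Delta(t)(\overline F(t))^{\theta}h(\overline F(t))$, because $\Delta(t)\rightarrow0$ in general (by (\ref{prop I}), $\Delta(t)\sim\alpha\theta t^{-1}\int_0^{t/2}y\varphi(1,\overline F(y))dF(y)$). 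Extracting the vanishing constant $\Delta(t)$ requires the quantitative, uniform ratio bound of Assumption \ref{A3}, namely $\widehat C_v(\overline F(t-y),v)/\widehat C_v(\overline F(t),v)-1\lesssim cy/t$ as in (\ref{combine AS}), matched against the Taylor expansion $(1-y/t)^{-\alpha\theta}-1\approx\alpha\theta y/t$. This is precisely the step your sketch omits --- Assumption \ref{A3} is never invoked in your argument --- and without it the proof does not close.
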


If $X$ and $Y$ are tail independent, that is $\kappa>1$ or $\kappa=1$ and
$\ell(t)\rightarrow0$, then both Theorems \ref{TH1-P-D1} and \ref{Thm2}
present the second-order asymptotics for $\Pr(X+Y>t)$. If $X$ and $Y$ are tail
dependent (i.e., $\kappa=1$ and $\ell(t)\not \rightarrow 0$), then the
second-order asymptotic term is reduced to be a part of the first-order
asymptotic term and then is combined with $2\overline{F}(t)$. Note that if
$(X,Y)$ is multivariate regularly varying (MRV), then the first-order
expansion of (\ref{tail}) is well known in the literature; see e.g. Barbe et
al. \cite{BFG2006}. For more details on MRV, the readers are referred to
{Resnick \cite{Resnick-2007}.}

{Most widely used copulas are symmetric, for example the independence copula,
Gauss copula, $t$ copula, extreme value copulas and {Archimedean copulas.
Symmetry} indeed provides mathematical convenience. There is also a
need for asymmetric copulas, for example in hydrological studies
(\cite{GS2006}) and in ocean engineering (\cite{ZKBDS2018}). The test of
symmetry for copulas can be found in, e.g. \cite{GNQ2012}. Although both
Theorems \ref{TH1-P-D1} and \ref{Thm2} are proved under the assumption that
the survival copula is symmetric, it is possible to extend the results to
asymmetric copulas by applying the same techniques with more assumptions on
both dimensions. 

The methodology used in this paper mainly works for two dimensional case. Although it is not straightforward to extend to the case of more than two random variables, it sheds  lights on how we can approach the higher dimensional case by looking into the partial derivatives of the copula and imposing the  regularity conditions.

The second-order asymptotic expansions  derived in Theorems \ref{TH1-P-D1} and \ref{Thm2} provide a fast and reliable way to approximate the tail probabilities whereas it requires a large sample size of naive Monte Carlo simulations to reach the same level of accuracy and the sample size increases dramatically for estimating low tail probabilities (i.e. large $t$). Moreover these tail probability expansions can be applied to derive the asymptotic expansions of quantile based risk measures such as VaR and ES; see Theorems \ref{varind} and \ref{EVCvar} below. Further, these risk measures' expansions are useful for constructing asymptotic normality when estimating risk measures at high confidence levels.      }

\section{Applications\label{example}}

In this section, examples are presented to illustrate our main results for two
dependence structures, independence and extreme value copulas. Further we
derive the second-order asymptotic expansions of VaR for aggregate risks. Let
$Z=X+Y$ denote the aggregate risk with distribution function $G$.  VaR of $Z$ at confidence level $q\in(0,1)$ is defined as
\[
\mathrm{VaR}_{q}(Z)=\inf \{x\in \mathbb{R}:G(x)\geq q\}.
\] 
To derive
the second-order asymptotics of VaR, we also need to assume that the marginal
distribution $F$ is second-order regular variation. To make the result more
transparent, we simply assume that there exists $\alpha>0$, $\rho<0$, $A>0$
and $B\in \mathbb{R}$ such that
\begin{equation}
\overline{F}(t)=Ax^{-\alpha}(1+Bx^{\rho}(1+o(1))). \label{2RV}%
\end{equation}

\subsection{Independence survival copula}

The study of the second-order asymptotic expansions of the tail probabilities
(\ref{tail}) under the independence copula $\widehat{C}(u,v)=uv $ is usually
carried out when the random variables $X$ and $Y$ are regularly varying and
additionally they are assumed to be asymptotically smooth or second-order
regularly varying; see, for example, Barbe and McCormick \cite{BM-2005} and
Mao and Hu \cite{Mao-Hu-2013}. The following proposition recovers these
results by using the method proposed in the current work without any
additional assumption mentioned above.

\begin{proposition}
\label{indp}Assume that the {random variables $X$ and $Y$ satisfy Assumption
\ref{A1} and have an independence} survival copula $\widehat{C}(u,v)=uv$. {As
}$t\rightarrow \infty$, we have

(i) if $\alpha<1$, then%
\begin{equation}
\Pr(X+Y>t)=2\overline{F}(t)+(2I(\alpha,\alpha)+2^{2\alpha}-2^{\alpha
+1})\left(  \overline{F}(t)\right)  ^{2}(1+o(1)), \label{ind2}%
\end{equation}
where $I(\alpha,\alpha)$ is given by%
\begin{equation}
I(\alpha,\beta)=\beta \int_{0}^{1/2}\left(  (1-y)^{-\alpha}-1\right)
y^{-\beta-1}dy; \label{Ia}%
\end{equation}

(ii) if $\alpha \geq1$, then%
\begin{equation}
\Pr(X+Y>t)=2\overline{F}(t)+2\alpha t^{-1}\mu_{F}(t)\overline{F}(t)(1+o(1)),
\label{ind1}%
\end{equation}
where $\mu_{F}(t)=\int_{0}^{t}xdF(x)$.
\end{proposition}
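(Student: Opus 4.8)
The plan is to read off the copula-specific data for $\widehat{C}(u,v)=uv$ and then apply Theorem~\ref{TH1-P-D1} when $\alpha<1$ and Theorem~\ref{Thm2} when $\alpha\ge1$. Since $\widehat{C}(ut,vt)=uv\,t^{2}$, the limit in (\ref{assumption on tail order}) gives $\kappa=2$, $\ell\equiv1$, $\tau(u,v)=uv$; and since $\widehat{C}_{v}(u,v)=u$, the limit in (\ref{part1}) gives $\theta=1$, $h\equiv1$, $\varphi(u,v)=u$, so that $\varphi(1,1/\cdot)\equiv1\in\mathrm{RV}_{0}$ and $\beta=0$. Assumptions~\ref{A2}--\ref{A4} are immediate for the independence copula: $uv$ is symmetric and twice differentiable, its partials $\widehat{C}_{u}=v$, $\widehat{C}_{v}=u$ are weakly nonincreasing in the relevant variable, $\widehat{C}_{v}(t(1+u),v)/\widehat{C}_{v}(t,v)-1=u$ gives Assumption~\ref{A3} with $c=1$, and (\ref{part1}) is the computation just made. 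So everything reduces to verifying the extra hypotheses of the two theorems and computing the constants.

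For $\alpha<1$, I would invoke Theorem~\ref{TH1-P-D1}: here $\alpha(1-\beta)=\alpha<1$, and the non-degeneracy requirement on $D(\delta,t)$ causes no trouble for $\widehat{C}(u,v)=uv$. The substantive step is (\ref{A5}): since $\widehat{C}_{v}(\overline{F}(t),\overline{F}(ty))=\overline{F}(t)$, its left-hand side reduces after the substitution $x=ty$ to $\mu_{F}(\delta t)/(t\overline{F}(t))$ with $\mu_{F}(s)=\int_{0}^{s}x\,dF(x)$, and Karamata's theorem for $\alpha<1$ (so $\mu_{F}(s)\sim\frac{\alpha}{1-\alpha}s\overline{F}(s)$) makes this asymptotically $\frac{\alpha}{1-\alpha}\delta^{1-\alpha}\to0$ as $\delta\downarrow0$. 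The same change of variables together with the regular variation of $\overline{F}$ identifies the constant $\eta$ of Theorem~\ref{TH1-P-D1} with $I(\alpha,\alpha)=\alpha\int_{0}^{1/2}((1-y)^{-\alpha}-1)y^{-\alpha-1}dy$ (see (\ref{Ia})), which is finite exactly because $\alpha<1$. Substituting $\tau(u,v)=uv$ into $\Delta_{1}=2\eta+\tau(2^{\alpha},2^{\alpha})-2\tau(1,2^{\alpha})$ gives $\Delta_{1}=2I(\alpha,\alpha)+2^{2\alpha}-2^{\alpha+1}$, which is (\ref{ind2}).

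For $\alpha\ge1$, the same computation shows $\eta=I(\alpha,\alpha)=+\infty$ (equivalently $0\le\beta=0\le1$ and $\alpha(1-\beta)=\alpha\ge1$), so Theorem~\ref{Thm2} applies. Its hypotheses are easy: $\widehat{C}_{v}(t,v)/(t^{\theta}h(t))=1\le g(v):=1$ with $\int_{0}^{1}g=1<\infty$, and the ratio in (\ref{cond}) equals $t\overline{F}(t)/\mu_{F}(t/2)$, which tends to $0$ since $\mu_{F}(t)/(t\overline{F}(t))\to\infty$ when $\alpha\ge1$ (for $\alpha>1$ use $\mu_{F}(t)\to\mathbb{E}[X]\in(0,\infty)$ and $t\overline{F}(t)\to0$; for $\alpha=1$ use the slow variation of $\int_{0}^{t}\overline{F}$). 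In the conclusion of Theorem~\ref{Thm2}, $\Delta_{2}=\tau(2^{\alpha},2^{\alpha})-2\tau(1,2^{\alpha})=2^{2\alpha}-2^{\alpha+1}$ is a constant, while $\Delta(t)=\int_{0}^{1/2}((1-y)^{-\alpha}-1)\varphi(1,\overline{F}(ty))dF(ty)=\int_{0}^{1/2}((1-y)^{-\alpha}-1)dF(ty)$. Integrating by parts in $x=ty$ rewrites $\Delta(t)=-(2^{\alpha}-1)\overline{F}(t/2)+\frac{\alpha}{t}\int_{0}^{t/2}\overline{F}(x)(1-x/t)^{-\alpha-1}dx$; the boundary term is $O(\overline{F}(t))=o(t^{-1}\mu_{F}(t))$, and splitting the remaining integral at $\epsilon t$ and using the slow variation (resp.\ convergence) of $x\mapsto\int_{0}^{x}\overline{F}$ when $\alpha=1$ (resp.\ $\alpha>1$) gives $\int_{0}^{t/2}\overline{F}(x)(1-x/t)^{-\alpha-1}dx\sim\int_{0}^{t/2}\overline{F}(x)dx\sim\mu_{F}(t)$. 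Hence $\Delta(t)\sim\alpha t^{-1}\mu_{F}(t)$; since moreover $\Delta_{2}(\overline{F}(t))^{2}=o(\Delta(t)\overline{F}(t))$, the two-term expansion of Theorem~\ref{Thm2} collapses to $2\overline{F}(t)+2\alpha t^{-1}\mu_{F}(t)\overline{F}(t)(1+o(1))$, which is (\ref{ind1}).

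The parameter matching and the $\widehat{C}_{v}$ identities are routine; the real work — and the step I expect to be the main obstacle — is the Karamata bookkeeping behind (\ref{A5}) and (\ref{cond}) and, above all, pinning down the exact constant in $\Delta(t)\sim\alpha t^{-1}\mu_{F}(t)$ when $\alpha\ge1$, where the borderline case $\alpha=1$ (finite versus infinite mean) must be handled via the slow variation of $\int_{0}^{t}\overline{F}$. As a consistency check one can bypass the general theorems entirely, starting from $\Pr(X+Y>t)=2\overline{F}(t)-(\overline{F}(t/2))^{2}+2\int_{t/2}^{t}\overline{F}(t-x)dF(x)$ and estimating the last integral directly; the same estimates reappear.
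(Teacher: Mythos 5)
Your proposal is correct and follows essentially the same route as the paper: identify $\kappa=2$, $\tau(u,v)=uv$, $\theta=1$, $\varphi(u,v)=u$, $\beta=0$, verify Assumptions \ref{A2}--\ref{A4} and conditions (\ref{A5}), (\ref{cond}), and apply Theorem \ref{TH1-P-D1} for $\alpha<1$ and Theorem \ref{Thm2} for $\alpha\geq1$. The only difference is that where the paper cites Lemmas 5.2 and 5.6 of Barbe and McCormick and Lemma 2.4 of Mao and Hu for the Karamata-type facts ($\mu_F(\delta t)/(t\overline{F}(t))\to\frac{\alpha}{1-\alpha}\delta^{1-\alpha}$, $\overline{F}(t)=o(t^{-1}\mu_F(t))$, and $\Delta(t)\sim\alpha t^{-1}\mu_F(t)$), you carry out these computations in-line, which is a harmless and self-contained substitute.
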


\begin{proof}
It is easy to see that $\widehat{C}$ satisfies Assumption \ref{A2} with
$\kappa=2$, $\tau(u,v)=uv$ and $\ell(t)=1$. Since $\widehat{C}_{v}(u,v)=u$, it
is straightforward to verify that $\widehat{C}$ satisfies Assumption \ref{A3},
and satisfies Assumption \ref{A4} with $\theta=1$, $\varphi(u,v)=u$, $h(t)=1$,
and $\beta=0$. It remains to check relations (\ref{A5}) and (\ref{cond}) and
two key terms:
\[
D(\delta,t)=\int_{\delta}^{1/2}\left(  (1-y)^{-\alpha}-1\right)  dF(ty),
\]
where $0<\delta<1/2$, and%
\[
\eta=\alpha \int_{0}^{1/2}\left(  (1-y)^{-\alpha}-1\right)  y^{-\alpha-1}dy.
\]

(i) Since $\beta=0$, we first consider the case that $0<\alpha<1$. Apparently,
$D(\delta,t)\neq0$ and $\eta<\infty$. From the proof of Lemma 5.6 in Barbe and
McCormick \cite{BM-2005}, we obtain
\[
\lim_{\delta \rightarrow0}\lim_{t\rightarrow \infty}\frac{\int_{0}^{\delta
}ydF(ty)}{\overline{F}(t)}=0,
\]
which verifies (\ref{A5}). Accordingly, relation (\ref{ind2}) follows from
Theorem \ref{TH1-P-D1}.

(ii) Next, we consider $\alpha \geq1$. In this case $\frac{\widehat{C}%
_{v}\left(  t,v\right)  }{t}=1$, which yields $\eta=\infty$. From the proof of
Lemma 5.2 in Barbe and McCormick \cite{BM-2005}, we have
\[
\overline{F}\left(  t\right)  =o\left(  t^{-1}\int_{0}^{t/2}ydF(y)\right)  ,
\]
which verifies (\ref{cond}). Note that by Lemma 2.4 of Mao and Hu
\cite{Mao-Hu-2013}, we have
\[
\lim_{t\rightarrow \infty}\frac{\int_{0}^{1/2}\left(  (1-y)^{-\alpha}-1\right)
dF(ty)}{t^{-1}\mu_{F}(t)}=\alpha.
\]
Then by Theorem \ref{Thm2}, relation (\ref{ind1}) follows.\bigskip
\end{proof}

As the application, we next show that the asymptotic expansions of the risk
measure VaR of the aggregate risk $Z=X+Y$ under the independence copula.

\begin{theorem}
\label{varind}Assume that the {random variables $X$ and $Y$ satisfy Assumption
\ref{A1} and have an independence} survival copula. Further assume that
$\overline{F}(t)$ satisfies (\ref{2RV}). Then we have as $q\rightarrow1$,

(i) if $\alpha<1$, and $\rho<-\alpha$, then%
\[
\mathrm{VaR}_{q}(Z)=2^{1/\alpha}\mathrm{VaR}_{q}(X)\left(  1+\frac
{I(\alpha,\alpha)+2^{2\alpha-1}-2^{\alpha}}{2\alpha}\left(  1-q\right)
(1+o(1))\right)  ;
\]

(ii) if $\alpha \geq1$ and $\rho<-1$, then
\[
\mathrm{VaR}_{q}(Z)=\mathrm{VaR}_{q}(X)\left(  2^{1/\alpha}+\mu_{F}\left(
F^{\leftarrow}(q)\right)  \left(  F^{\leftarrow}(q)\right)  ^{-1}%
(1+o(1))\right)  ;
\]

(iii) if $\alpha<1$ and $-\alpha<\rho<0$, or if $\alpha \geq1$ and $-1<\rho<0$,
then%
\[
\mathrm{VaR}_{q}(Z)=2^{1/\alpha}\mathrm{VaR}_{q}(X)\left(  1+B\alpha
^{-1}\left(  2^{-\rho/\alpha}-1\right)  \left(  \mathrm{VaR}_{q}(X)\right)
^{\rho}(1+o(1))\right)  .
\]

\end{theorem}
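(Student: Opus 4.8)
\textbf{Proof proposal for Theorem \ref{varind}.}

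The plan is to invert the tail asymptotics for $\Pr(Z>t)$ from Proposition \ref{indp} to obtain an asymptotic expansion of $G^{\leftarrow}(q)=\mathrm{VaR}_q(Z)$ as $q\to1$, using the second-order regular variation of $\overline F$ in \eqref{2RV}. First I would record the leading-order behavior: from Proposition \ref{indp}, $\overline G(t)=\Pr(Z>t)\sim 2\overline F(t)$, so $\overline G\in\mathrm{RV}_{-\alpha}$ and a standard inversion of regular variation gives $\mathrm{VaR}_q(Z)\sim 2^{1/\alpha}\,\mathrm{VaR}_q(X)$ as $q\to1$. This fixes the scaling constant $2^{1/\alpha}$ appearing in all three cases. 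The work is then to compute the second-order correction term, for which I would write $\mathrm{VaR}_q(Z)=2^{1/\alpha}\,\mathrm{VaR}_q(X)\,(1+\varepsilon(q))$ with $\varepsilon(q)\to0$, substitute into $\overline G\big(\mathrm{VaR}_q(Z)\big)=1-q$, and solve for $\varepsilon(q)$ to leading order by a Taylor expansion.

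The key computation is to expand both sides. Using \eqref{2RV}, $\overline F\big(2^{1/\alpha}\mathrm{VaR}_q(X)(1+\varepsilon)\big)=A\,2^{-1}\big(\mathrm{VaR}_q(X)\big)^{-\alpha}(1+\varepsilon)^{-\alpha}\big(1+B\,2^{\rho/\alpha}\big(\mathrm{VaR}_q(X)\big)^{\rho}(1+o(1))\big)$, and since $\overline F(\mathrm{VaR}_q(X))=1-q$ we have $A\big(\mathrm{VaR}_q(X)\big)^{-\alpha}=(1-q)(1+o(1))$ and $\big(\mathrm{VaR}_q(X)\big)^{\rho}\sim\big(A^{-1}(1-q)\big)^{-\rho/\alpha}$, i.e.\ $\big(\mathrm{VaR}_q(X)\big)^{\rho}$ and $(1-q)$ are comparable up to constants. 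Then I would plug this into the relation $\overline G\big(\mathrm{VaR}_q(Z)\big)=1-q$ using the appropriate form from Proposition \ref{indp}: in case (i), $\overline G(t)=2\overline F(t)+(2I(\alpha,\alpha)+2^{2\alpha}-2^{\alpha+1})\overline F(t)^2(1+o(1))$, so the second-order term contributes at order $(1-q)^2$; in case (ii), the correction $2\alpha t^{-1}\mu_F(t)\overline F(t)$ is of smaller order than $\overline F(t)$ but still dominates the $B t^{\rho}$ term when $\rho<-1$ (since $t^{-1}\mu_F(t)\in\mathrm{RV}_{-1}$ up to slowly varying factors when $\alpha\ge1$, actually $\mathrm{RV}_{\,(1-\alpha)\vee(-1)}$... more precisely it behaves like $t^{-1}$ for $\alpha>1$). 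Matching the coefficient of the dominant correction term on each side and solving the resulting linear equation for $\varepsilon(q)$ yields the stated formulas: the $I(\alpha,\alpha)+2^{2\alpha-1}-2^\alpha$ coefficient in (i), the $\mu_F\big(F^{\leftarrow}(q)\big)\big(F^{\leftarrow}(q)\big)^{-1}$ term in (ii), and the $B\alpha^{-1}(2^{-\rho/\alpha}-1)$ term in (iii) where the second-order marginal term from \eqref{2RV} dominates the dependence correction.

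The main obstacle I anticipate is the bookkeeping of which second-order term dominates in each regime and justifying the inversion rigorously rather than heuristically. The three cases correspond exactly to a competition between two sources of second-order correction: the correction from the sum asymptotics (the $\overline F(t)^2$ term in case (i), the $t^{-1}\mu_F(t)\overline F(t)$ term in case (ii)) versus the correction from the marginal second-order RV term $Bt^\rho$ in \eqref{2RV}. The conditions $\rho<-\alpha$ (case i), $\rho<-1$ (case ii), and $-\alpha<\rho<0$ or $-1<\rho<0$ (case iii) are precisely the thresholds at which one or the other wins, with case (iii) being the regime where the marginal term dominates. To make the inversion rigorous I would use a uniform version of the second-order expansion (locally uniform convergence of the ratio in \eqref{2RV}, which follows from second-order RV theory, e.g.\ a Potter-type bound with remainder), apply it along the sequence $t=\mathrm{VaR}_q(Z)\to\infty$, and invoke monotonicity of $G$ together with continuity to pass from the tail expansion to the quantile expansion; the asymptotic inverse of a regularly varying function with a second-order term is standard (see e.g.\ de Haan and Ferreira or the $\Pi$-variation machinery), so the remaining effort is purely the algebra of matching coefficients, which I would not grind through here.
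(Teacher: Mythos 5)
Your proposal is correct and takes essentially the same route as the paper: the paper expands $\overline{G}(tx)/\overline{F}(t)$ to second order using Proposition \ref{indp} and \eqref{2RV}, identifies in each regime which of the two corrections (the convolution term versus the marginal $Bt^{\rho}$ term) dominates, and then inverts via Vervaat's Lemma with $\overline{F}(t)=1-q$ — which is precisely the rigorous form of your ``write $\mathrm{VaR}_q(Z)=2^{1/\alpha}\mathrm{VaR}_q(X)(1+\varepsilon(q))$ and solve'' step. Your identification of the thresholds $\rho<-\alpha$, $\rho<-1$ and the dominance competition matches the paper exactly.
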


\begin{proof}
(i) From (\ref{ind2}), for any $x>0$,%
\begin{align*}
\frac{\overline{G}(tx)}{\overline{F}(t)}  &  =x^{-\alpha}\left(  1+B(x^{\rho
}-1)t^{\rho}(1+o(1))\right)  \left(  2+(2I(\alpha,\alpha)+2^{2\alpha
}-2^{\alpha+1})\overline{F}(tx)(1+o(1))\right) \\
&  =x^{-\alpha}\left(  2+\left(  2B(x^{\rho}-1)t^{\rho}+(2I(\alpha
,\alpha)+2^{2\alpha}-2^{\alpha+1})x^{-\alpha}\overline{F}(t)\right)
(1+o(1))\right)
\end{align*}
If $t^{\rho}=o\left(  \overline{F}(t)\right)  $, i.e. $\rho<-\alpha$, then
\[
\lim_{t\rightarrow \infty}\frac{\frac{\overline{G}(tx)}{\overline{F}%
(t)}-2x^{-\alpha}}{\overline{F}(t)}=(2I(\alpha,\alpha)+2^{2\alpha}%
-2^{\alpha+1})x^{-2\alpha}.
\]
By Vervaat's Lemma,%
\[
\lim_{t\rightarrow \infty}\frac{\frac{G^{\leftarrow}(1-x\overline{F}(t))}%
{t}-\left(  \frac{x}{2}\right)  ^{-1/\alpha}}{\overline{F}(t)}=\frac
{I(\alpha,\alpha)+2^{2\alpha-1}-2^{\alpha}}{\alpha}\left(  \frac{x}{2}\right)
^{1-1/\alpha}.
\]
By letting $\overline{F}(t)=1-q$ and thus $t=F^{\leftarrow}(q)$, we have%
\[
\lim_{q\rightarrow1}\frac{\frac{G^{\leftarrow}(1-x(1-q))}{F^{\leftarrow}%
(q)}-\left(  \frac{x}{2}\right)  ^{-1/\alpha}}{1-q}=\frac{I(\alpha
,\alpha)+2^{2\alpha-1}-2^{\alpha}}{\alpha}\left(  \frac{x}{2}\right)
^{1-1/\alpha}.
\]
We obtain the second-order asymptotic expansions of $\mathrm{VaR}_{q}(Z)$ by
letting $x=1$.

(ii) From (\ref{ind1}), for any $x>0$,%
\begin{align*}
\frac{\overline{G}(tx)}{\overline{F}(t)}  &  =x^{-\alpha}\left(  1+B(x^{\rho
}-1)t^{\rho}(1+o(1))\right)  \left(  2+2\alpha \left(  tx\right)  ^{-1}\mu
_{F}(tx)(1+o(1))\right) \\
&  =x^{-\alpha}\left(  2+\left(  2B(x^{\rho}-1)t^{\rho}+2\alpha x^{-1}\mu
_{F}(t)t^{-1}\right)  (1+o(1))\right)  .
\end{align*}
If $t^{\rho}=o\left(  t^{-1}\right)  $, i.e. $\rho<-1$, then the desired
result follows similarly as in (i).

(iii) If $\alpha<1$ and $\overline{F}(t)=o\left(  t^{\rho}\right)  $, i.e.
$-\alpha<\rho<0$, or if $\alpha \geq1$ and $t^{-1}=o\left(  t^{\rho}\right)  $,
i.e. $-1<\rho<0$, then%
\[
\frac{\overline{G}(tx)}{\overline{F}(t)}=x^{-\alpha}\left(  2+2B(x^{\rho
}-1)t^{\rho}(1+o(1))\right)  .
\]
The rest follows similarly as in (i).
\end{proof}

\subsection{Extreme value survival copula}

{A copula }$C_{EV}${\ is an extreme value copula, if there exists a function
}$A:[0,\infty)^{2}\rightarrow \lbrack0,\infty)$ such that
\[
C_{EV}{(u,v)=\exp}\left(  -A(-\log u,-\log v\right)  ).
\]
The function $A$ is convex, homogeneous of order $1$ and satisfies
\[
\max(u,v)\leq A(u,v)\leq u+v.
\]
Refer to Joe \cite{Joe-1997} for details of extreme value copulas.

Assume that the survival copula $\widehat{C}$ is an extreme value copula. The
joint survival probabilities are given by%
\begin{align*}
\Pr \left(  X>x,Y>y\right)   &  =\widehat{C}_{EV}{(}\overline{{F}%
}{(x),\overline{{F}}{(y)})}\\
&  ={\exp}\left(  -A(-\log \overline{{F}}{(x)},-\log{\overline{{F}}{(y)}%
}\right)  ){.}%
\end{align*}
For the convenience of presentation, we denote $A_{1}(u,v)=\frac{\partial
}{\partial u}A(u,v)$ and $A_{2}(u,v)=\frac{\partial}{\partial v}A(u,v)$. If
$A$ is symmetric, then $A_{1}(u,v)=A_{2}(v,u)$.

The following lemma is to ensure that Assumptions \ref{A2} -- \ref{A4} are
satisfied by extreme value survival copulas.

\begin{lemma}
\label{EVClm}Suppose that the survival copula $\widehat{C}$ is an extreme
value copula with a symmetric and differentiable function $A$. Further assume
there exists $\delta>0$, $0<t_{0}<1$ and $c>0$, such that for all $x\geq0$ and
$0<t\leq t_{0}$, and $0\leq u<\delta$,
\begin{equation}
\frac{A_{2}\left(  1,\frac{x}{1+\frac{\log(1+u)}{\log t}}\right)  }%
{A_{2}(1,x)}\lesssim(1+u)^{c}. \label{evcond}%
\end{equation}
Then Assumptions \ref{A2} to \ref{A4} are satisfied.
\end{lemma}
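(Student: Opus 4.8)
The strategy is to translate each of Assumptions \ref{A2}--\ref{A4} into a statement about the function $A$ (and its partial derivative $A_2$) and verify it directly using homogeneity of $A$. First I would record the basic consequences of the extreme value structure: since $\widehat{C}(u,v)=\exp(-A(-\log u,-\log v))$ and $A$ is homogeneous of order $1$, for $0<t<1$ and $u,v>0$ we have $-\log(ut)=(-\log t)(1+\frac{\log u}{-\log t})$, so that $A(-\log(ut),-\log(vt))=(-\log t)A\!\left(1+\frac{\log u}{-\log t},1+\frac{\log v}{-\log t}\right)$. Hence
\[
\widehat{C}(ut,vt)=\exp\!\left(-(-\log t)\,A\!\left(1+\tfrac{\log u}{-\log t},1+\tfrac{\log v}{-\log t}\right)\right)
=t^{\,A(1+\frac{\log u}{-\log t},\,1+\frac{\log v}{-\log t})}.
\]
As $t\downarrow 0$ the arguments of $A$ tend to $(1,1)$ and by continuity $A(1+\frac{\log u}{-\log t},1+\frac{\log v}{-\log t})\to A(1,1)$, which is a fixed constant $\kappa\in[1,2]$ by the bounds $\max\le A\le$ sum (with $A(1,1)=2$ for the independence case and $A(1,1)=1$ for complete dependence). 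A short Taylor expansion of $A$ around $(1,1)$, using differentiability of $A$, gives $A(1+\frac{\log u}{-\log t},1+\frac{\log v}{-\log t})=\kappa+\frac{A_1(1,1)\log u+A_2(1,1)\log v}{-\log t}+o(1/\log t)$, and multiplying by $-\log t$ in the exponent yields $\widehat{C}(ut,vt)=t^{\kappa}\,u^{-A_1(1,1)}v^{-A_2(1,1)}(1+o(1))$. This identifies $\ell(t)\equiv 1$ and $\tau(u,v)=u^{-A_1(1,1)}v^{-A_2(1,1)}$ (nondegenerate, continuously differentiable on $[0,\infty)^2$ after extending by $0$ on the axes), so the limit \eqref{assumption on tail order} holds. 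Symmetry of $\widehat{C}$ is immediate from symmetry of $A$; twice differentiability of $\widehat{C}$ on $[0,1]^2$ follows from differentiability of $A$ and the chain rule (away from the axes, with the behaviour near the axes controlled by the exponential). The stochastic-increasing property ($\widehat{C}_u$ nonincreasing in $u$) is a known feature of extreme value copulas with convex $A$; I would cite Hua--Joe \cite{HJ2014} for this rather than reprove it.

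Next I would compute $\widehat{C}_v$ explicitly. Differentiating, $\widehat{C}_v(u,v)=\widehat{C}(u,v)\cdot\frac{A_2(-\log u,-\log v)}{v}$, and using homogeneity of order $1$ of $A$ (so $A_2$ is homogeneous of order $0$) we get $A_2(-\log u,-\log v)=A_2\!\left(1,\frac{-\log v}{-\log u}\right)$ when $-\log u>0$. Plugging in $u=ut$ and keeping $v\in(0,1]$ fixed, and using the computation above for $\widehat{C}(ut,v)$, gives a formula of the shape $\widehat{C}_v(ut,v)=t^{\theta}\,(\text{slowly varying in }t)\cdot\varphi(u,v)(1+o(1))$ for a suitable $\theta\ge 1$ and $\varphi(u,v)$ built from $A_2(1,\cdot)$ and a power of $u$. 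This verifies Assumption \ref{A4}; the requirement $\varphi(1,\frac1\cdot)\in\mathrm{RV}_\beta$ for some $\beta\ge 0$ then reduces to a regular-variation statement about $A_2(1,\cdot)$, which I would either impose as part of the hypothesis or derive from convexity/homogeneity of $A$. For Assumption \ref{A3}, the ratio $\widehat{C}_v(t(1+u),v)/\widehat{C}_v(t,v)-1$ must be shown to be $\lesssim cu$ for small $u$; after substituting the formula for $\widehat{C}_v$, the dominant factor is $A_2\!\left(1,\frac{-\log v}{-\log(t(1+u))}\right)\big/A_2(1,-\log v/(-\log t))$ together with an elementary power-of-$(1+u)$ term and a $\widehat{C}(t(1+u),v)/\widehat{C}(t,v)$ term, and the hypothesis \eqref{evcond} is precisely the bound that controls the $A_2$-ratio (note $-\log(t(1+u))=(-\log t)(1+\frac{\log(1+u)}{\log t})$, which matches the argument appearing in \eqref{evcond} after writing $x=-\log v/(-\log t)$); the remaining factors are handled by the bound $\log(1+u)\le u$ and a first-order expansion.

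The main obstacle will be the uniformity in the second argument $v\in[0,1]$ (equivalently $x\in[0,\infty)$) throughout: the asymptotic formulas for $\widehat{C}$ and $\widehat{C}_v$ are derived pointwise in $u,v$, but Assumptions \ref{A3} and \ref{A4} implicitly need control that does not degenerate as $v\downarrow 0$, i.e. as the second argument of $A_2$ runs off to infinity. This is exactly why hypothesis \eqref{evcond} is stated uniformly over all $x\ge 0$ — it is the crucial input that tames the $A_2$-ratio uniformly. I would be careful to track where $v\to 0$ (so $-\log v\to\infty$ and $x=-\log v/(-\log t)$ can be large) versus where $v$ is bounded away from $0$, handling the two regimes separately if needed, and to confirm that the $o(1)$ terms coming from the Taylor expansion of $A$ near $(1,1)$ are uniform in $v$ on the relevant range — using that $A_1,A_2$ are bounded (since $A_2$ is homogeneous of order $0$ and $A$ is convex, its partials are bounded on the relevant cone). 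Once uniformity is secured, the verification of each assumption is a routine substitution, so I expect the bulk of the work and essentially all the subtlety to lie in these uniformity bookkeeping steps and in pinning down the precise value of $\theta$ and the form of $\varphi$.
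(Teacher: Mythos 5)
Your plan follows essentially the same route as the paper: exploit homogeneity of $A$ to write $\widehat{C}(ut,vt)=t^{A(1+\log u/\log t,\,1+\log v/\log t)}$ and Taylor-expand $A$ at $(1,1)$ for Assumption \ref{A2}; factor the ratio $\widehat{C}_v(t(1+u),v)/\widehat{C}_v(t,v)$ into an exponential piece controlled by a first-order bound on $A_1$ and an $A_2$-ratio controlled by \eqref{evcond} for Assumption \ref{A3}; and compute the limit of $\widehat{C}_v(ut,v)/t$ for Assumption \ref{A4}. Two points need fixing, however. First, a sign slip: $-\log(ut)=(-\log t)\bigl(1+\tfrac{\log u}{\log t}\bigr)$, not $(-\log t)\bigl(1+\tfrac{\log u}{-\log t}\bigr)$, so the correct limit is $\tau(u,v)=u^{A_1(1,1)}v^{A_2(1,1)}$ with \emph{positive} exponents (as in the paper). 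Your $\tau(u,v)=u^{-A_1(1,1)}v^{-A_2(1,1)}$ cannot be right: it blows up on the axes, whereas Assumption \ref{A2} forces $\tau(0,v)=\tau(u,0)=0$ and continuous differentiability on $[0,\infty)^2$.

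Second, you leave the requirement $\varphi(1,\tfrac{1}{\cdot})\in\mathrm{RV}_\beta$ unresolved, offering to ``impose it as part of the hypothesis or derive it.'' Imposing it is not an option --- the lemma asserts Assumption \ref{A4} under the stated hypotheses only --- but carrying your own computation one step further closes the gap: since $A_2$ is homogeneous of order $0$, $A_2(-\log(ut),-\log v)=A_2\bigl(1,\tfrac{\log v}{\log(ut)}\bigr)\to A_2(1,0)$ as $t\downarrow 0$ (the second argument tends to $0$, not to a nontrivial function of $v$), and expanding $A(1,\epsilon)=1+A_2(1,0)\epsilon+o(\epsilon)$ in the exponent gives $\varphi(u,v)=A_2(1,0)\,u\,v^{A_2(1,0)-1}$ exactly, with $\theta=1$ and $h\equiv 1$. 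Hence $\varphi(1,1/x)=A_2(1,0)x^{1-A_2(1,0)}$ is a pure power and is regularly varying with $\beta=1-A_2(1,0)\ge 0$; no extra regular-variation hypothesis on $A_2(1,\cdot)$ is needed (and the degenerate case $A_2(1,0)=0$, where $\varphi\equiv 0$, should be noted separately). With these two repairs your outline matches the paper's argument.
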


\begin{proof}
We first verify Assumption \ref{A2}. Note that%
\begin{align*}
\frac{{\exp}\left(  -A(-\log ut,-\log vt)\right)  }{\exp(A(1,1)\log t)}  &
=\frac{\exp \left(  \log t\cdot A\left(  \frac{\log u}{\log t}+1,\frac{\log
v}{\log t}+1\right)  \right)  }{\exp(A(1,1)\log t)}\\
&  =\exp \left(  \log t\left(  A\left(  \frac{\log u}{\log t}+1,\frac{\log
v}{\log t}+1\right)  -A(1,1)\right)  \right)  .
\end{align*}
Thus, for any $u,v\geq0$, we have%
\[
\lim_{t\downarrow0}\frac{\widehat{C}{(ut,vt)}}{t^{A(1,1)}}=u^{A_{1}%
(1,1)}v^{A_{1}(1,1)}.
\]

Next, we verify Assumption \ref{A3}. Note that there exits $t_{0}>0$ such that
for all $0<t<t_{0}$ and $0\leq u\leq \delta$ and $0\leq v\leq1$,
\begin{align*}
\frac{\widehat{C}_{v}{(t(1+u),v)}}{\widehat{C}_{v}{(t,v)}}  &  =\exp \left(
\log t\left(  A\left(  \frac{\log{(1+u)}}{\log t}+1,\frac{\log v}{\log
t}\right)  -A\left(  1,\frac{\log v}{\log t}\right)  \right)  \right)
\frac{A_{2}\left(  \frac{\log{(1+u)}}{\log t}+1,\frac{\log v}{\log t}\right)
}{A_{2}\left(  1,\frac{\log v}{\log t}\right)  }\\
&  =\exp \left(  A_{1}\left(  \xi,\frac{\log v}{\log t}\right)  \log
{(1+u)}\right)  \frac{A_{2}\left(  \frac{\log{(1+u)}}{\log t}+1,\frac{\log
v}{\log t}\right)  }{A_{2}\left(  1,\frac{\log v}{\log t}\right)  }\\
&  ={(1+u)}^{{A_{1}\left(  \xi,\frac{\log v}{\log t}\right)  }}\frac
{A_{2}\left(  \frac{\log{(1+u)}}{\log t}+1,\frac{\log v}{\log t}\right)
}{A_{2}\left(  1,\frac{\log v}{\log t}\right)  },
\end{align*}
where $\xi \in \left(  1+\frac{\log{(1+u)}}{\log t},1\right)  $ with its
existence ensured by the mean value theorem. By $A_{1}\left(  y,1\right)  $ is
increasing in $y$ and $0<A_{1}\left(  y,1\right)  \leq1$ for all $y\geq0$, we
have
\[
A_{1}\left(  \xi,\frac{\log v}{\log t}\right)  \leq A_{1}\left(  1,\frac{\log
v}{\log t}\right)  =A_{1}\left(  \frac{\log t}{\log v},1\right)  \leq1.
\]
Thus, by Taylor's expansion, for all $0<t<t_{0}$ and $0\leq u\leq \delta$ and
$0\leq v\leq1$,
\begin{equation}
{(1+u)}^{{A_{1}\left(  \xi,\frac{\log v}{\log t}\right)  }}\sim \left(
{1+A_{1}\left(  \xi,\frac{\log v}{\log t}\right)  u}\right)  \leq1+u.
\label{ev1}%
\end{equation}
By $A_{2}(\cdot,\cdot)$ is homogenous of order $0$ and (\ref{evcond}) it
follows that
\begin{equation}
\frac{A_{2}\left(  \frac{\log{(1+u)}}{\log t}+1,\frac{\log v}{\log t}\right)
}{A_{2}\left(  1,\frac{\log v}{\log t}\right)  }=\frac{A_{2}\left(
1,\frac{\frac{\log v}{\log t}}{\frac{\log{(1+u)}}{\log t}+1}\right)  }%
{A_{2}\left(  1,\frac{\log v}{\log t}\right)  }\leq(1+u)^{c}. \label{ev2}%
\end{equation}
Thus, combining (\ref{ev1}) and (\ref{ev2}) we obtain
\[
\frac{\widehat{C}_{v}{(t(1+u),v)}}{\widehat{C}_{v}{(t,v)}}-1\lesssim
(1+u)(1+u)^{c}-1\lesssim(c+1)u,
\]
which shows Assumption \ref{A3} holds.

Lastly, by Euler's homogeneous function theorem, we can easily obtain that
$A_{1}(1,0)=1$. According to Hua and Joe \cite{HJ2014}, $0\leq A_{2}%
(1,0)\leq1$. If $A_{2}(1,0)\not =0$, then
\begin{align}
\lim_{t\downarrow0}\frac{\widehat{C}_{v}{(ut,v)}}{t}  &  =\lim_{t\downarrow
0}\exp \left(  \log t\left(  A\left(  \frac{\log u}{\log t}+1,\frac{\log
v}{\log t}\right)  -1\right)  \right)  A_{2}\left(  \frac{\log u}{\log
t}+1,\frac{\log v}{\log t}\right)  \frac{1}{v}\nonumber \\
&  =A_{2}(1,0)uv^{A_{2}(1,0)-1}=\varphi(u,v).\nonumber
\end{align}
which implies $\theta=1$ and $h(t)=1$ in (\ref{part1}). It is straightforward
to see that $\varphi(1,\frac{1}{\cdot})\in \mathrm{RV}_{\beta}$ with
$\beta=1-A_{2}(1,0)$, which verifies Assumption \ref{A4}. If $A_{2}(1,0)=0$,
then $\varphi(u,v)=0$, which a degenerate case and Assumption \ref{A4} is
still satisfied. \bigskip
\end{proof}

\begin{remark}
The condition (\ref{evcond}) in Lemma \ref{EVClm} is satisfied by extreme
value copulas such as the comonotonicity copula with $A(u,v)=\max(u,v)$,
independence copula with $A(u,v)=u+v$ and Gumbel copula with $A(u,v)=\left(
u^{\phi}+v^{\phi}\right)  ^{1/\phi}$ with $\phi \geq1$ (the verification is
provided at the end of this subsection).\bigskip
\end{remark}

To make the statement of the next theorem clear, we use the following notation
to denote the sets of the tail index $\alpha$ and the function $A:$%
\[
C_{1}=\{(\alpha,A):\alpha<A_{2}(1,0)^{-1}\},
\]%
\[
C_{2}=\{(\alpha,A):\alpha<A_{1}(1,1)^{-1}\},
\]
and%
\[
C_{3}=\{(\alpha,A):A(1,1)<A_{2}(1,0)+1\}.
\]
The three cases in the next theorem correspond to three disjoint sets of
$(\alpha,A)$.

\begin{theorem}
\label{EVC}Under the conditions of Lemma \ref{EVClm} and Assumption \ref{A1},{
we have as }$t\rightarrow \infty$,

(i) if $(\alpha,A)\in C_{1}\cap C_{2}\cap C_{3}$, then
\[
\Pr \left(  X+Y>t\right)  =2\overline{F}(t)+\zeta_{1}\left(  \overline
{F}(t)\right)  ^{A(1,1)}(1+o(1)),
\]
where $\zeta_{1}=2I\left(  \alpha A_{1}(1,1),\alpha A_{1}(1,1)\right)
+2^{2\alpha A_{1}(1,1)}-2^{\alpha A_{1}(1,1)+1}$ and $I(\cdot,\cdot)$ is given
in (\ref{Ia});

(ii) if $(\alpha,A)\in C_{1}\backslash(C_{2}\cap C_{3})$, then%
\begin{align*}
\Pr \left(  X+Y>t\right)   &  =2\overline{F}(t)+\zeta_{2}\left(  \overline
{F}(t)\right)  ^{A_{2}(1,0)+1}(1+o(1))\\
&  +\left(  2^{2\alpha A_{1}(1,1)}-2^{\alpha A_{1}(1,1)+1}\right)  \left(
\overline{F}(t)\right)  ^{A(1,1)}1_{\{A(1,1)=A_{2}(1,0)+1\}}(1+o(1)),
\end{align*}
where $\zeta_{2}=2I\left(  \alpha,\alpha A_{2}(1,0)\right)  $;

(iii) if $(\alpha,A)\in C_{1}^{c}$, then%
\begin{align*}
\Pr \left(  X+Y>t\right)   &  =2\overline{F}(t)+2\alpha t^{-1}\mu
_{\widetilde{F}}(t)\overline{F}(t)(1+o(1))\\
&  +\left(  2^{2\alpha A_{1}(1,1)}-2^{\alpha A_{1}(1,1)+1}\right)  \left(
\overline{F}(t)\right)  ^{A(1,1)}1_{\{A(1,1)=A_{2}(1,0)+1\}}(1+o(1)),
\end{align*}
where $\widetilde{F}(t)=1-\left(  \overline{F}(t)\right)  ^{A_{2}(1,0)}$ and
$\mu_{\widetilde{F}}(t)=\int_{0}^{t}xd\widetilde{F}(x)$.
\end{theorem}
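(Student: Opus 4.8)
The plan is to treat Theorem \ref{EVC} as an application of the two general results, Theorems \ref{TH1-P-D1} and \ref{Thm2}, using the computations already established in Lemma \ref{EVClm}. From that lemma we know $\kappa=A(1,1)$, $\ell(t)=1$, $\tau(u,v)=u^{A_1(1,1)}v^{A_1(1,1)}$, $\theta=1$, $h(t)=1$, $\varphi(u,v)=A_2(1,0)\,u\,v^{A_2(1,0)-1}$, and $\beta=1-A_2(1,0)$. So the first task is simply to substitute these into the constants $\Delta_1$, $\Delta_2$, $\Delta(t)$ and $\eta$ appearing in the two theorems and check that they reduce to $\zeta_1$, $\zeta_2$, $2\alpha t^{-1}\mu_{\widetilde F}(t)$ and the indicator terms claimed. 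For instance $\tau(2^\alpha,2^\alpha)-2\tau(1,2^\alpha)=2^{2\alpha A_1(1,1)}-2^{\alpha A_1(1,1)+1}$ immediately, and $\eta=\alpha\int_0^{1/2}((1-y)^{-\alpha}-1)\varphi(1,\cdot)$-type integrals should match $I(\cdot,\cdot)$ up to the index bookkeeping, since $\varphi(1,\overline F(ty))dF(ty)$ behaves like $A_2(1,0)(\overline F(ty))^{A_2(1,0)-1}dF(ty)$, which is (a constant times) $d\widetilde F(ty)$ with $\widetilde F=1-(\overline F)^{A_2(1,0)}$ and $\widetilde F\in\mathrm{RV}_{-\alpha A_2(1,0)}$.

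Second, I would sort out which of the three dichotomies in Theorems \ref{TH1-P-D1} and \ref{Thm2} is active in each case. The condition $\alpha(1-\beta)<1$ becomes $\alpha A_2(1,0)<1$, i.e. $(\alpha,A)\in C_1$; the finiteness of $\eta$ (an integral of $((1-y)^{-\alpha A_1(1,1)}-1)y^{-\alpha A_1(1,1)-1}$ near $0$) holds iff $\alpha A_1(1,1)<1$, i.e. $(\alpha,A)\in C_2$; and comparing the exponents $\kappa=A(1,1)$ versus $\theta=A_2(1,0)+1$ is exactly the $C_3$ condition. So case (i) is "$D\neq0$, $\eta<\infty$, $\alpha(1-\beta)<1$" $\Rightarrow$ Theorem \ref{TH1-P-D1} with only the $\kappa$-term surviving (and since $\kappa<\theta+$ something one must check the $\theta$-term is negligible, which is where $C_3$ enters to ensure $(\overline F)^{A(1,1)}$ dominates); case (ii) is $(\alpha,A)\in C_1$ but failing $C_2$ or $C_3$, so Theorem \ref{Thm2} applies and now the $\theta$-term $2\Delta(t)(\overline F)^{A_2(1,0)+1}$ is of equal or larger order, giving $\zeta_2(\overline F)^{A_2(1,0)+1}$ plus the boundary indicator $1_{\{A(1,1)=A_2(1,0)+1\}}$ collecting the $\Delta_2$-term; case (iii) is $(\alpha,A)\notin C_1$, i.e. $\alpha A_2(1,0)\geq1$, which forces $\eta=\infty$ and invokes the second branch of Theorem \ref{Thm2}, and here $\int_0^{1/2}(1-y)^{-\alpha\theta}\varphi(1,\cdot)dF$ must be recognized, via a Karamata/Lemma-2.4-of-Mao-Hu argument applied to $\widetilde F$, as asymptotically $\alpha t^{-1}\mu_{\widetilde F}(t)$.

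Third, for each case I must verify the extra side hypotheses of whichever theorem I invoke: relation (\ref{A5}) for Theorem \ref{TH1-P-D1}, and both the domination $\widehat C_v(t,v)/(t^\theta h(t))\le g(v)$ with $\int_0^1 g<\infty$ and relation (\ref{cond}) for Theorem \ref{Thm2}. Here $\widehat C_v(t,v)/t\to \varphi(1,v)=A_2(1,0)v^{A_2(1,0)-1}$, whose integral over $v\in[0,1]$ is finite precisely because $A_2(1,0)>0$ (the degenerate $A_2(1,0)=0$ case is trivial), so $g$ can be taken as a constant multiple of $v^{A_2(1,0)-1}$ (uniform domination needs a short monotonicity/regular-variation argument like the one used to get Assumption \ref{A3} in Lemma \ref{EVClm}). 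Checking (\ref{A5}) and (\ref{cond}) amounts to comparing $(\overline F(t))^{\kappa}$ against $\int_0^\delta y\,d\widetilde F(ty)$-type tails; since $\widetilde F\in\mathrm{RV}_{-\alpha A_2(1,0)}$ these are exactly the truncated-moment estimates from Barbe--McCormick (Lemmas 5.2 and 5.6) applied to $\widetilde F$ instead of $F$, just as in the proof of Proposition \ref{indp}.

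The main obstacle I expect is the bookkeeping at the boundary $A(1,1)=A_2(1,0)+1$, where the two asymptotic terms in Theorem \ref{Thm2} have the same order and their coefficients must be added correctly — one has to be careful that $\Delta_2(\overline F)^\kappa$ and the leading piece of $2\Delta(t)(\overline F)^\theta$ combine into the stated expression with the indicator, rather than double-counting, and to confirm that off the boundary the subdominant term is genuinely $o$ of the dominant one (this is where the strict inequalities defining $C_1,C_2,C_3$ do the work). A secondary subtlety is justifying the uniform (not just pointwise) convergence and domination needed to pass the limit inside the integrals defining $\Delta(t)$ and $\eta$, and identifying $\int_0^{1/2}((1-y)^{-\alpha\theta}-1)\varphi(1,\overline F(ty))dF(ty)$ with $\alpha t^{-1}\mu_{\widetilde F}(t)$ in case (iii); both follow the pattern already used for the independence copula, so I would mirror that proof closely.
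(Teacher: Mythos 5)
Your plan is correct and follows essentially the same route as the paper's own proof: substitute the quantities computed in Lemma \ref{EVClm} into Theorems \ref{TH1-P-D1} and \ref{Thm2}, identify $\varphi(1,\overline F(ty))\,dF(ty)$ with $d\widetilde F(ty)$ so that $C_1,C_2,C_3$ correspond exactly to $\alpha(1-\beta)<1$, $\eta<\infty$, and condition (\ref{A5}), bound $\widehat C_v(t,v)/t$ by $v^{A_2(1,0)-1}$ for the dominated-convergence hypothesis, and evaluate the resulting integrals via Lemma 5.6 of Barbe--McCormick (case (ii)) and Lemma 2.4 of Mao--Hu (case (iii)). The only quibble is notational ($\theta$ itself is $1$; it is the combined term $\Delta(t)(\overline F(t))^{\theta}$ that has order $(\overline F(t))^{A_2(1,0)+1}$), which does not affect the argument.
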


\begin{proof}
We show that extreme value copulas satisfy the conditions in Theorems
\ref{TH1-P-D1} and \ref{Thm2} by analyzing a few key terms.

For $\eta$, if $\alpha A_{1}(1,1)<1$, then%
\[
\eta=A_{2}\int_{0}^{1/2}\left(  \left(  1-y\right)  ^{-\alpha A_{1}%
(1,1)}-1\right)  y^{-\alpha A_{1}(1,1)-1}dy=\alpha^{-1}I\left(  \alpha
A_{1}(1,1),\alpha A_{1}(1,1)\right)  <\infty.
\]
If $\alpha A_{1}(1,1)\geq1$, then $\eta=\infty$. For $D(\delta,t)$,%
\[
D(\delta,t)=A_{1}(1,1)\int_{\delta}^{1/2}\left(  \left(  1-y\right)  ^{-\alpha
A_{1}(1,1)}-1\right)  y^{-\alpha(A_{1}(1,1)-1)}dF(ty)>0.
\]
For $0<A_{2}(1,0)\leq1$, we show that there exists a function $g$ such that
for $0<t<t_{0}$
\[
\frac{\widehat{C}_{v}{(t,v)}}{t}=\exp \left(  \log t\left(  A\left(
1,\frac{\log v}{\log t}\right)  -1\right)  \right)  A_{2}\left(  1,\frac{\log
v}{\log t}\right)  \frac{1}{v}\leq g\left(  {v}\right)  .
\]
Since $0<A_{2}(1,y)\leq1$ for any $y\geq0$, it follows that $A_{2}%
(1,\frac{\log v}{\log t})\frac{1}{v}\leq \frac{1}{v}$. Letting $f(t)=\exp
\left(  \log t\left(  A\left(  1,\frac{\log v}{\log t}\right)  -1\right)
\right)  $, by taking the derivative of $f(t)$ we obtain that $f(t)$ is a
nonincreasing function, which implies that $f(t)\leq f(0)=v^{A_{2}(1,0)}$ for
$t>0$. This leads to
\begin{equation}
\frac{\widehat{C}_{v}{(t,v)}}{t}\leq v^{A_{2}(1,0)-1}. \label{DCT}%
\end{equation}

Note that for any $0<\delta \leq1/2$, by (\ref{DCT}),
\[
\frac{\int_{0}^{\delta}y\widehat{C}_{v}\left(  \overline{F}(t),\overline
{F}\left(  ty\right)  \right)  dF(ty)}{\left(  \overline{F}\left(  t\right)
\right)  ^{A(1,1)}}\leq \frac{\int_{0}^{\delta}y\left(  \overline
{F}(ty)\right)  ^{A_{2}(1,0)-1}dF(ty)}{\left(  \overline{F}\left(  t\right)
\right)  ^{A(1,1)-1}}=\frac{\int_{0}^{\delta}yd\widetilde{F}(ty)}{\left(
\overline{F}\left(  t\right)  \right)  ^{A(1,1)-1}}.
\]
By Lemma 5.6 in Barbe and McCormick \cite{BM-2005}, if $A(1,1)<A_{2}(1,0)+1$,
then (\ref{A5}) holds. Thus, if $\alpha A_{1}(1,1)<1$, $\alpha(1-\beta)=\alpha
A_{2}(1,0)<1$ and $A(1,1)<A_{2}(1,0)+1$, which is the case $(\alpha,A)\in
C_{1}\cap C_{2}\cap C_{3}$, then all the conditions in Theorem \ref{TH1-P-D1}
are satisfied and the result in (i) follows.

Now we turn to the cases in Theorem \ref{Thm2}. The existence of function $g$
is verified by (\ref{DCT}). The verification of (\ref{A5}) also verifies
(\ref{cond}). Thus, to apply Theorem \ref{Thm2}, we are left to express
$\int_{0}^{1/2}((1-y)^{-\alpha}-1)\varphi(1,\overline{F}\left(  ty\right)
)dF(ty)$ explicitly. Note that%
\[
\int_{0}^{1/2}((1-y)^{-\alpha}-1)\varphi(1,\overline{F}\left(  ty\right)
)dF(ty)=\int_{0}^{1/2}((1-y)^{-\alpha}-1)d\widetilde{F}(ty).
\]
Since $\left(  \overline{F}(y)\right)  ^{A_{2}(1,0)}$ is regularly varying
with index $\alpha A_{2}(1,0)$, by Lemma 5.6 of Barbe and McCormick
\cite{BM-2005} if $\alpha A_{2}(1,0)<1$, then
\[
\lim_{t\rightarrow \infty}\frac{\int_{0}^{1/2}((1-y)^{-\alpha}-1)d\widetilde
{F}(ty)}{1-\widetilde{F}(t)}=I\left(  \alpha,\alpha A_{2}(1,0)\right)
<\infty.
\]
This is the case $(\alpha,A)\in C_{1}\backslash(C_{2}\cap C_{3})$, and by
applying Theorem \ref{Thm2} the result in (ii) follows. Lastly, by Lemma 2.4
of Mao and Hu \cite{Mao-Hu-2013} if $\alpha A_{2}(1,0)\geq1$ then
\[
\lim_{t\rightarrow \infty}\frac{\int_{0}^{1/2}((1-y)^{-\alpha}-1)d\widetilde
{F}(ty)}{t^{-1}\mu_{\widetilde{F}}(t)}=\alpha,
\]
where $\mu_{\widetilde{F}}(t)=\int_{0}^{t}xd\widetilde{F}(x)$. This is the
case $(\alpha,A)\in C_{1}^{c}$, and the result in (iii) follows by Theorem
\ref{Thm2}.
\end{proof}

\bigskip

Next, we present the asymptotic expansions for VaR.

\begin{theorem}
\label{EVCvar}Under the assumptions of Lemma \ref{EVClm} and further assume
that $\overline{F}(t)$ satisfies (\ref{2RV}), we have as $q\rightarrow1$, (i)
if $(\alpha,A)\in C_{1}\cap C_{2}\cap C_{3}$ and $\rho<-\alpha \left(
A(1,1)-1\right)  $, then%
\[
\mathrm{VaR}_{q}(Z)=2^{1/\alpha}\mathrm{VaR}_{q}(X)\left(  1+\frac
{2^{2-A(1,1)}\zeta_{1}}{\alpha}\left(  1-q\right)  ^{A(1,1)-1}(1+o(1))\right)
;
\]

(ii) if $(\alpha,A)\in C_{1}\backslash(C_{2}\cap C_{3})$ and $\rho<-\alpha
A_{2}(1,0)$, then
\[
\mathrm{VaR}_{q}(Z)=2^{1/\alpha}\mathrm{VaR}_{q}(X)\left(  1+\frac
{2^{2-A_{2}(1,0)}\zeta_{2}}{\alpha}\left(  1-q\right)  ^{A_{2}(1,0)}%
(1+o(1))\right)  .
\]

(iii) if $(\alpha,A)\in C_{1}^{c}$ and $\rho<-1$, then%
\[
\mathrm{VaR}_{q}(Z)=\mathrm{VaR}_{q}(X)\left(  2^{1/\alpha}+\mu_{\widetilde
{F}}\left(  F^{\leftarrow}(q)\right)  \left(  F^{\leftarrow}(q)\right)
^{-1}(1+o(1))\right)  ;
\]

(iv) if $(\alpha,A)\in C_{1}\cap C_{2}\cap C_{3}$ and $-\alpha \left(
A(1,1)-1\right)  <\rho<0$, or if $(\alpha,A)\in C_{1}\backslash(C_{2}\cap
C_{3})$ and $-\alpha A_{2}(1,0)<\rho<0$, or if $(\alpha,A)\in C_{1}^{c}$ and
$-1<\rho<0$, then%
\[
\mathrm{VaR}_{q}(Z)=2^{1/\alpha}\mathrm{VaR}_{q}(X)\left(  1+B\alpha
^{-1}\left(  2^{-\rho/\alpha}-1\right)  \left(  \mathrm{VaR}_{q}(X)\right)
^{\rho}(1+o(1))\right)  .
\]

\end{theorem}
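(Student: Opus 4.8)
The plan is to follow, regime by regime, the route already taken for Theorem~\ref{varind}: insert the relevant second-order expansion of $\Pr(X+Y>t)$ from Theorem~\ref{EVC} into the second-order regular variation \eqref{2RV} of the marginal, thereby getting a two-term expansion of $\overline{G}(tx)/\overline{F}(t)$; invert that expansion by Vervaat's Lemma; and finally set $\overline{F}(t)=1-q$, so that $t=F^{\leftarrow}(q)=\mathrm{VaR}_{q}(X)$. Lemma~\ref{EVClm} guarantees Theorem~\ref{EVC} applies, so in each of the four regimes the aggregate survival function has the shape $\overline{G}(t)=2\overline{F}(t)+(\mathrm{correction})(1+o(1))$, the correction being $\zeta_{1}(\overline{F}(t))^{A(1,1)}$ in (i); $\zeta_{2}(\overline{F}(t))^{A_{2}(1,0)+1}$, together with the indicator term at the same order when $A(1,1)=A_{2}(1,0)+1$, in (ii); $2\alpha t^{-1}\mu_{\widetilde{F}}(t)\overline{F}(t)$ in (iii); and, in (iv), something negligible next to the marginal's own second-order term.

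First I would produce the local expansion of the aggregate tail. Using \eqref{2RV} in the form $\overline{F}(tx)/\overline{F}(t)=x^{-\alpha}\bigl(1+B(x^{\rho}-1)t^{\rho}(1+o(1))\bigr)$ together with $\overline{F}(tx)\sim x^{-\alpha}\overline{F}(t)$, one gets for each $x>0$
\[
\frac{\overline{G}(tx)}{\overline{F}(t)}=2x^{-\alpha}+\Bigl(2B(x^{\rho}-1)x^{-\alpha}t^{\rho}+c\,x^{-\alpha p}(\overline{F}(t))^{p-1}\Bigr)(1+o(1))+\cdots,
\]
where $(c,p)$ equals $(\zeta_{1},A(1,1))$ in (i) and $(\zeta_{2},A_{2}(1,0)+1)$ in (ii), while in (iii) the term $c\,x^{-\alpha p}(\overline{F}(t))^{p-1}$ is replaced by $2\alpha x^{-\alpha-1}t^{-1}\mu_{\widetilde{F}}(t)$, which is regularly varying in $t$. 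The side conditions on $\rho$ in parts (i)--(iii) --- respectively $\rho<-\alpha(A(1,1)-1)$, $\rho<-\alpha A_{2}(1,0)$ and $\rho<-1$ --- are precisely those making the copula-induced term dominate $t^{\rho}$; in part (iv) the reverse inequality makes $t^{\rho}$ the leading correction and the copula term drops into the $o$-term. In every case one is thus left with $\overline{G}(tx)/\overline{F}(t)=2x^{-\alpha}+g(x)a_{t}(1+o(1))$ for a single regularly varying $a_{t}\downarrow0$ and a power function $g$ of $x$.

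Next I would invert. Since $\psi(x):=2x^{-\alpha}$ is strictly decreasing with $\psi^{\leftarrow}(y)=(y/2)^{-1/\alpha}$, and $\bigl(\overline{G}(tx)/\overline{F}(t)-\psi(x)\bigr)/a_{t}\to g(x)$ locally uniformly on $(0,\infty)$, Vervaat's Lemma yields
\[
\lim_{t\to\infty}\frac{\tfrac1t\,G^{\leftarrow}\bigl(1-y\overline{F}(t)\bigr)-(y/2)^{-1/\alpha}}{a_{t}}=-\frac{g\bigl(\psi^{\leftarrow}(y)\bigr)}{\psi'\bigl(\psi^{\leftarrow}(y)\bigr)}=\frac{g\bigl((y/2)^{-1/\alpha}\bigr)}{2\alpha(y/2)^{1+1/\alpha}},
\]
just as in the proof of Theorem~\ref{varind}. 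Finally, putting $\overline{F}(t)=1-q$ (hence $t=F^{\leftarrow}(q)=\mathrm{VaR}_{q}(X)$) and $y=1$, and using $(1/2)^{-1/\alpha}=2^{1/\alpha}$, one reads off $\mathrm{VaR}_{q}(Z)=G^{\leftarrow}(q)$ in the stated form: the leading term is $2^{1/\alpha}\mathrm{VaR}_{q}(X)$, the second-order coefficients in (i), (ii), (iv) come out of $g$, $\alpha$, $p$ and (in (iv)) $\rho$, case (iii) gives the correction $\mu_{\widetilde{F}}(F^{\leftarrow}(q))(F^{\leftarrow}(q))^{-1}$, and case (iv) gives $B\alpha^{-1}(2^{-\rho/\alpha}-1)(\mathrm{VaR}_{q}(X))^{\rho}$ --- exactly as in parts (ii) and (iii) of Theorem~\ref{varind}.

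\textbf{The hard part} is bookkeeping rather than ideas. First, in parts (ii)--(iii) one must compare the two exponents $A_{2}(1,0)+1$ and $A(1,1)$ that Theorem~\ref{EVC} produces, show that the smaller one governs the aggregate tail, and treat the borderline $A(1,1)=A_{2}(1,0)+1$ (where the two corrections merge at a common order) separately. Second, one must verify that Vervaat's Lemma genuinely applies --- i.e. that $\overline{G}(tx)/\overline{F}(t)\to 2x^{-\alpha}$ and its normalised error both converge uniformly for $x$ in compact subsets of $(0,\infty)$ --- which follows from the uniform-convergence theorem for regularly varying functions once the error term has itself been identified as regularly varying. Third, one must carry the multiplicative constants (the $\zeta_{i}$, the various powers of $2$, and the factor $\alpha^{-1}$) correctly through the inversion so as to land on the coefficients stated in the theorem.
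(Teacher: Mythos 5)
Your proposal follows exactly the route of the paper's own proof: substitute the expansions from Theorem \ref{EVC} into (\ref{2RV}) to obtain a two-term expansion of $\overline{G}(tx)/\overline{F}(t)$, invert it via Vervaat's Lemma as in the proof of Theorem \ref{varind}, and then set $\overline{F}(t)=1-q$ and $x=1$; the paper's argument is in fact terser, merely displaying the expansion in each regime and citing the proof of Theorem \ref{varind}. Your explicit inversion formula $-g(\psi^{\leftarrow}(y))/\psi'(\psi^{\leftarrow}(y))$ and your remarks on the borderline case $A(1,1)=A_{2}(1,0)+1$ and on the local uniformity needed for Vervaat's Lemma supply details the paper leaves implicit, so the two approaches coincide.
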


\begin{proof}
(i) From Theorem \ref{EVC} (i), note that if $t^{\rho}=o\left(  \left(
\overline{F}(t)\right)  ^{A(1,1)-1}\right)  $, i.e. $\rho<-\alpha \left(
A(1,1)-1\right)  $, then for any $x>0$,
\[
\frac{\overline{G}(tx)}{\overline{F}(t)}=x^{-\alpha}\left(  2+\zeta
_{1}x^{-\alpha(A(1,1)-1)}\left(  \overline{F}(t)\right)  ^{A(1,1)-1}%
(1+o(1))\right)  .
\]
The desired result follows similarly as in the proof of Theorem \ref{varind}.

(ii) From Theorem \ref{EVC} (ii), note that if $t^{\rho}=o\left(  \left(
\overline{F}(t)\right)  ^{A_{2}(1,0)}\right)  $ i.e. $\rho<-\alpha A_{2}%
(1,0)$, then for any $x>0$,%
\[
\frac{\overline{G}(tx)}{\overline{F}(t)}=x^{-\alpha}\left(  2+\zeta
_{2}x^{-\alpha A_{2}(1,0)}\left(  \overline{F}(t)\right)  ^{A_{2}%
(1,0)}(1+o(1))\right)  .
\]
The rest follows the proof of Theorem \ref{varind}.

(iii) and (iv) follow similarly.\bigskip
\end{proof}

Now we show a numerical example to illustrate the asymptotic expansions for
tail probabilities and for VaR under the Gumbel copula, which is the only
copula belongs to both extreme value and the Archimedean copulas; see Genest
and Rivest \cite{GR1989}. First we verify that $A(u,v)=\left(  u^{\phi
}+v^{\phi}\right)  ^{1/\phi}$ with $\phi \geq1$ satisfies condition
(\ref{evcond}). With some rewriting, we obtain for all $x\geq0$ and $0<t\leq
t_{0}$, and $0\leq u<\delta$,
\begin{align*}
\frac{A_{2}\left(  1,\frac{x}{1+\frac{\log(1+u)}{\log t}}\right)  }%
{A_{2}(1,x)}  &  =\left(  \frac{\left(  1+\frac{\log(1+u)}{\log t}\right)
^{\phi}+x^{\phi}}{1+x^{\phi}}\right)  ^{1/\phi-1}\\
&  =\left(  1+\frac{1}{1+x^{\phi}}\left(  \left(  1+\frac{\log(1+u)}{\log
t}\right)  ^{\phi}-1\right)  \right)  ^{1/\phi-1}\\
&  \leq \left(  1+\frac{\log(1+u)}{\log t}\right)  ^{1-\phi}\\
&  \leq \left(  1+\frac{\log(1+u)}{\log t_{0}}\right)  ^{1-\phi}\\
&  \sim(1+u)^{(1-\phi)/\log t_{0}},
\end{align*}
where the third step is due that $\frac{1}{1+x^{\phi}}\leq1$ and $\left(
1+\frac{\log(1+u)}{\log t}\right)  ^{\phi}<1$ and in the last step we used the
fact $\log(1+u)\sim u$ and $(1+u)^{c}\sim1+cu$ for $0\leq u\leq \delta$. This
shows that the Gumbel copula satisfies the conditions of Lemma \ref{EVClm} and
hence Theorems \ref{EVC} and \ref{EVCvar} can be applied.

To be more specific, in the numerical example, we assume that the survival
copula for $(X,Y)$ is a bivariate Gumbel copula so that the joint survival
probability of $(X,Y)$ is given by%
\[
\Pr \left(  X>x,Y>y\right)  ={\exp}\left(  -\left(  (-\log \overline{{F}}%
{(x))}^{\phi}+(-\log{\overline{{F}}{(y))}}^{\phi}\right)  ^{1/\phi}\right)  .
\]
The marginal distribution is assumed to be a Pareto distribution with
distribution function
\[
F(x)=1-\left(  \frac{\theta}{x+\theta}\right)  ^{\alpha},\qquad x>0
\]
with $\alpha>0$ and $\theta>0$. We can verify that
\[
\overline{{F}}{(x)=\theta}^{\alpha}x^{-\alpha}\left(  1-\alpha \theta
x^{-1}(1+o(1))\right)  ,
\]
which shows (\ref{2RV}) is satisfied with $\rho=-1$, $A={\theta}^{\alpha}$ and
$B=-\alpha \theta$.

To investigate the performance of the asymptotic expansion for tail
probabilities, on one hand, we calculate the asymptotic expansions of the tail
probabilities $p_{t}:=\Pr \left(  X+Y>t\right)  $ for large $t$ and VaR of
$Z=X+Y$, $\mathrm{VaR}_{q}(Z)$ for $q$ close to $1$ when the survival copula
for $(X,Y)$ follows a Gumbel copula with Pareto marginals by applying Theorems
\ref{EVC} and \ref{EVCvar}. On the other hand,{ we simulate a random sample
$\{(X_{i},Y_{i}),\  \,i=1,\ldots,n\}$ of $(X,Y)$ with size $n=100,000$ from the
same Gumbel copula and Pareto marginals and estimate the tail probability
$p_{t}$ by
\[
\widehat{p}_{t}:=\frac{1}{n}\sum_{i=1}^{n}1_{\{X_{i}+Y_{i}>t\}},
\]
and estimate VaR by}%
\[
\widehat{\mathrm{VaR}}_{q}(Z):=Z_{\lfloor nq\rfloor,n}%
\]
where $Z_{1,n}\leq Z_{2,n}\leq \cdots Z_{n,n}$ are order statistics. Let the
Pareto marginals have varying tail index $\alpha$ and fixed scale parameter of
$1$. The Gumbel copula has a varying parameter $\phi=1$ and $10$. Here $\phi$
controls the dependence. When $\phi=1$ the Gumbel copula corresponds to the
independence copula. In this case, one can verify that Theorem \ref{EVC}
reduces to Proposition \ref{indp} and Theorem \ref{EVCvar} reduces to Theorem
\ref{varind}. We choose $\alpha=0.8$ and $2$ respectively and show the values
of $\Pr(X+Y>t)$ and $\mathrm{VaR}_{q}(Z)$ from both the asymptotic expansions
and simulations in Figure \ref{f1}. When $\phi=10$, it corresponds to a
stronger dependent case. Again we choose $\alpha=0.8$ and $2$ respectively,
and the corresponding results are plotted in Figure \ref{f2}. Overall, the
asymptotic values are very close to the simulated ones, especially for the
heavier tailed case.

\begin{figure}[ptb]
\begin{subfigure}{0.5\textwidth}
	 \centering
		\includegraphics[width=0.7\textwidth]{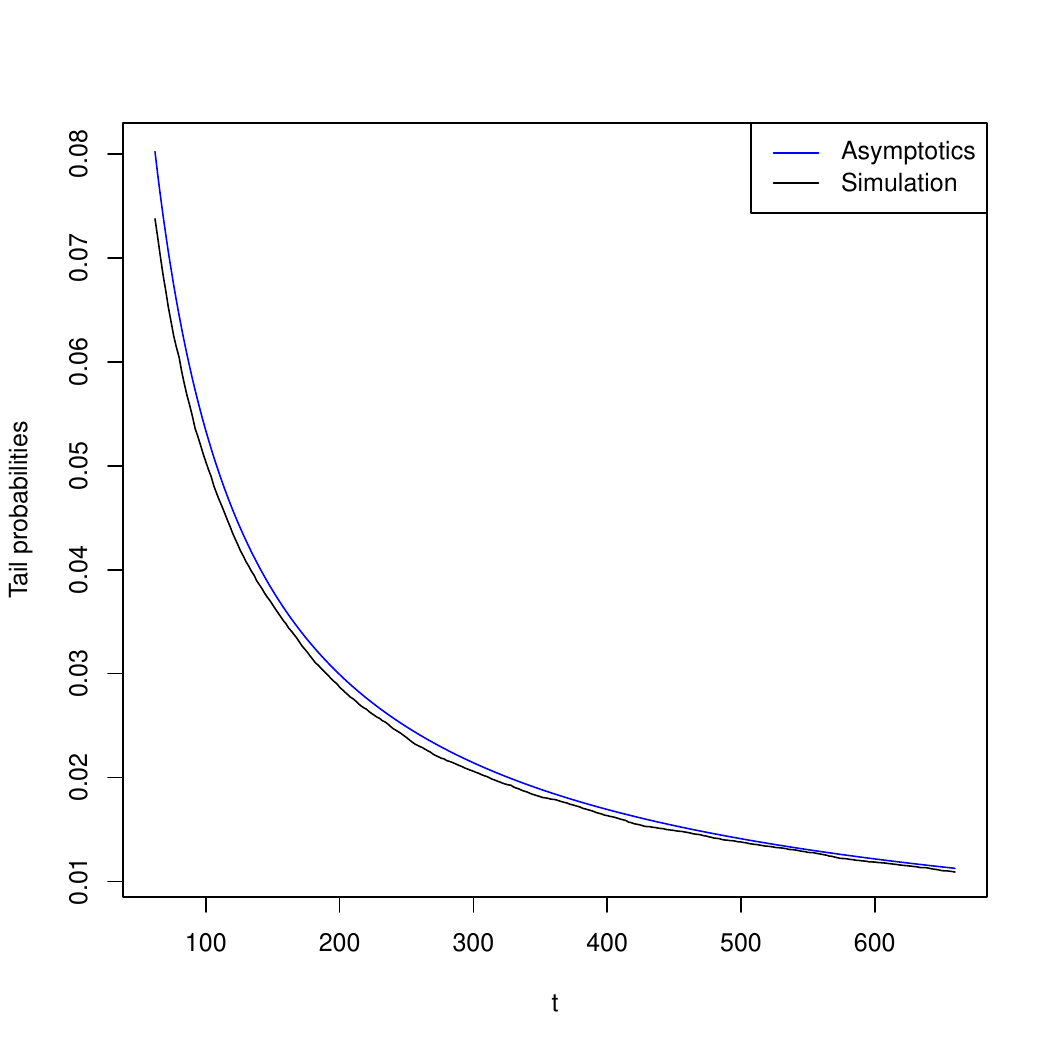}
		\caption{Tail probability with $\alpha=0.8$}
		\label{fig1}
	\end{subfigure}
\begin{subfigure}{0.5\textwidth}	
	 \centering	
		\includegraphics[width=0.7\textwidth]{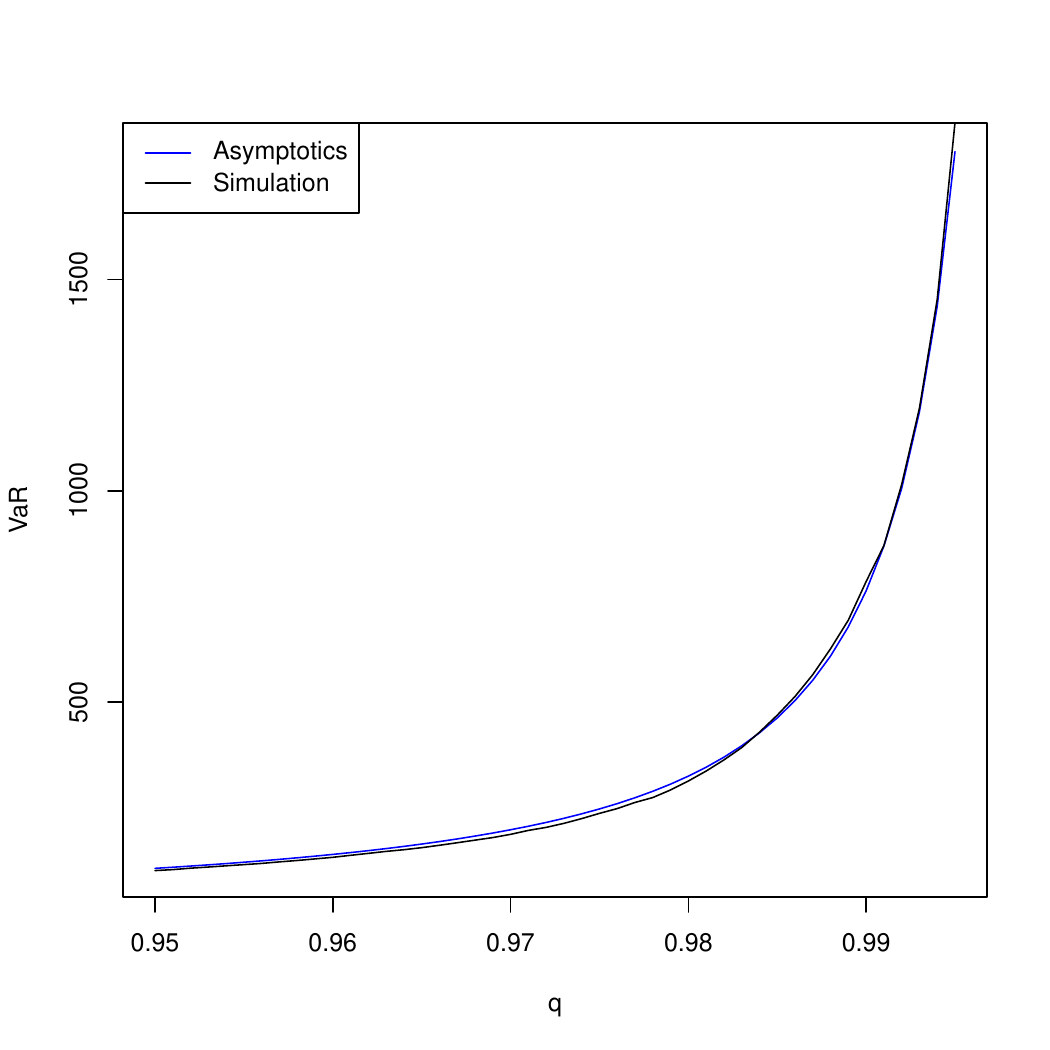}
	\caption{VaR with $\alpha=0.8$ }
		\label{fig2}
	\end{subfigure}
\begin{subfigure}{0.5\textwidth}
	\centering
	\includegraphics[width=0.7\textwidth]{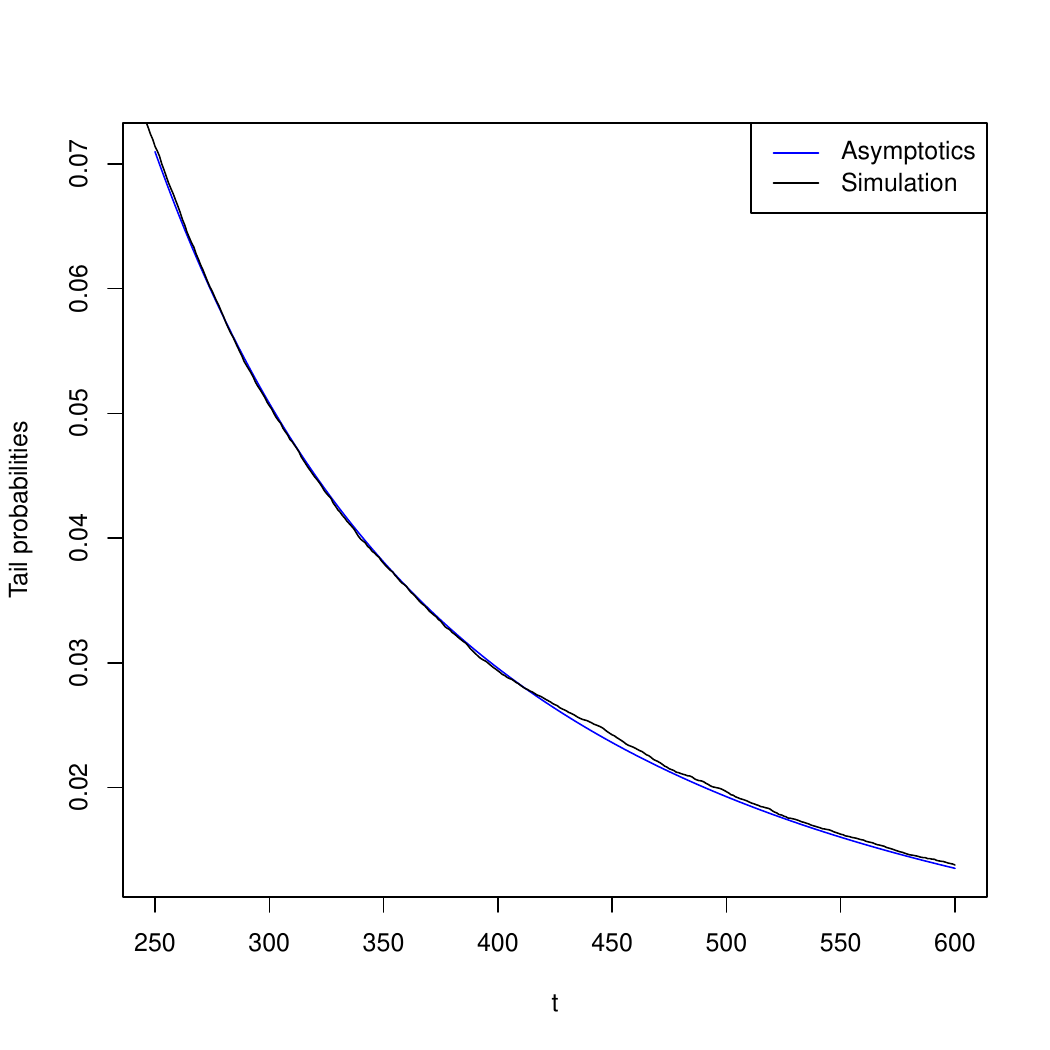}
	\caption{Tail probability with $\alpha=2$ }
	\label{fig3}
\end{subfigure}
\begin{subfigure}{0.5\textwidth}	
	\centering	
	\includegraphics[width=0.7\textwidth]{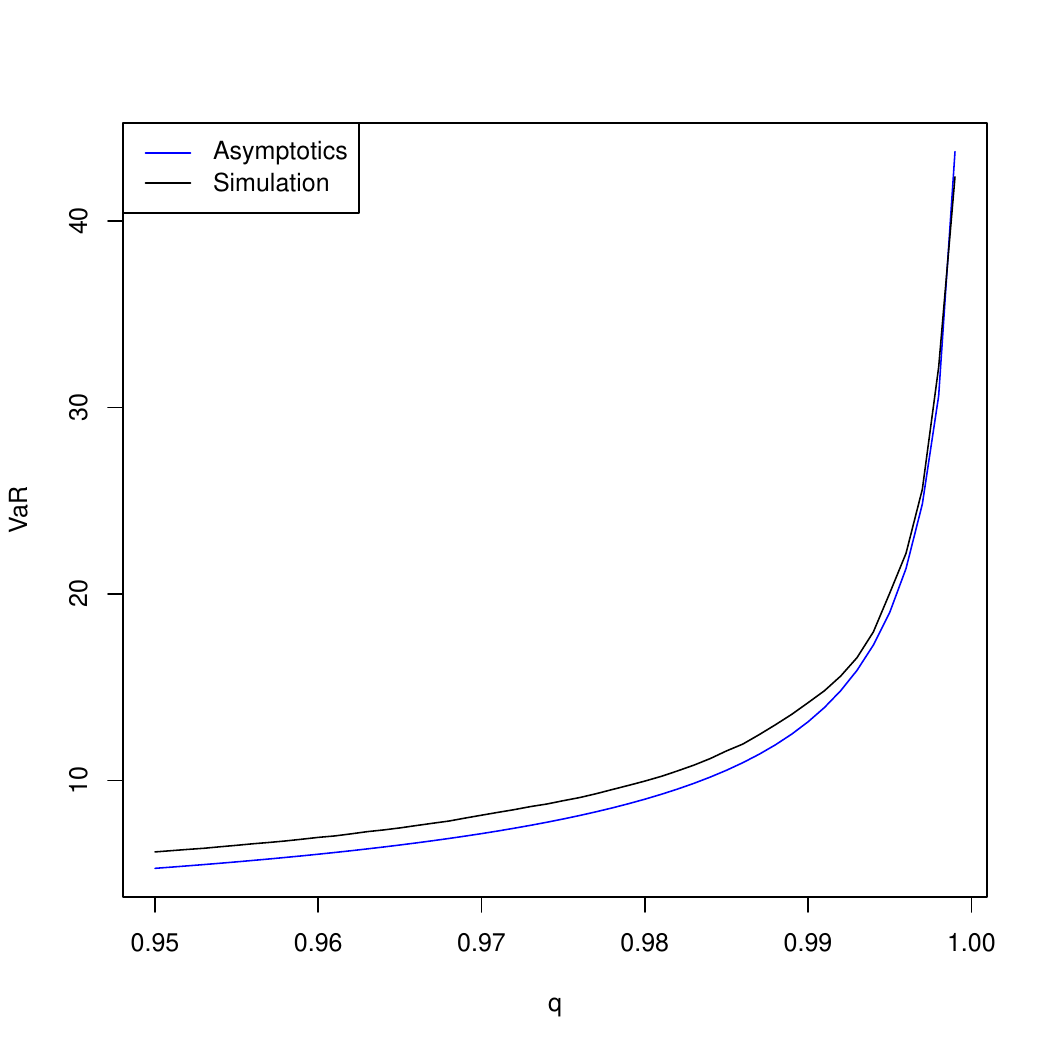}
	\caption{VaR with $\alpha=2$ }
	\label{fig4}
\end{subfigure}
\caption{Tail probabilities and VaR under Gumbel copula with $\phi=1$.}%
\label{f1}%
\end{figure}

\begin{figure}[ptb]
\begin{subfigure}{0.5\textwidth}
	\centering
	\includegraphics[width=0.7\textwidth]{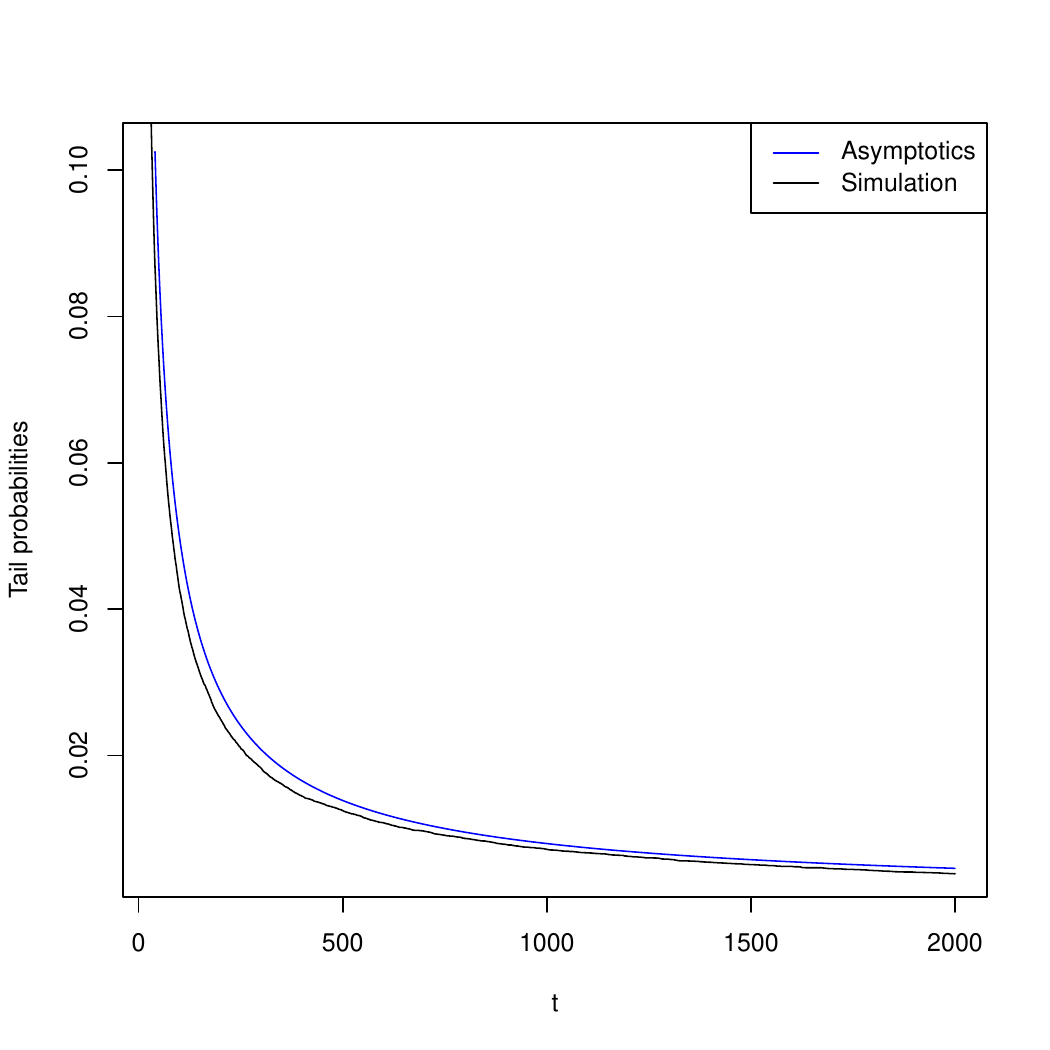}
	\caption{Tail probability with $\alpha=0.8$ }
	\label{fig5}
\end{subfigure}
\begin{subfigure}{0.5\textwidth}	
	\centering	
	\includegraphics[width=0.7\textwidth]{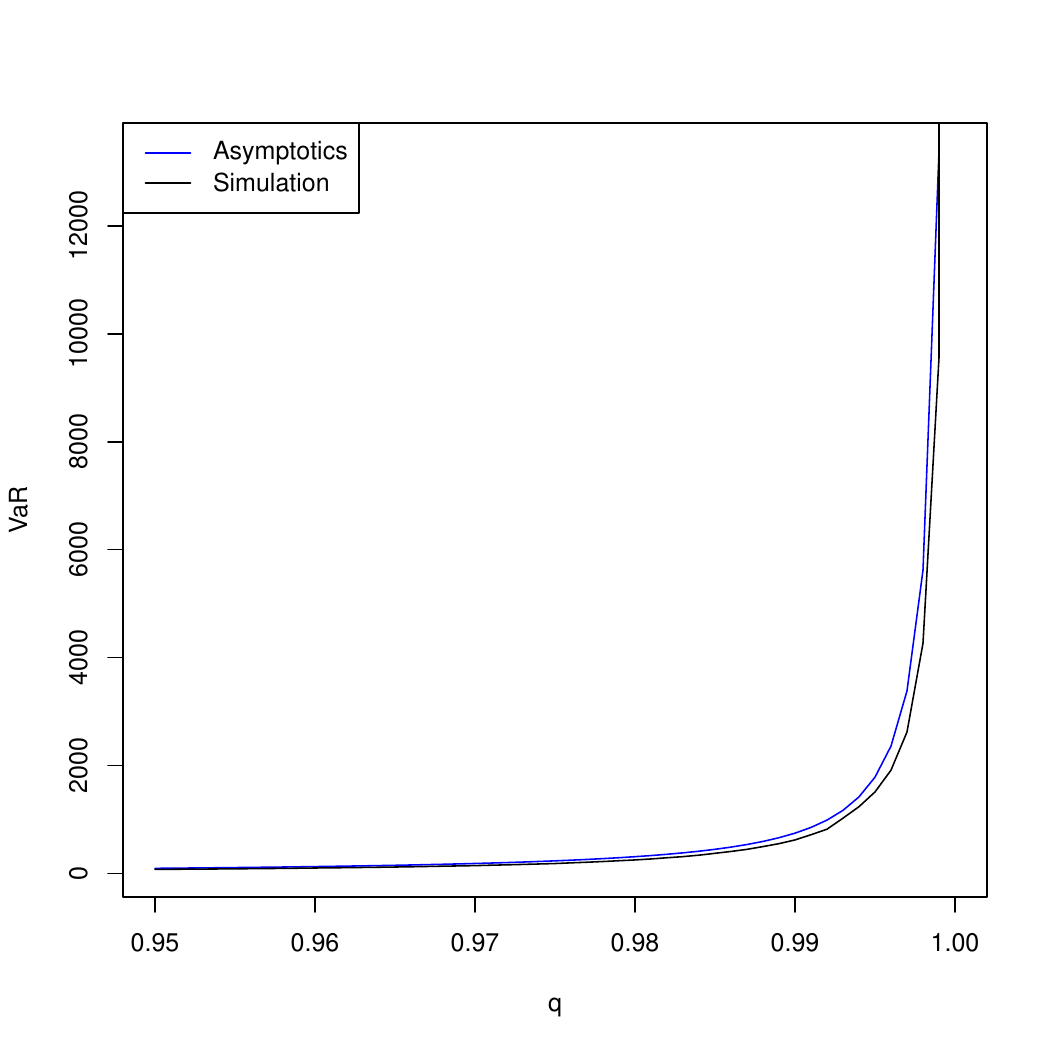}
	\caption{VaR with $\alpha=0.8$ }
	\label{fig6}
\end{subfigure}
\begin{subfigure}{0.5\textwidth}
	\centering
	\includegraphics[width=0.7\textwidth]{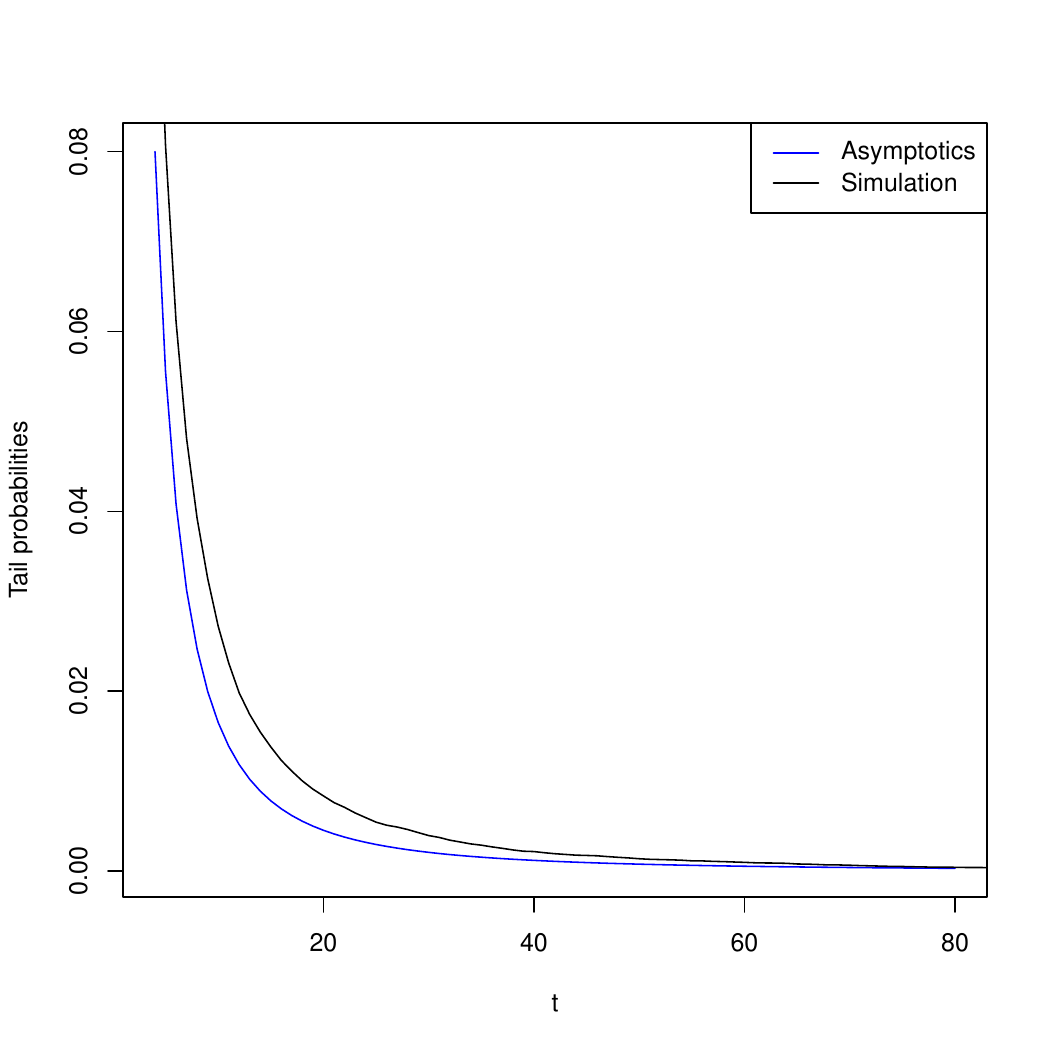}
	\caption{Tail probability with $\alpha=2$}
	\label{fig7}
\end{subfigure}
\begin{subfigure}{0.5\textwidth}	
	\centering	
	\includegraphics[width=0.7\textwidth]{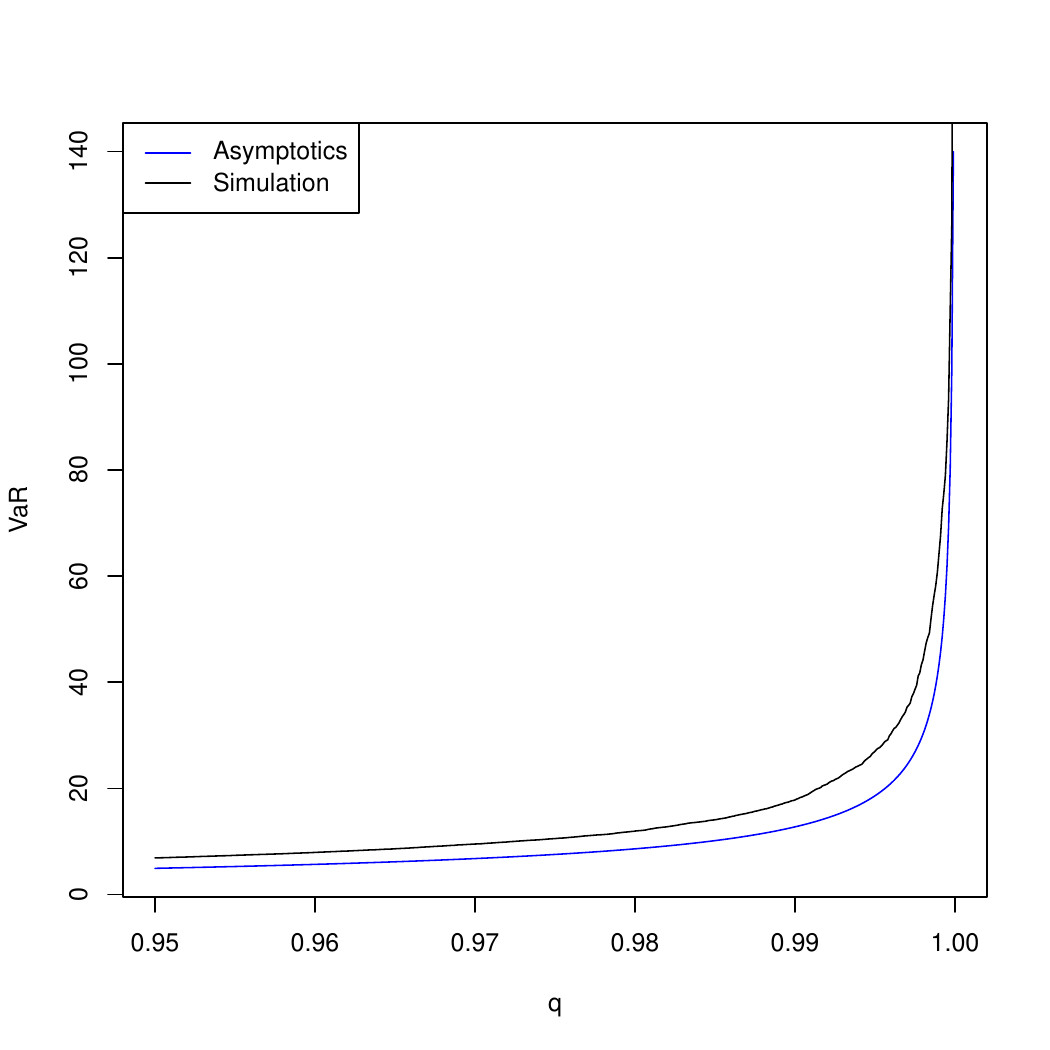}
	\caption{VaR with $\alpha=2$ }
	\label{fig8}
\end{subfigure}
\caption{Tail probabilities and VaR under Gumbel copula with $\phi=10$.}%
\label{f2}%
\end{figure}\bigskip

\section{Proofs\label{proof}}

In this section, we introduce a few lemmas that are the key steps in the
proofs of main results, which are the important characterizations of survival
copulas as well.

For notational simplicity, we define for any $0<\delta<1/2$,
\begin{equation}
D(\delta,t)=\int_{\delta}^{1/2}\left(  \tau_{v}\left(  \left(  1-y\right)
^{-\alpha},y^{-\alpha}\right)  -\tau_{v}\left(  1,y^{-\alpha}\right)  \right)
dF(ty), \label{D_delta_t}%
\end{equation}
{where} $\tau_{v}(u,v)=\frac{\partial}{\partial v}\tau(u,v)$, and denote%
\begin{equation}
\eta(\delta)=\alpha \int_{\delta}^{1/2}\left(  \tau_{v}\left(  \left(
1-y\right)  ^{-\alpha},y^{-\alpha}\right)  -\tau_{v}\left(  1,y^{-\alpha
}\right)  \right)  y^{-\alpha-1}dy. \label{eta}%
\end{equation}
When $\lim_{\delta \downarrow0}\eta(\delta)$ exists, for simplicity we denote
$\lim_{\delta \downarrow0}\eta(\delta)$ by $\eta$.

Starting from the point-wise convergence in (\ref{assumption on tail order}),
{the following two lemmas extend the locally uniformly convergence of
univariate RV function to the bivariate RV function.}

\begin{lemma}
\label{unifA2}Under Assumption \ref{A2}, for any compact set $B\subset
\lbrack0,\infty)^{2}$,%
\[
\lim_{t\downarrow0}\sup_{(u,v)\in B}\left \vert \frac{\widehat{C}%
(ut,vt)}{t^{\kappa}\ell(t)}-\tau(u,v)\right \vert =0.
\]

\end{lemma}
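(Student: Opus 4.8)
The plan is to upgrade the pointwise convergence in \eqref{assumption on tail order} to locally uniform convergence by exploiting monotonicity, exactly as in the classical proof that a pointwise-convergent sequence of monotone functions converging to a continuous limit converges uniformly on compacta (Dini/P\'olya-type arguments). First I would note that for fixed $t$, the function $(u,v)\mapsto \widehat{C}(ut,vt)/(t^{\kappa}\ell(t))$ is coordinatewise nondecreasing in $u$ and in $v$, since $\widehat{C}$ is a survival copula and hence 2-increasing with nondecreasing margins; correspondingly the limit $\tau(u,v)$ is coordinatewise nondecreasing, and by Assumption \ref{A2} it is continuous (indeed continuously differentiable) on $[0,\infty)^2$, with $\tau(0,v)=\tau(u,0)=0$.

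The key step is then a grid argument. Fix a compact set $B\subset[0,\infty)^2$, say $B\subseteq[0,M]^2$, and fix $\varepsilon>0$. Using uniform continuity of $\tau$ on $[0,M]^2$, choose a finite grid $0=a_0<a_1<\dots<a_N=M$ fine enough that $\tau(a_{i},a_{j})-\tau(a_{i-1},a_{j-1})<\varepsilon$ for all neighboring grid cells. By the pointwise convergence \eqref{assumption on tail order} at the finitely many grid points, there is $t_0>0$ so that for all $0<t<t_0$ and all grid points $(a_i,a_j)$ we have $|\widehat{C}(a_i t, a_j t)/(t^{\kappa}\ell(t))-\tau(a_i,a_j)|<\varepsilon$. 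For an arbitrary $(u,v)\in B$, pick the grid cell with $a_{i-1}\le u\le a_i$ and $a_{j-1}\le v\le v_j$; by monotonicity in both arguments,
\[
\frac{\widehat{C}(a_{i-1}t,a_{j-1}t)}{t^{\kappa}\ell(t)}\le \frac{\widehat{C}(ut,vt)}{t^{\kappa}\ell(t)}\le \frac{\widehat{C}(a_{i}t,a_{j}t)}{t^{\kappa}\ell(t)},
\]
and both outer terms lie within $\varepsilon$ of $\tau(a_{i-1},a_{j-1})$ and $\tau(a_i,a_j)$ respectively, which in turn are within $\varepsilon$ of $\tau(u,v)$ by the grid choice and monotonicity of $\tau$. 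Chaining these bounds gives $|\widehat{C}(ut,vt)/(t^{\kappa}\ell(t))-\tau(u,v)|\le 3\varepsilon$ uniformly over $(u,v)\in B$ for all $0<t<t_0$, which is the claim since $\varepsilon$ was arbitrary.

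The main point requiring care is the monotonicity claim: I must make sure that the normalization $t^{\kappa}\ell(t)$ is a genuine positive scalar independent of $(u,v)$ (it is, by definition), so that the sandwich inequality is simply the coordinatewise monotonicity of $\widehat{C}(\cdot,\cdot)$ transported through a fixed positive factor. A minor subtlety is handling the boundary of $[0,\infty)^2$, but the convention $\tau(0,\cdot)=\tau(\cdot,0)=0$ matching $\widehat{C}(0,\cdot)=\widehat{C}(\cdot,0)=0$ makes the grid argument go through with $a_0=0$ included. I do not expect any serious obstacle here; the lemma is a routine two-dimensional analogue of the uniform-convergence theorem for monotone functions, and the only real input beyond pointwise convergence is the monotonicity of $\widehat{C}$ and the continuity of $\tau$, both guaranteed by Assumption \ref{A2}.
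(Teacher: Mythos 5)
Your proof is correct. It rests on exactly the same two ingredients the paper uses, namely the coordinatewise monotonicity of $\widehat{C}$ (hence of the rescaled functions and of $\tau$) and the continuity of $\tau$, but the execution is genuinely different: you run a direct, quantitative P\'olya-type argument, reducing uniformity on $B\subseteq[0,M]^2$ to pointwise convergence at finitely many grid points and sandwiching an arbitrary $(u,v)$ between the corners of its grid cell, whereas the paper argues by contradiction, extracting a convergent subsequence $(u_{n_k},v_{n_k})\to(u_0,v_0)$ via Bolzano--Weierstrass and then using the same monotone sandwich around $(u_0\pm\delta,v_0\pm\delta)$ to contradict the pointwise limit \eqref{assumption on tail order}. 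Your version is arguably more elementary and constructive (it produces an explicit $t_0$ for each $\varepsilon$ once the grid is fixed), while the paper's version avoids setting up a grid at the cost of a compactness/subsequence argument; both handle the boundary identically through $\tau(0,\cdot)=\tau(\cdot,0)=0$. The only blemishes are cosmetic: the typo $a_{j-1}\le v\le v_j$ (should be $a_j$), and the constant $3\varepsilon$ where a careful chaining gives $2\varepsilon$ --- neither affects validity.
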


\begin{proof}
Fix a compact set $B$. We prove the lemma by contradiction. Suppose the
assertion in the lemma does not hold. Without loss of generality, there exist
some $\varepsilon_{0}>0$ and sequences $\{t_{n}\}$, $\{u_{n}\}$ and
$\{v_{n}\}$ such that
\begin{equation}
\frac{\widehat{C}\left(  u_{n}t_{n},v_{n}t_{n}\right)  }{t_{n}^{\kappa}%
\ell(t_{n})}-\tau(u_{n},v_{n})>\varepsilon_{0}, \label{ineq1}%
\end{equation}
where $t_{n}$ converges to $0$ as $n\rightarrow \infty$. Since the sequences
$\{u_{n}\}$ and $\{v_{n}\}$ are bounded, by the Bolzano--Weierstrass Theorem
there exists a convergent subsequences $\{u_{n_{k}}\}$ and $\{v_{n_{k}}\}$
with $u_{n_{k}}\rightarrow u_{0}$, $v_{n_{k}}\rightarrow v_{0}$ as
$n_{k}\rightarrow \infty$. We choose $0<\varepsilon<\varepsilon_{0}/2$ and
$\delta>0$ such that when $n_{k}$ is large $u_{0}-\delta<u_{n_{k}}%
<u_{0}+\delta$, $v_{0}-\delta<v_{n_{k}}<v_{0}+\delta$ (if $u_{0}=0$ or
$v_{0}=0$, then we replace the previous inequality by $0<u_{n_{k}}<\delta$ or
$0<v_{n_{k}}<\delta$), and
\[
\tau(u_{0}+\delta,v_{0}+\delta)-\tau(u_{0}-\delta,v_{0}-\delta)<\varepsilon.
\]
Note that $\widehat{C}(u,v)$ and $\tau(u,v)$ are both nondecreasing in $u$ and
$v$. Then using (\ref{ineq1}), and the monotonicity of $\widehat{C}$ and
$\tau$, we have%
\begin{align*}
\varepsilon_{0}  &  <\frac{\widehat{C}\left(  u_{n_{k}}t_{n_{k}},v_{n_{k}%
}t_{n_{k}}\right)  }{t_{n_{k}}^{\kappa}\ell(t_{n_{k}})}-\tau(u_{n_{k}%
},v_{n_{k}})\\
&  \leq \frac{\widehat{C}\left(  \left(  u_{0}+\delta \right)  t_{n_{k}}%
,(v_{0}+\delta)t_{n_{k}}\right)  }{t_{n_{k}}^{\kappa}\ell(t_{n_{k}})}%
-\tau(u_{0}-\delta,v_{0}-\delta)\\
&  \leq \frac{\widehat{C}\left(  \left(  u_{0}+\delta \right)  t_{n_{k}}%
,(v_{0}+\delta)t_{n_{k}}\right)  }{t_{n_{k}}^{\kappa}\ell(t_{n_{k}})}%
-\tau(u_{0}+\delta,v_{0}+\delta)+\varepsilon.
\end{align*}
Thus,%
\[
\lim_{n_{k}\rightarrow \infty}\left(  \frac{\widehat{C}\left(  \left(
u_{0}+\delta \right)  t_{n_{k}},(v_{0}+\delta)t_{n_{k}}\right)  }{t_{n_{k}%
}^{\kappa}\ell(t_{n_{k}})}-\tau(u_{0}+\delta,v_{0}+\delta)\right)
>\frac{\varepsilon_{0}}{2},
\]
which contradicts to (\ref{assumption on tail order}). This completes the
proof.\bigskip
\end{proof}

\begin{lemma}
{\label{lemma: copula convergence}Under Assumptions \ref{A1} and\textbf{\ }%
}\ref{A2}{, for any compact set \footnote{A compact set $B\subset \left(
0,\infty \right)  ^{2}$ is bounded away from $0$.} $B\subset(0,\infty)^{2}$, we
have}%

\[
\lim_{t\rightarrow \infty}\sup_{(x,y)\in B}\left \vert \frac{\widehat{C}\left(
\overline{F}(xt),\overline{F}(yt)\right)  }{\left(  \overline{F}(t)\right)
^{\kappa}\ell(\overline{F}(t))}-\tau(x^{-\alpha},y^{-\alpha})\right \vert =0.
\]

\end{lemma}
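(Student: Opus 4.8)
The plan is to derive this bivariate regular-variation statement for $\widehat{C}$ evaluated at $\left(\overline{F}(xt),\overline{F}(yt)\right)$ by composing the locally uniform convergence already obtained in Lemma \ref{unifA2} with the locally uniform (Uniform Convergence Theorem for regularly varying functions) behavior of the marginal tail $\overline{F}\in\mathrm{RV}_{-\alpha}$ from Assumption \ref{A1}. First I would fix the compact set $B\subset(0,\infty)^2$, which (being bounded away from $0$ and from $\infty$) lies in some box $[a,b]^2$ with $0<a<b<\infty$. For $(x,y)\in B$ and $t$ large, write the change of variables
\[
\frac{\widehat{C}\left(\overline{F}(xt),\overline{F}(yt)\right)}{\left(\overline{F}(t)\right)^{\kappa}\ell(\overline{F}(t))}
=\frac{\widehat{C}\left(s\,u_t(x),\,s\,v_t(y)\right)}{s^{\kappa}\ell(s)},
\]
where $s=\overline{F}(t)\downarrow 0$ as $t\to\infty$, and $u_t(x)=\overline{F}(xt)/\overline{F}(t)$, $v_t(y)=\overline{F}(yt)/\overline{F}(t)$. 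By the Uniform Convergence Theorem for $\overline{F}\in\mathrm{RV}_{-\alpha}$, we have $u_t(x)\to x^{-\alpha}$ and $v_t(y)\to y^{-\alpha}$ uniformly for $(x,y)\in B$; moreover these ratios stay in a fixed compact subset $B'\subset(0,\infty)^2$ for all large $t$ (using $a\le x,y\le b$ and the monotonicity of $\overline{F}$, so $u_t(x)\in[\,\overline{F}(bt)/\overline{F}(t),\,\overline{F}(at)/\overline{F}(t)\,]$, which converges to $[b^{-\alpha},a^{-\alpha}]$).

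Next I would combine the two convergences. Write
\[
\left|\frac{\widehat{C}\left(s u_t(x),s v_t(y)\right)}{s^{\kappa}\ell(s)}-\tau(x^{-\alpha},y^{-\alpha})\right|
\le \left|\frac{\widehat{C}\left(s u_t(x),s v_t(y)\right)}{s^{\kappa}\ell(s)}-\tau\left(u_t(x),v_t(y)\right)\right|
+\left|\tau\left(u_t(x),v_t(y)\right)-\tau(x^{-\alpha},y^{-\alpha})\right|.
\]
The first term is bounded by $\sup_{(u,v)\in B'}\left|\frac{\widehat{C}(su,sv)}{s^{\kappa}\ell(s)}-\tau(u,v)\right|$, which tends to $0$ as $s\downarrow 0$ (equivalently $t\to\infty$) by Lemma \ref{unifA2} applied to the compact set $B'$. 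The second term tends to $0$ uniformly on $B$ because $\tau$ is continuous (indeed continuously differentiable by Assumption \ref{A2}), hence uniformly continuous on the compact set $\overline{B'\cup\{(x^{-\alpha},y^{-\alpha}):(x,y)\in B\}}$, and $(u_t(x),v_t(y))\to(x^{-\alpha},y^{-\alpha})$ uniformly on $B$. Taking the supremum over $(x,y)\in B$ and letting $t\to\infty$ gives the claim.

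The only real subtlety — and the step I would be most careful about — is the passage from "$\overline{F}(xt)/\overline{F}(t)\to x^{-\alpha}$ pointwise" to a genuinely uniform statement that also controls the \emph{argument} of $\widehat{C}$ inside a fixed compact set. This needs the Uniform Convergence Theorem for regular variation (valid on compacts bounded away from $0$ and $\infty$, which is exactly the hypothesis on $B$), and it needs the observation that Lemma \ref{unifA2} must be invoked on a compact set $B'$ large enough to contain $(u_t(x),v_t(y))$ for all sufficiently large $t$, not merely on the limiting set $\{(x^{-\alpha},y^{-\alpha})\}$; the monotonicity of $\overline{F}$ makes the enclosing box explicit. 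Everything else is the standard $\varepsilon$-splitting between "copula converges uniformly" and "$\tau$ is uniformly continuous," so no heavy computation is required.
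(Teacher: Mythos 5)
Your proposal is correct and follows essentially the same route as the paper: both reduce the claim to Lemma \ref{unifA2} applied on a compact set large enough to contain the ratios $\overline{F}(xt)/\overline{F}(t)$ and $\overline{F}(yt)/\overline{F}(t)$ for all large $t$, combined with the uniform continuity of $\tau$. The only cosmetic difference is that you control these ratios via the Uniform Convergence Theorem and a triangle inequality, whereas the paper uses Potter's bounds together with the monotonicity of $\widehat{C}$ and $\tau$ to sandwich the copula term; both devices deliver the same uniform control.
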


\begin{proof}
Fix the compact set $B$. Since $\overline{F}\in \text{RV}_{-\alpha}$ with
$\alpha>0$, by Potter's bound (e.g. Proposition B.1.9.5 of de Haan and
Ferreira \cite{de Haan-Ferreira-2007}), for any $\varepsilon,\delta>0$ there
exists $t_{0}=t_{0}(\varepsilon,\delta)>0$ such that for any $t>t_{0}$ and
$(x,y)\in B$ we have
\begin{equation}
(1-\varepsilon)\left(  x^{-\alpha+\delta}\wedge x^{-\alpha-\delta}\right)
\leq \frac{\overline{F}(xt)}{\overline{F}(t)}\leq(1+\varepsilon)\left(
x^{-\alpha+\delta}\vee x^{-\alpha-\delta}\right)  , \label{cgw-06-26-01}%
\end{equation}
and the same inequality by replacing $x$ with $y$. By the monotonicity of the
survival copula, it follows from Lemma \ref{unifA2} and the second inequality
in (\ref{cgw-06-26-01}) that
\begin{align*}
&  \lim_{t\rightarrow \infty}\sup_{(x,y)\in B}\left(  \frac{\widehat
{C}(\overline{F}(xt),\overline{F}(yt))}{\left(  \overline{F}(t)\right)
^{\kappa}\ell(\overline{F}(t))}-\tau(x^{-\alpha},y^{-\alpha})\right) \\
&  \leq \lim_{\varepsilon,\delta \rightarrow0}\lim_{t\rightarrow \infty}%
\sup_{(x,y)\in B}\left \vert \frac{\widehat{C}\left(  (1+\varepsilon)\left(
x^{-\alpha+\delta}\vee x^{-\alpha-\delta}\right)  \overline{F}%
(t),(1+\varepsilon)\left(  y^{-\alpha+\delta}\vee y^{-\alpha-\delta}\right)
\overline{F}(t)\right)  }{\left(  \overline{F}(t)\right)  ^{\kappa}\ell \left(
\overline{F}(t)\right)  }\right. \\
&  \left.  -\tau \left(  (1+\varepsilon)\left(  x^{-\alpha+\delta}\vee
x^{-\alpha-\delta}\right)  ,(1+\varepsilon)\left(  y^{-\alpha+\delta}\vee
y^{-\alpha-\delta}\right)  \right)  \right \vert \\
&  +\lim_{\varepsilon,\delta \rightarrow0}\sup_{(x,y)\in B}\left \vert
\tau \left(  (1+\varepsilon)\left(  x^{-\alpha+\delta}\vee x^{-\alpha-\delta
}\right)  ,(1+\varepsilon)\left(  y^{-\alpha+\delta}\vee y^{-\alpha-\delta
}\right)  \right)  -\tau \left(  x^{-\alpha},y^{-\alpha}\right)  \right \vert \\
&  =0,
\end{align*}
where the last step is due to (\ref{assumption on tail order}) and
that{\ $\tau(\cdot)$ is uniform continuous on the compact set $B$.} Using the
first inequality in \eqref{cgw-06-26-01}, we can similarly prove
\[
\lim_{t\rightarrow \infty}\sup_{(x,y)\in B}\left(  \tau(x^{-\alpha},y^{-\alpha
})-\frac{\widehat{C}(\overline{F}(xt),\overline{F}(yt))}{(\overline
{F}(t))^{\kappa}\ell(\overline{F}(t))}\right)  =0.
\]
Thus, the desired result follows.\bigskip
\end{proof}

{The differentiability of $\widehat{C}$ and }$\tau(u,v)$ in {Assumption
}\ref{A2} {is needed to study the convergence of partial derivatives of the
survival copula as in the next lemma.}

\begin{lemma}
{\label{lemma: partial derivative of copula regular varying}Under Assumption
}\ref{A2}, for any compact set $B\subset(0,\infty)^{2}$, it holds that{\
\[
\lim_{t\downarrow0}\sup_{(u,v)\in B}\left \vert \frac{\widehat{C}_{v}%
(ut,vt)}{t^{\kappa-1}\ell(t)}-\tau_{v}(u,v)\right \vert =0.
\]
}
\end{lemma}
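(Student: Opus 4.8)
The plan is to convert the pointwise convergence in \eqref{assumption on tail order} into locally uniform convergence of the partial derivatives by exploiting the monotonicity structure carefully bundled into Assumption \ref{A2}, together with a standard device for upgrading convergence of monotone functions to convergence of their derivatives. Concretely, fix a compact set $B\subset(0,\infty)^{2}$ and pick $0<a<b<\infty$ with $B\subset[a,b]^{2}$. For fixed $v>0$ define $\phi_{t}(u)=\widehat{C}(ut,vt)/(t^{\kappa}\ell(t))$; by Lemma \ref{unifA2} this converges, uniformly on compacts, to $\tau(\cdot,v)$. The derivative in $u$ is $\phi_{t}'(u)=t\,\widehat{C}_{u}(ut,vt)/(t^{\kappa}\ell(t))=\widehat{C}_{u}(ut,vt)/(t^{\kappa-1}\ell(t))$, and by the symmetry of $\widehat C$ this is the same type of object as the $v$-derivative we want to control. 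Since $\widehat{C}_{u}(u,v)$ is nonincreasing in $u$ (Assumption \ref{A2}), each $\phi_{t}'$ is a nonincreasing function of $u$; and $\tau_{v}$ (equivalently $\tau_{u}$ after using symmetry of $\tau$, which follows from that of $\widehat C$) is continuous on $B$. The classical fact is: if a sequence of monotone functions converges pointwise to a continuous limit, and the derivatives exist and are themselves monotone, then the derivatives converge locally uniformly to the derivative of the limit. I would invoke exactly this, with the monotonicity supplied by Assumption \ref{A2}.

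The key steps, in order, are as follows. First, reduce to a one-variable statement: by symmetry it suffices to prove $\widehat{C}_{v}(ut,vt)/(t^{\kappa-1}\ell(t))\to\tau_{v}(u,v)$ uniformly on $B$, and by a diagonal/compactness argument it suffices to prove, for each fixed $v_{0}$, uniform convergence in $u$ on a neighborhood, and then patch. Second, for fixed $u_{0},v_{0}$ and small $h>0$, use the mean value theorem on $\phi_{t}$ to write
\[
\frac{\phi_{t}(u_{0}+h)-\phi_{t}(u_{0})}{h}=\phi_{t}'(\xi_{t})=\frac{\widehat{C}_{u}(\xi_{t}t,v_{0}t)}{t^{\kappa-1}\ell(t)}
\]
for some $\xi_{t}\in(u_{0},u_{0}+h)$. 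Third, exploit monotonicity of $\phi_{t}'$ in its argument to sandwich the quantity of interest: since $\phi_{t}'$ is nonincreasing,
\[
\frac{\phi_{t}(u_{0}+h)-\phi_{t}(u_{0})}{h}\le\phi_{t}'(u_{0})\le\frac{\phi_{t}(u_{0})-\phi_{t}(u_{0}-h)}{h}.
\]
Fourth, let $t\downarrow0$ using Lemma \ref{unifA2} to replace the difference quotients by $(\tau(u_{0}+h,v_{0})-\tau(u_{0},v_{0}))/h$ and $(\tau(u_{0},v_{0})-\tau(u_{0}-h,v_{0}))/h$ respectively; these pinch to $\tau_{u}(u_{0},v_{0})$ as $h\downarrow0$ by differentiability of $\tau$. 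This yields pointwise convergence of $\phi_{t}'(u_{0})$ to $\tau_{u}(u_{0},v_{0})$; the uniformity over $B$ then follows because the limit $\tau_{u}(\cdot,\cdot)$ is (uniformly) continuous on the compact $B$ and each approximant is monotone, so Dini-type/Pólya's-theorem reasoning upgrades pointwise to uniform (the same contradiction-via-Bolzano--Weierstrass scheme used in the proof of Lemma \ref{unifA2} works verbatim once pointwise convergence and monotonicity are in hand).

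The main obstacle I anticipate is controlling the two variables simultaneously: Lemma \ref{unifA2} gives a clean two-dimensional statement, but the difference-quotient argument naturally only freezes one coordinate at a time, and when I slide $u$ I must keep the $v$-coordinate scaling as $vt$, not $v_{0}t$, so the mean value theorem has to be applied to $u\mapsto\widehat{C}(ut,vt)$ with $v$ ranging over the compact set too. The cleanest fix is to carry the sandwich bound uniformly in $(u,v)\in B$ from the outset: for each $(u,v)\in B$ write the two difference quotients with step $h$ small enough that $[u-h,u+h]\times\{v\}$ stays in a slightly enlarged compact $B'\subset(0,\infty)^{2}$, apply Lemma \ref{unifA2} to $B'$ to make the difference quotients uniformly close to the corresponding $\tau$-difference quotients, and then use uniform continuity of $\tau_{u}$ on $B'$ to absorb the error as $h\downarrow0$. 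A secondary (minor) point is justifying that $\tau$ inherits symmetry from $\widehat C$ and that $\tau$ is nondecreasing in each argument — both are immediate from \eqref{assumption on tail order} and the corresponding properties of $\widehat C$, exactly as noted in the proof of Lemma \ref{unifA2}. Once these are settled the argument is routine, so I expect the write-up to be short.
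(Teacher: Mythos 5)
Your proposal is correct and follows essentially the same route as the paper: both convert the uniform convergence of $\widehat{C}(ut,vt)/(t^{\kappa}\ell(t))$ into convergence of difference quotients, use the monotonicity of the partial derivative from Assumption \ref{A2} to sandwich $\widehat{C}_{v}(ut,vt)/(t^{\kappa-1}\ell(t))$ between those quotients, pinch via differentiability of $\tau$, and then upgrade pointwise to locally uniform convergence by the same monotonicity-plus-continuous-limit argument used for Lemma \ref{unifA2}. Your write-up is in fact slightly more explicit than the paper's (two-sided sandwich rather than a one-sided mean value step), but the underlying idea is identical.
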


\begin{proof}
Note that $\widehat{C}_{v}(u,v)$ and $\tau_{v}(u,v)$ are nondecreasing in $u$
and nonincreasing in $v$. Because of such monotonicity, we only need to show
the pointwise convergence
\begin{equation}
\lim_{t\downarrow0}\frac{\widehat{C}_{v}(ut,vt)}{t^{\kappa-1}\ell(t)}=\tau
_{v}(u,v). \label{pwc}%
\end{equation}
The locally uniform convergence can be showed similarly as Lemma \ref{unifA2}.

Since both $\widehat{C}(u,v)$ and $\tau(u,v)$ are differentiable, for any
$u,v\geq0$ by the mean value theorem, we have for any $\varepsilon>0$ there
exists $\xi_{1},\xi_{2}\in \lbrack v,v+\varepsilon]$ such that%
\[
\widehat{C}(ut,vt+\varepsilon t)-\widehat{C}(ut,vt)=\widehat{C}_{v}(ut,\xi
_{1}t)\varepsilon t
\]
and%
\[
\tau(u,v+\varepsilon)-\tau(u,v)=\tau_{v}(u,\xi_{2})\varepsilon.
\]
Then as $t\downarrow0$, we have%
\[
\frac{\widehat{C}(ut,vt+\varepsilon t)-\widehat{C}(ut,vt)}{t^{\kappa}\ell
(t)}\rightarrow \tau(u,v+\varepsilon)-\tau(u,v),
\]
or equivalently, dividing $\varepsilon$ on both sides leads%
\begin{align*}
\frac{\left(  \widehat{C}(ut,vt+\varepsilon t)-\widehat{C}(ut,vt)\right)
/(\varepsilon t)}{t^{\kappa-1}\ell(t)}  &  =\frac{\widehat{C}_{v}(ut,\xi
_{1}t)}{t^{\kappa-1}\ell(t)}\\
&  \rightarrow \frac{\tau(u,v+\varepsilon)-\tau(u,v)}{\varepsilon}=\tau
_{v}(u,\xi_{2}).
\end{align*}
By the arbitrariness of $\varepsilon$, the pointwise convergence (\ref{pwc})
holds.\bigskip
\end{proof}

By noting that $\widehat{C}_{v}\left(  u,v\right)  $ is nondecreasing in $u$
and nonincreasing in $v$, similar to the proof of Lemma
\ref{lemma: copula convergence} we obtain the following corollary.

\begin{corollary}
{\label{coro: partial derivative of copula regular varying-1}Under the
conditions of Lemma \ref{lemma: copula convergence}, it holds for any compact
set }$B\subset(0,\infty)^{2}$ that%
\[
\lim_{t\rightarrow \infty}\sup_{(x,y)\in B}\left \vert \frac{\widehat{C}%
_{v}(\overline{F}(xt),\overline{F}(yt))}{(\overline{F}(t))^{\kappa-1}%
\ell(\overline{F}(t))}-\tau_{v}(x^{-\alpha},y^{-\alpha})\right \vert =0.
\]

\end{corollary}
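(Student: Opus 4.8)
The plan is to mimic the proof of Lemma \ref{lemma: copula convergence}, now using Lemma \ref{lemma: partial derivative of copula regular varying} in place of Lemma \ref{unifA2}, but taking care of the fact that $\widehat{C}_v(u,v)$ is nondecreasing in $u$ and \emph{non}increasing in $v$. First I would abbreviate $s=\overline{F}(t)$, which tends to $0$ as $t\to\infty$, and invoke Potter's bound (Proposition B.1.9.5 of de Haan and Ferreira \cite{de Haan-Ferreira-2007}): for fixed $\varepsilon\in(0,1)$ and $\delta>0$ there is $t_0$ so that for all $t>t_0$ and $(x,y)\in B$,
\[
\underline{a}(x):=(1-\varepsilon)(x^{-\alpha+\delta}\wedge x^{-\alpha-\delta})\le\frac{\overline{F}(xt)}{s}\le(1+\varepsilon)(x^{-\alpha+\delta}\vee x^{-\alpha-\delta})=:\overline{a}(x),
\]
and likewise with $x$ replaced by $y$.

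For the upper estimate I would then use monotonicity of $\widehat{C}_v$ in the correct slots: since $\widehat{C}_v$ increases in its first argument and decreases in its second argument,
\[
\widehat{C}_v(\overline{F}(xt),\overline{F}(yt))\le\widehat{C}_v(\overline{a}(x)s,\underline{a}(y)s),
\]
so that, after dividing by $s^{\kappa-1}\ell(s)$ and subtracting $\tau_v(x^{-\alpha},y^{-\alpha})$, the supremum over $B$ of the resulting quantity is at most
\[
\sup_{(x,y)\in B}\left|\frac{\widehat{C}_v(\overline{a}(x)s,\underline{a}(y)s)}{s^{\kappa-1}\ell(s)}-\tau_v(\overline{a}(x),\underline{a}(y))\right|+\sup_{(x,y)\in B}\left|\tau_v(\overline{a}(x),\underline{a}(y))-\tau_v(x^{-\alpha},y^{-\alpha})\right|.
\]
The set $\{(\overline{a}(x),\underline{a}(y)):(x,y)\in B\}$ is a compact subset of $(0,\infty)^2$ (continuous image of the compact $B$, bounded away from $0$ because $B$ is and $1-\varepsilon>0$), so the first supremum tends to $0$ as $t\to\infty$ by Lemma \ref{lemma: partial derivative of copula regular varying}. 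Letting $\varepsilon,\delta\downarrow0$ afterwards, $\overline{a}(x)\to x^{-\alpha}$ and $\underline{a}(y)\to y^{-\alpha}$ uniformly on $B$, and since $\tau_v$ is continuous (as $\tau$ is continuously differentiable by Assumption \ref{A2}) hence uniformly continuous on a compact neighborhood of $\{(x^{-\alpha},y^{-\alpha}):(x,y)\in B\}$, the second supremum tends to $0$. This yields $\limsup_{t\to\infty}\sup_{(x,y)\in B}(\cdots)\le0$.

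The lower estimate is symmetric: reversing the Potter substitution gives $\widehat{C}_v(\overline{F}(xt),\overline{F}(yt))\ge\widehat{C}_v(\underline{a}(x)s,\overline{a}(y)s)$, and the same two-term bound (with the roles of the inner/outer Potter factors exchanged) gives $\liminf_{t\to\infty}\inf_{(x,y)\in B}(\cdots)\ge0$. Combining the two inequalities finishes the proof. The only point requiring care — and essentially the whole content of the ``similar to the proof of Lemma \ref{lemma: copula convergence}'' remark — is to get the direction of the Potter-bound substitution right in the second argument, where $\widehat{C}_v$ is decreasing rather than increasing; everything else is a routine repetition of the earlier sandwiching argument.
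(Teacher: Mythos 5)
Your proposal is correct and follows essentially the same route the paper intends: the paper proves this corollary by remarking that $\widehat{C}_{v}(u,v)$ is nondecreasing in $u$ and nonincreasing in $v$ and then repeating the Potter-bound sandwiching argument of Lemma \ref{lemma: copula convergence} with Lemma \ref{lemma: partial derivative of copula regular varying} supplying the locally uniform convergence. Your explicit handling of the reversed monotonicity in the second argument is exactly the point the paper leaves implicit.
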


Next we show the homogeneity of $\tau_{v}$ and $\varphi$.

\begin{lemma}
\label{homo}Under Assumption \ref{A2}, $\tau_{u}(u,v)$ and $\tau_{v}(u,v)$ are
homogenous functions of degree $\kappa-1$. Under Assumption \ref{A3},
$\varphi(u,v)=u^{\theta}\varphi(1,v)$.
\end{lemma}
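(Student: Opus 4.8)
The plan is to derive the homogeneity of $\tau_u$ and $\tau_v$ directly from the known homogeneity of $\tau$ itself, and then to obtain the multiplicative structure of $\varphi$ by a similar scaling argument applied to the defining limit (\ref{part1}). First I would establish that $\tau$ is homogeneous of degree $\kappa$. Fix $s>0$ and $u,v\geq 0$. In the defining limit (\ref{assumption on tail order}), replace $t$ by $st$: since $\ell$ is slowly varying, $\ell(st)/\ell(t)\to 1$ as $t\downarrow 0$, so
\[
\tau(su,sv)=\lim_{t\downarrow 0}\frac{\widehat{C}(sut,svt)}{t^{\kappa}\ell(t)}
=\lim_{t\downarrow 0}\frac{\widehat{C}(u(st),v(st))}{(st)^{\kappa}\ell(st)}\cdot\frac{s^{\kappa}\ell(st)}{\ell(t)}
=s^{\kappa}\tau(u,v).
\]
Thus $\tau(su,sv)=s^{\kappa}\tau(u,v)$ for all $s>0$.

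Next I would differentiate this identity. Since $\tau$ is continuously differentiable on $[0,\infty)^2$ by Assumption \ref{A2}, I can differentiate $\tau(su,sv)=s^{\kappa}\tau(u,v)$ with respect to $u$ (holding $s,v$ fixed), obtaining $s\,\tau_u(su,sv)=s^{\kappa}\tau_u(u,v)$, i.e. $\tau_u(su,sv)=s^{\kappa-1}\tau_u(u,v)$; differentiating instead with respect to $v$ gives $\tau_v(su,sv)=s^{\kappa-1}\tau_v(u,v)$. Hence both partials are homogeneous of degree $\kappa-1$, as claimed.

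For the statement on $\varphi$, I would run the analogous scaling argument on (\ref{part1}). Fix $u>0$, $0<v\leq 1$, $s>0$, and replace $t$ by $st$ in the limit defining $\varphi(u,v)$. Note first that $\widehat{C}_v(ust,v)=\widehat{C}_v((us)t,v)$, so by (\ref{part1}) applied with the argument $us$ in place of $u$,
\[
\lim_{t\downarrow 0}\frac{\widehat{C}_v(ust,v)}{t^{\theta}h(t)}=\varphi(us,v).
\]
On the other hand, writing the left-hand side as $\dfrac{\widehat{C}_v(u(st),v)}{(st)^{\theta}h(st)}\cdot\dfrac{s^{\theta}h(st)}{h(t)}$ and using that $h$ is slowly varying (so $h(st)/h(t)\to1$), the limit equals $s^{\theta}\varphi(u,v)$. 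Therefore $\varphi(us,v)=s^{\theta}\varphi(u,v)$ for all $s>0$; taking $s=u$ and using $\varphi(1,v)$ yields $\varphi(u,v)=u^{\theta}\varphi(1,v)$.

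The only genuine subtlety is that the lemma asserts the $\varphi$ identity "under Assumption \ref{A3}" whereas the limit (\ref{part1}) is introduced in Assumption \ref{A4}; I would read this as Assumptions \ref{A3}--\ref{A4} being in force (Assumption \ref{A3} guarantees the regular-variation behavior of $\widehat{C}_v(ut,v)$ in the first argument that makes the limit in (\ref{part1}) nondegenerate and locally uniform), and the scaling computation above is otherwise routine. The main thing to be careful about is the interchange of the limit in $t$ with the substitution $t\mapsto st$, which is justified precisely because $\ell$ (resp. $h$) is slowly varying; no compactness or uniform-convergence input beyond the pointwise limits is needed for the homogeneity identities themselves.
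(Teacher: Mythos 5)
Your proof is correct. For the second claim (on $\varphi$) your argument is essentially identical to the paper's: the paper proves the homogeneity for $\tau_v$ by the substitution $t\mapsto st$ in the defining limit together with slow variation of $\ell$, and then says the statement for $\varphi$ ``follows in the same way,'' which is exactly your computation with $h$ in place of $\ell$ (and, like you, one should read the hypothesis as Assumption~4, where \eqref{part1} actually lives; the reference to Assumption~3 in the lemma statement is evidently a slip). For the first claim your route differs in a small but genuine way. The paper works directly at the level of the partial derivative: it first establishes (in its Lemma on the convergence of partial derivatives, via the mean value theorem) that $\widehat{C}_v(ut,vt)/(t^{\kappa-1}\ell(t))\to\tau_v(u,v)$, and then applies the same $t\mapsto st$ scaling to \emph{that} limit to get $\tau_v(su,sv)=s^{\kappa-1}\tau_v(u,v)$. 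You instead scale the original limit \eqref{assumption on tail order} to get $\tau(su,sv)=s^{\kappa}\tau(u,v)$ and then differentiate this identity in $u$ and $v$, which is legitimate because Assumption~2 makes $\tau$ continuously differentiable. Your version is slightly more self-contained for this lemma, since it does not presuppose that the scaled partial derivatives of $\widehat{C}$ converge to $\tau_u,\tau_v$; the paper's version reuses machinery it needs elsewhere anyway. Both are valid and give the same conclusion.
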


\begin{proof}
It suffices to show the result for $\tau_{v}(u,v)$ since the rest follows in
the same way. For any $s,u>0$ and $0<v\leq1$, it follows from the definition
that
\begin{align*}
\tau_{v}(su,sv)  &  =\lim_{t\downarrow0}\frac{\widehat{C}_{v}(sut,svt)}%
{t^{\kappa-1}\ell(t)}\\
&  =\lim_{t\downarrow0}\frac{\widehat{C}_{v}(sut,svt)}{(st)^{\kappa-1}%
\ell(st)}\frac{(st)^{\kappa-1}\ell(st)}{t^{\kappa-1}\ell(t)}\\
&  =s^{\kappa-1}\tau_{v}(u,v),
\end{align*}
by which the proof is complete.\bigskip
\end{proof}

The following lemma and remark explain the range of $\theta$ in Assumption
\ref{A4}.

\begin{lemma}
\label{Taylor}Assume that $\widehat{C}$ is twice differentiable with
$\widehat{C}_{vu}\left(  u,v\right)  =\frac{\partial}{\partial u}\widehat
{C}_{v}\left(  u,v\right)  $ satisfying $\widehat{C}_{vu}\left(  0,v\right)
\not =0$ for any $0\leq v\leq1$. Then for $u>0$ and $0<v<1$, it holds that
\[
\lim_{t\downarrow0}\frac{\widehat{C}_{v}\left(  ut,v\right)  }{t}=u\widehat
{C}_{vu}\left(  0,v\right)  .
\]

\end{lemma}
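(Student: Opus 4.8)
The plan is to recognize this as a Taylor-expansion statement: since $\widehat{C}$ is twice differentiable, $\widehat{C}_v(\cdot,v)$ is itself differentiable in its first argument, so near the left endpoint $u=0$ we should be able to write $\widehat{C}_v(ut,v) = \widehat{C}_v(0,v) + \widehat{C}_{vu}(\xi,v)\cdot ut$ for some intermediate point $\xi$ between $0$ and $ut$. First I would observe that $\widehat{C}_v(0,v) = 0$: this follows because $\widehat{C}(0,v) = 0$ for all $v$ (a copula vanishes when one argument is $0$), hence the function $v\mapsto\widehat{C}(0,v)$ is identically zero and its derivative $\widehat{C}_v(0,v)$ is $0$. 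I would note this carefully since it is the reason the expansion has no constant term and collapses to a single linear term.

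Next I would apply the mean value theorem to the map $s\mapsto\widehat{C}_v(s,v)$ on the interval $[0,ut]$, yielding a point $\xi = \xi(t,u,v)\in(0,ut)$ with
\[
\widehat{C}_v(ut,v) - \widehat{C}_v(0,v) = \widehat{C}_{vu}(\xi,v)\,ut.
\]
Combining this with $\widehat{C}_v(0,v)=0$ gives $\widehat{C}_v(ut,v)/t = u\,\widehat{C}_{vu}(\xi,v)$. Then I would let $t\downarrow 0$: since $0<\xi<ut\to 0$, we have $\xi\to 0$, and by continuity of $\widehat{C}_{vu}$ at $(0,v)$ — which is part of the hypothesis that $\widehat{C}$ is twice differentiable on $[0,1]^2$, so the second partials are continuous there — we conclude $\widehat{C}_{vu}(\xi,v)\to\widehat{C}_{vu}(0,v)$, hence $\widehat{C}_v(ut,v)/t\to u\,\widehat{C}_{vu}(0,v)$, which is the claim.

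The main obstacle, and the point I would be most careful about, is the exact regularity being assumed at the boundary: the mean value theorem argument needs $\widehat{C}_v(\cdot,v)$ differentiable on a one-sided neighborhood of $0$, and the passage to the limit needs $\widehat{C}_{vu}$ continuous (at least along sequences approaching $(0,v)$). Both are delivered by the blanket assumption in Assumption \ref{A2} that $\widehat{C}$ is twice differentiable on the \emph{closed} square $[0,1]^2$, interpreted with one-sided derivatives at the edges; I would state explicitly that this is the sense in which the hypothesis is used. A secondary, very minor point is that for $u=0$ the statement is trivially $0=0$, and the hypothesis $\widehat{C}_{vu}(0,v)\neq 0$ is not actually needed for the limit to hold — it is presumably included only to guarantee the limit is nonzero, which matters for identifying $\theta=1$ in Assumption \ref{A4} rather than $\theta>1$; I would remark on this so the reader sees why the nondegeneracy clause appears.
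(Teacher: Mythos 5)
Your proposal is correct and follows essentially the same route as the paper: a first-order expansion of $\widehat{C}_v(\cdot,v)$ at $0$ combined with the observation that $\widehat{C}_v(0,v)=0$ (the paper gets this from the conditional-probability interpretation of $\widehat{C}_v$, you get it from $\widehat{C}(0,\cdot)\equiv 0$; the paper uses Taylor's theorem with a Peano remainder where you use the mean value theorem plus continuity of $\widehat{C}_{vu}$, but these are minor variants of the same argument). Your closing remark that $\widehat{C}_{vu}(0,v)\neq 0$ is only needed to make the limit nonzero, i.e.\ to identify $\theta=1$, matches the paper's own Remark following the lemma.
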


\begin{proof}
For some $u$ close to $0$, applying Taylor's theorem yields
\begin{equation}
\widehat{C}_{v}\left(  u,v\right)  =\widehat{C}_{v}\left(  0,v\right)
+\widehat{C}_{vu}\left(  0,v\right)  u+o\left(  u\right)  . \label{expansion}%
\end{equation}
Noting that $\widehat{C}_{v}\left(  u,v\right)  =\Pr \left(  X>F^{\leftarrow
}(1-u)|Y=F^{\leftarrow}(1-v)\right)  $, which yields $\widehat{C}_{v}\left(
0,v\right)  =0$, we have that
\[
\lim_{u\downarrow0}\frac{\widehat{C}_{v}\left(  u,v\right)  }{u}=\widehat
{C}_{vu}\left(  0,v\right)  .
\]
This ends the proof.\bigskip
\end{proof}

\begin{remark}
\label{remark}The parameter $\theta$ in Assumption \ref{A4} is at least $1$,
which can be seen from the proof of Lemma \ref{Taylor}. Since $\widehat{C}$ is
twice differentiable, the second term $\widehat{C}_{vu}\left(  0,v\right)  $
on the right-hand side of (\ref{expansion}) is well defined. This shows that
if $\widehat{C}_{vu}\left(  0,v\right)  \not =0$, then $\theta=1$; otherwise,
$\theta>1$.

\end{remark}

The following lemma presents an asymptotic expansion for $D(\delta,t)$ defined
in \eqref{D_delta_t}.

\begin{lemma}
\label{D1D2}Under Assumptions \ref{A1} and \ref{A2}, for $\delta \leq y\leq
1/2$, if $\mu \{y\in \lbrack \delta,\frac{1}{2}]:\tau_{v}\left(  \left(
1-y\right)  ^{-\alpha},y^{-\alpha}\right)  >\tau_{v}\left(  1,y^{-\alpha
}\right)  \}>0$ where $\mu$ is the Lebesgue measure, then%
\[
\lim_{t\rightarrow \infty}\frac{D(\delta,t)}{\overline{F}\left(  t\right)
}=\eta(\delta),
\]
where $\eta(\delta)$ is given by (\ref{eta}).
\end{lemma}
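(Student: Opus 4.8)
The plan is to recognize $D(\delta,t)$ as an integral of a fixed, $t$-independent continuous function against the measure $dF(ty)$ on the compact interval $[\delta,1/2]$, and then to apply a change of variables together with the regular variation of $\overline{F}$ (i.e.\ Assumption \ref{A1}). First I would set
\[
\psi(y)=\tau_{v}\!\left((1-y)^{-\alpha},y^{-\alpha}\right)-\tau_{v}\!\left(1,y^{-\alpha}\right),
\]
which is continuous on $[\delta,1/2]$ because $\tau$ is continuously differentiable on $[0,\infty)^2$ by Assumption \ref{A2} and the arguments $(1-y)^{-\alpha}$, $y^{-\alpha}$, $1$ all stay in a compact subset of $(0,\infty)$ for $y\in[\delta,1/2]$; moreover $\psi\ge 0$ there by the stochastic increasingness ($\widehat C_v$ nonincreasing in $v$, passed to $\tau_v$ via Lemma \ref{lemma: partial derivative of copula regular varying}), so $D(\delta,t)=\int_\delta^{1/2}\psi(y)\,dF(ty)\ge 0$.

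Next I would make the substitution to move $t$ out of the integrand. Writing $dF(ty)$ and integrating by parts, or equivalently using $\int_\delta^{1/2}\psi(y)\,dF(ty)=\int_{t\delta}^{t/2}\psi(s/t)\,dF(s)$ and then applying the uniform convergence afforded by Karamata's theorem / Potter bounds, one gets that
\[
\frac{D(\delta,t)}{\overline F(t)}=\int_\delta^{1/2}\psi(y)\,\frac{dF(ty)}{\overline F(t)}
\longrightarrow \int_\delta^{1/2}\psi(y)\,\alpha y^{-\alpha-1}\,dy,
\]
since $\overline F\in\mathrm{RV}_{-\alpha}$ implies $\overline F(ty)/\overline F(t)\to y^{-\alpha}$ locally uniformly in $y\in[\delta,1/2]$, and hence the measures $dF(ty)/\overline F(t)$ converge weakly on $[\delta,1/2]$ to the measure with density $\alpha y^{-\alpha-1}$ (the derivative of $-y^{-\alpha}$). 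The limit is exactly $\eta(\delta)$ as defined in \eqref{eta}. The positivity hypothesis $\mu\{y\in[\delta,\tfrac12]:\psi(y)>0\}>0$ guarantees $\eta(\delta)>0$, so the ratio has a genuine (nonzero, finite) limit; the finiteness is automatic because $\psi$ is bounded on the compact set $[\delta,1/2]$ and $\int_\delta^{1/2}\alpha y^{-\alpha-1}\,dy<\infty$.

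The main obstacle is justifying the passage to the limit inside the integral rigorously: $dF(ty)$ is a measure depending on $t$, so one cannot simply invoke pointwise convergence of a density. I would handle this by integration by parts, writing $D(\delta,t)=\big[-\psi(y)\overline F(ty)\big]_\delta^{1/2}+\int_\delta^{1/2}\overline F(ty)\,\psi'(y)\,dy$ (valid since $\psi$ is $C^1$ on $[\delta,1/2]$, $\psi'$ being continuous by the $C^1$-smoothness of $\tau$), dividing by $\overline F(t)$, and then using the locally uniform convergence $\overline F(ty)/\overline F(t)\to y^{-\alpha}$ from Assumption \ref{A1} together with dominated convergence (a Potter bound gives a uniform integrable majorant on $[\delta,1/2]$). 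This yields $\lim_{t\to\infty} D(\delta,t)/\overline F(t)=\big[-\psi(y)y^{-\alpha}\big]_\delta^{1/2}+\int_\delta^{1/2}y^{-\alpha}\psi'(y)\,dy$, and integrating the last term by parts back recovers $\int_\delta^{1/2}\psi(y)\,\alpha y^{-\alpha-1}\,dy=\eta(\delta)$, completing the proof.
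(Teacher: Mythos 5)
Your main argument is correct and is essentially the paper's: both rest on the observation that the normalized measures $\left(\overline{F}(t)\right)^{-1}dF(ty)$ converge vaguely/weakly on the compact interval $[\delta,1/2]$ to $\alpha y^{-\alpha-1}dy$, and then test this convergence against the continuous bounded integrand $\psi$. Two small corrections are in order. First, the nonnegativity $\psi\geq 0$ does not follow from the stochastic-increasingness condition (that $\widehat{C}_{v}$ is nonincreasing in $v$): the two terms being compared share the same second argument $y^{-\alpha}$ and differ only in the first, so what you actually need is that $\widehat{C}_{v}(u,v)$ --- and hence $\tau_{v}(u,v)$ --- is nondecreasing in $u$, which is the basic $2$-increasing property of copulas (Corollary 2.2.6 of Nelsen), combined with $(1-y)^{-\alpha}\geq 1$ on $[\delta,1/2]$. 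Second, your backup integration-by-parts computation requires $\psi\in C^{1}$, i.e.\ that $\tau_{v}$ itself be continuously differentiable, whereas Assumption \ref{A2} only gives $\tau\in C^{1}$, so $\psi$ is merely continuous; you should therefore rely on the weak-convergence-of-measures argument you state first (or integrate by parts in the Riemann--Stieltjes sense against the monotone function $y\mapsto\overline{F}(ty)$), which is exactly what the paper does.
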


\begin{proof}
By Corollary 2.2.6. of Nelson \cite{Nelsen-2006}, $\widehat{C}_{v}\left(
u,v\right)  $ is nondecreasing in $u$. As a result, $\tau_{v}\left(
u,v\right)  $ is nondecreasing in $u$. Since $\left(  1-y\right)  ^{-\alpha
}>1$ for $\delta \leq y\leq1/2$, it follows that $\tau_{v}\left(  \left(
1-y\right)  ^{-\alpha},y^{-\alpha}\right)  \geq \tau_{v}\left(  1,y^{-\alpha
}\right)  $.

If $\mu \{y\in \lbrack \delta,\frac{1}{2}]:\tau_{v}\left(  \left(  1-y\right)
^{-\alpha},y^{-\alpha}\right)  >\tau_{v}\left(  1,y^{-\alpha}\right)  \}>0$,
then $D(\delta,t)>0$. Since $\tau(u,v)$ is differentiable on $[0,\infty)^{2}$,
for any $0<\delta<1/2$ it leads to
\[
\int_{\delta}^{1/2}\left(  \tau_{v}\left(  \left(  1-y\right)  ^{-\alpha
},y^{-\alpha}\right)  -\tau_{v}\left(  1,y^{-\alpha}\right)  \right)
y^{-\alpha-1}dy<\infty.
\]
Note that the measure $\left(  \overline{F}(t)\right)  ^{-1}dF(ty)$ converges
vaguely on $(0,\infty)$ to $\alpha y^{-\alpha-1}dy$. Thus,
\[
D(\delta,t)\sim \frac{\alpha}{\overline{F}(t)}\int_{\delta}^{1/2}\left(
\tau_{v}\left(  \left(  1-y\right)  ^{-\alpha},y^{-\alpha}\right)  -\tau
_{v}\left(  1,y^{-\alpha}\right)  \right)  y^{-\alpha-1}dy.
\]
This ends the proof.\bigskip
\end{proof}

As one can see from the proof of the above Lemma \ref{D1D2}, if $\mu
\{y\in \lbrack \delta,\frac{1}{2}]:\tau_{v}\left(  \left(  1-y\right)
^{-\alpha},y^{-\alpha}\right)  >\tau_{v}\left(  1,y^{-\alpha}\right)  \}=0$,
then $D(\delta,t)=0$.

The following lemma is a key step in the proof of Theorem \ref{TH1-P-D1}.

\begin{lemma}
\label{case1}Under the conditions of Theorem \ref{TH1-P-D1}, we have
\[
\lim_{t\rightarrow \infty}\frac{\Pr \left(  X+Y>t,X\leq t,Y\leq \frac{t}%
{2}\right)  }{\left(  \overline{F}(t)\right)  ^{\kappa}\ell \left(
\overline{F}(t)\right)  }=\eta,
\]
where $\eta=\lim_{\delta \downarrow0}\eta(\delta)$.
\end{lemma}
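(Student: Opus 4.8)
The plan is to realize the event $\{X+Y>t,\,X\le t,\,Y\le t/2\}$ as $\{t-Y<X\le t,\,0\le Y\le t/2\}$, condition on $Y$, and rescale. Since $\widehat C$ is differentiable, the conditional survival function of $X$ given $Y=y$ is $\Pr(X>x\mid Y=y)=\widehat C_v(\overline F(x),\overline F(y))$ for $F$-a.e.\ $y$, so conditioning on $Y$ and substituting $y\mapsto ty$ should give
\[
N(t):=\Pr\!\Big(X+Y>t,\,X\le t,\,Y\le\tfrac t2\Big)=\int_0^{1/2}\Big[\widehat C_v\big(\overline F(t(1-y)),\overline F(ty)\big)-\widehat C_v\big(\overline F(t),\overline F(ty)\big)\Big]dF(ty).
\]
Then I would fix $0<\delta<1/2$ and split $N(t)=I_\delta(t)+J_\delta(t)$ with $I_\delta(t)=\int_\delta^{1/2}[\cdots]\,dF(ty)$ and $J_\delta(t)=\int_0^\delta[\cdots]\,dF(ty)$, sending $\delta\downarrow0$ only at the very end.

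For $I_\delta(t)$ the argument is essentially that of Lemma~\ref{D1D2}. On $[\delta,1/2]$ both $(1-y,y)$ and $(1,y)$ range over compact subsets of $(0,\infty)^2$, so Corollary~\ref{coro: partial derivative of copula regular varying-1} gives, uniformly in $y\in[\delta,1/2]$,
\[
\frac{\widehat C_v(\overline F(t(1-y)),\overline F(ty))}{(\overline F(t))^{\kappa-1}\ell(\overline F(t))}\to\tau_v\big((1-y)^{-\alpha},y^{-\alpha}\big),\qquad \frac{\widehat C_v(\overline F(t),\overline F(ty))}{(\overline F(t))^{\kappa-1}\ell(\overline F(t))}\to\tau_v\big(1,y^{-\alpha}\big).
\]
Dividing $I_\delta(t)$ by $(\overline F(t))^\kappa\ell(\overline F(t))$, using vague convergence of $(\overline F(t))^{-1}dF(ty)$ to $\alpha y^{-\alpha-1}dy$ on $(0,\infty)$, and noting that $(\overline F(t))^{-1}\int_\delta^{1/2}dF(ty)\to\delta^{-\alpha}-2^{\alpha}$ stays bounded (so the uniform error term is absorbed), I expect
\[
\lim_{t\to\infty}\frac{I_\delta(t)}{(\overline F(t))^\kappa\ell(\overline F(t))}=\alpha\int_\delta^{1/2}\Big(\tau_v\big((1-y)^{-\alpha},y^{-\alpha}\big)-\tau_v\big(1,y^{-\alpha}\big)\Big)y^{-\alpha-1}dy=\eta(\delta),
\]
where $\tau_v$ nondecreasing in its first argument and $(1-y)^{-\alpha}\ge1$ make the integrand nonnegative, so the degenerate case $\eta(\delta)=0$ is included.

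The term $J_\delta(t)$ is the crux and the step I expect to be the main obstacle. Because $t(1-y)\le t$ gives $\overline F(t(1-y))\ge\overline F(t)$ and $\widehat C_v$ is nondecreasing in its first argument, the integrand of $J_\delta(t)$ is nonnegative. Writing $\overline F(t(1-y))=\overline F(t)\,(1+u(t,y))$ with $u(t,y):=\overline F(t(1-y))/\overline F(t)-1\ge0$, regular variation of $\overline F$ should make $\sup_{0<y<\delta}u(t,y)$ finite and, for $\delta$ small, uniformly small once $t$ is large, so Assumption~\ref{A3} (with the small first argument $\overline F(t)$ and perturbation $u(t,y)$) should give
\[
\widehat C_v\big(\overline F(t(1-y)),\overline F(ty)\big)-\widehat C_v\big(\overline F(t),\overline F(ty)\big)\ \lesssim\ c\,u(t,y)\,\widehat C_v\big(\overline F(t),\overline F(ty)\big).
\]
A regular variation estimate then bounds $u(t,y)$ on $(0,\delta)$ by a multiple of $y$ (using $u(t,0)=0$ and $\overline F(t(1-y))/\overline F(t)\to(1-y)^{-\alpha}$), reducing $J_\delta(t)$ to a constant multiple of $\int_0^\delta y\,\widehat C_v(\overline F(t),\overline F(ty))\,dF(ty)$, whence condition~\eqref{A5} forces $\limsup_{t\to\infty}J_\delta(t)/\big((\overline F(t))^\kappa\ell(\overline F(t))\big)\to0$ as $\delta\downarrow0$. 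The hard part will be the control of $u(t,y)$ near $y=0$: convergence of $\overline F(ts)/\overline F(t)$ to $s^{-\alpha}$ is only locally uniform and carries no intrinsic rate, and the second argument $\overline F(ty)$ tends to $1$ as $y\downarrow0$, so this estimate must be weighed carefully against the factor $y$ to land inside \eqref{A5}.

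Finally, combining the two pieces, for every $\delta\in(0,1/2)$,
\[
\limsup_{t\to\infty}\left|\frac{N(t)}{(\overline F(t))^\kappa\ell(\overline F(t))}-\eta(\delta)\right|\le\limsup_{t\to\infty}\frac{J_\delta(t)}{(\overline F(t))^\kappa\ell(\overline F(t))}=:R(\delta),
\]
and since $\eta(\delta)$ increases to $\eta$ as $\delta\downarrow0$ (the limit finite by the standing hypothesis $\eta<\infty$) while $R(\delta)\to0$, letting $\delta\downarrow0$ yields $\lim_{t\to\infty}N(t)/\big((\overline F(t))^\kappa\ell(\overline F(t))\big)=\eta$, which is the assertion.
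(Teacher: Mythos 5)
Your proposal follows essentially the same route as the paper's proof: the same conditioning and rewriting via $\widehat C_v$, the same split at $\delta$, uniform convergence from Corollary \ref{coro: partial derivative of copula regular varying-1} together with vague convergence of $(\overline F(t))^{-1}dF(ty)$ for the main piece, and Assumption \ref{A3} combined with the Potter-type estimate $\overline F(t(1-y))/\overline F(t)-1\lesssim \alpha y$ and condition \eqref{A5} to kill the piece near $y=0$. The uniformity issue you flag for $u(t,y)$ near $y=0$ is handled in the paper exactly as you anticipate (via the first-order expansion of $(1-y)^{-\alpha}$ on $[0,\delta]$), so the argument is correct and matches the paper's.
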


\begin{proof}
Note that for any $0<\delta<1/2$,
\begin{align*}
&  \Pr \left(  X+Y>t,X\leq t,Y\leq \frac{t}{2}\right) \\
&  =\int_{0}^{1/2}\Pr \left(  \left.  t-ty<X<t\right \vert Y=ty\right)  dF(ty)\\
&  =\left(  \int_{\delta}^{1/2}+\int_{0}^{\delta}\right)  \Pr \left(  \left.
X>t-ty\right \vert Y=ty\right)  -\Pr \left(  \left.  X>t\right \vert Y=ty\right)
dF(ty)\\
&  =\left(  \int_{\delta}^{1/2}+\int_{0}^{\delta}\right)  \widehat{C}%
_{v}\left(  \overline{F}(t(1-y)),\overline{F}\left(  ty\right)  \right)
-\widehat{C}_{v}\left(  \overline{F}(t),\overline{F}\left(  ty\right)
\right)  dF(ty)\\
&  =I_{1}+I_{2},
\end{align*}
where the third step is due to the rewriting of $\Pr \left(  \left.
X>x\right \vert Y=y\right)  =\widehat{C}_{v}\left(  \overline{F}(x),\overline
{F}\left(  y\right)  \right)  $. We first consider term $I_{1}$. By Corollary
\ref{coro: partial derivative of copula regular varying-1}, for any
$\varepsilon>0$, there exists $t_{0}>0$ such that for $t>t_{0}$ we have
\[
\sup_{\delta<y\leq \frac{1}{2}}\left \vert \frac{\widehat{C}_{v}\left(
\overline{F}(t(1-y)),\overline{F}\left(  ty\right)  \right)  }{\left(
\overline{F}(t)\right)  ^{\kappa-1}\ell \left(  \overline{F}(t)\right)  }%
-\tau_{v}\left(  \left(  1-y\right)  ^{-\alpha},y^{-\alpha}\right)
\right \vert <\varepsilon,
\]
and%
\[
\sup_{\delta<y\leq \frac{1}{2}}\left \vert \frac{\widehat{C}_{v}\left(
\overline{F}(t),\overline{F}\left(  ty\right)  \right)  }{\left(  \overline
{F}(t)\right)  ^{\kappa-1}\ell \left(  \overline{F}(t)\right)  }-\tau
_{v}\left(  1,y^{-\alpha}\right)  \right \vert <\varepsilon.
\]
Then,
\begin{align*}
\frac{I_{1}}{\left(  \overline{F}(t)\right)  ^{\kappa-1}\ell \left(
\overline{F}(t)\right)  }  &  =\int_{\delta}^{1/2}\frac{\widehat{C}_{v}\left(
\overline{F}(t(1-y)),\overline{F}\left(  ty\right)  \right)  }{\left(
\overline{F}(t)\right)  ^{\kappa-1}\ell \left(  \overline{F}(t)\right)  }%
-\tau_{v}\left(  \left(  1-y\right)  ^{-\alpha},y^{-\alpha}\right)  dF(ty)\\
&  -\int_{\delta}^{1/2}\frac{\widehat{C}_{v}\left(  \overline{F}%
(t),\overline{F}\left(  ty\right)  \right)  }{\left(  \overline{F}(t)\right)
^{\kappa-1}\ell \left(  \overline{F}(t)\right)  }-\tau_{v}\left(  1,y^{-\alpha
}\right)  dF(ty)\\
&  +\int_{\delta}^{1/2}\tau_{v}\left(  \left(  1-y\right)  ^{-\alpha
},y^{-\alpha}\right)  -\tau_{v}\left(  1,y^{-\alpha}\right)  dF(ty)\\
&  \leq2\varepsilon \left(  \overline{F}(t\delta)-\overline{F}(t/2)\right)
+D(\delta,t).
\end{align*}
Since $D(\delta,t)\not =0$ for $0<\delta<1/2$, by the arbitrariness of
$\varepsilon$, we have that%
\[
\frac{I_{1}}{\left(  \overline{F}(t)\right)  ^{\kappa-1}\ell \left(
\overline{F}(t)\right)  }\lesssim D(\delta,t).
\]
The other side of the inequality can be obtained similarly. Thus,
\[
I_{1}\sim \eta(\delta)\left(  \overline{F}(t)\right)  ^{\kappa}\ell \left(
\overline{F}(t)\right)  .
\]
Since $\eta<\infty$, we have by Lemma \ref{D1D2}%
\[
\lim_{\delta \downarrow0}\lim_{t\rightarrow \infty}\frac{I_{1}}{\left(
\overline{F}(t)\right)  ^{\kappa}\ell \left(  \overline{F}(t)\right)  }=\eta.
\]

It remains to show that
\[
\lim_{\delta \downarrow0}\lim_{t\rightarrow \infty}\frac{I_{2}}{\left(
\overline{F}(t)\right)  ^{\kappa}\ell \left(  \overline{F}(t)\right)  }=0.
\]
First, by $\overline{F}\in \mathrm{RV}_{-\alpha}$ and Taylor's expansion, we
have for large $t$ and any $0\leq y\leq \delta t$,%
\[
\frac{\overline{F}(t-y)}{\overline{F}(t)}\sim \left(  1-\frac{y}{t}\right)
^{-\alpha}\sim1+\alpha \frac{y}{t}.
\]
Then by Assumption \ref{A3}, for any $\delta>0$ and all large $t$%
\begin{equation}
\frac{\widehat{C}_{v}\left(  \overline{F}(t-y),\overline{F}\left(  y\right)
\right)  }{\widehat{C}_{v}\left(  \overline{F}(t),\overline{F}\left(
y\right)  \right)  }-1=\frac{\widehat{C}_{v}\left(  \frac{\overline{F}%
(t-y)}{\overline{F}(t)}\overline{F}(t),\overline{F}\left(  y\right)  \right)
}{\widehat{C}_{v}\left(  \overline{F}(t),\overline{F}\left(  y\right)
\right)  }-1\lesssim c\frac{y}{t} \label{combine AS}%
\end{equation}
Then by applying (\ref{combine AS}) we have%
\begin{align*}
\lim_{\delta \downarrow0}\frac{I_{2}}{\left(  \overline{F}(t)\right)  ^{\kappa
}\ell \left(  \overline{F}(t)\right)  }  &  =\lim_{\delta \downarrow0}\int
_{0}^{\delta t}\left(  \frac{\widehat{C}_{v}\left(  \overline{F}%
(t-y),\overline{F}\left(  y\right)  \right)  }{\widehat{C}_{v}\left(
\overline{F}(t),\overline{F}\left(  y\right)  \right)  }-1\right)
\frac{\widehat{C}_{v}\left(  \overline{F}(t),\overline{F}\left(  y\right)
\right)  }{\left(  \overline{F}(t)\right)  ^{\kappa}\ell \left(  \overline
{F}(t)\right)  }dF(y)\\
&  \lesssim \lim_{\delta \downarrow0}c\int_{0}^{\delta}\frac{y}{t}\frac
{\widehat{C}_{v}\left(  \overline{F}(t),\overline{F}\left(  y\right)  \right)
}{\left(  \overline{F}(t)\right)  ^{\kappa}\ell \left(  \overline{F}(t)\right)
}dF(y)\\
&  \rightarrow0,
\end{align*}
where the last step is due to (\ref{A5}). This completes the proof.\bigskip
\end{proof}

The next lemma is needed in the proof of Theorem \ref{Thm2}.

\begin{lemma}
\label{term2}Under the conditions of Theorem \ref{Thm2}, we have as
$t\rightarrow \infty$,
\[
\Pr \left(  X+Y>t,X\leq t,Y\leq \frac{t}{2}\right)  =\left(  \overline
{F}(t)\right)  ^{\theta}h\left(  \overline{F}(t)\right)  \int_{0}^{1/2}\left(
(1-y)^{-\alpha \theta}-1\right)  \varphi(1,\overline{F}\left(  ty\right)
)dF(ty)(1+o(1)).
\]

\end{lemma}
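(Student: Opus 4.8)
The plan is to reprise the decomposition used in the proof of Lemma \ref{case1}, but now tracking the finer-scale term that was negligible there. Conditioning on $Y=ty$ and using $\Pr(X>x\mid Y=y)=\widehat{C}_v(\overline F(x),\overline F(y))$, write
\[
\Pr\left(X+Y>t,X\le t,Y\le\tfrac t2\right)=\int_0^{1/2}\left(\widehat C_v\left(\overline F(t(1-y)),\overline F(ty)\right)-\widehat C_v\left(\overline F(t),\overline F(ty)\right)\right)dF(ty).
\]
For the integrand, first factor out $\widehat C_v(\overline F(t),\overline F(ty))$ and then further rescale the first argument via the homogeneity $\varphi(u,v)=u^\theta\varphi(1,v)$ from Lemma \ref{homo} together with the defining convergence \eqref{part1} of Assumption \ref{A4}. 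Since $\overline F\in\mathrm{RV}_{-\alpha}$ gives $\overline F(t(1-y))/\overline F(t)\to(1-y)^{-\alpha}$, and since $\widehat C_v(ut,v)/(t^\theta h(t))\to\varphi(u,v)=u^\theta\varphi(1,v)$, the ratio
\[
\frac{\widehat C_v\left(\overline F(t(1-y)),\overline F(ty)\right)}{\widehat C_v\left(\overline F(t),\overline F(ty)\right)}
\]
should converge to $\left((1-y)^{-\alpha}\right)^{\theta}=(1-y)^{-\alpha\theta}$. Subtracting $1$ produces the factor $(1-y)^{-\alpha\theta}-1$, and what remains is $\widehat C_v(\overline F(t),\overline F(ty))$, which by \eqref{part1} is asymptotically $\left(\overline F(t)\right)^\theta h(\overline F(t))\,\varphi(1,\overline F(ty))$. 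Integrating against $dF(ty)$ then yields the stated expression.

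The main obstacle is justifying the passage to the limit uniformly enough in $y$ over $(0,1/2]$ — in particular near $y=0$ (where $\overline F(ty)\to 1$ and the regular-variation bounds degenerate) and near $y=1/2$. For the region $y\in[\delta,1/2]$ I would invoke a locally-uniform version of \eqref{part1} (obtained by the same monotonicity-plus-Bolzano–Weierstrass argument as in Lemma \ref{unifA2}, using that $\widehat C_v(u,v)$ is nondecreasing in $u$) to replace the integrand by $\left((1-y)^{-\alpha\theta}-1\right)\widehat C_v(\overline F(t),\overline F(ty))$ up to $o\bigl(\left(\overline F(t)\right)^\theta h(\overline F(t))\bigr)$, and then use the second-argument convergence to bring in $\varphi(1,\overline F(ty))$. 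For the region $y\in(0,\delta)$ I would use the domination hypothesis $\widehat C_v(t,v)/(t^\theta h(t))\le g(v)$ with $\int_0^1 g<\infty$ (available under the conditions of Theorem \ref{Thm2}), together with the Potter-type bound on $\overline F(t(1-y))/\overline F(t)$ and the monotonicity of $\widehat C_v$ in its first argument, to bound the contribution of $\int_0^{\delta}$ by a constant multiple of $\int_0^{\delta} y\,\varphi(1,\overline F(ty))\,dF(ty)\cdot h(\overline F(t))$ (after changing variables), which is $o(1)$ relative to the claimed order by letting $\delta\downarrow0$; here one checks that the full integral $\int_0^{1/2}\left((1-y)^{-\alpha\theta}-1\right)\varphi(1,\overline F(ty))dF(ty)$ has the same order as its truncation, so removing a small-$y$ piece is harmless. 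Dominated convergence (with dominating function built from $g$ and the Potter bounds) then closes the argument.

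A secondary technical point is the behavior of $(1-y)^{-\alpha\theta}-1$ near $y=1/2$: this factor stays bounded there, so no issue arises, but one must make sure the rescaling of the first argument of $\widehat C_v$ is valid uniformly on any compact subset of $(0,1/2]$, which is exactly what the locally-uniform form of \eqref{part1} provides. Assembling the $[\delta,1/2]$ estimate and the $(0,\delta)$ estimate and sending $\delta\downarrow0$ in the appropriate order (inner limit $t\to\infty$, then $\delta\downarrow0$) gives the claimed asymptotic equivalence.
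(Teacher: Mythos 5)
Your opening decomposition (conditioning on $Y=ty$ and writing the integrand as $\widehat{C}_{v}(\overline{F}(t(1-y)),\overline{F}(ty))-\widehat{C}_{v}(\overline{F}(t),\overline{F}(ty))$) matches the paper's, and using the homogeneity $\varphi(u,v)=u^{\theta}\varphi(1,v)$ to produce the factor $(1-y)^{-\alpha\theta}-1$ is the right idea. But the architecture of your limiting argument is inverted: in the regime of Theorem \ref{Thm2} (where $\alpha(1-\beta)\geq1$, or $\eta=\infty$, or $D(\delta,t)=0$) the integral $I=\int_{0}^{1/2}\left((1-y)^{-\alpha\theta}-1\right)\varphi(1,\overline{F}(ty))\,dF(ty)$ concentrates its mass \emph{near} $y=0$, not on $[\delta,1/2]$. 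Indeed the paper shows, via Karamata's theorem and Lemma 2.4 of Mao and Hu, that $I\sim\alpha\theta\int_{0}^{1/2}y\,\varphi(1,\overline{F}(ty))\,dF(ty)$ with $\Phi(\overline{F}(\cdot))\in\mathrm{RV}_{-\alpha(1-\beta)}$ and $\alpha(1-\beta)\geq1$, so the truncated moment $\int_{0}^{\delta}y\,\varphi(1,\overline{F}(ty))\,dF(ty)$ carries asymptotically \emph{all} of this mass as $\delta\downarrow0$ (for the independence copula with $\alpha>1$, for instance, both truncations converge to the same finite mean). Your plan to bound the $(0,\delta)$ contribution by a constant times $\int_{0}^{\delta}y\,\varphi(1,\overline{F}(ty))\,dF(ty)\cdot h(\overline{F}(t))$ and then discard it as $\delta\downarrow0$ therefore discards the main term. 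What is actually needed on $(0,\delta t)$ is a matching two-sided asymptotic with the correct constant, namely that $\widehat{C}_{v}(\overline{F}(t-y),\overline{F}(y))/\widehat{C}_{v}(\overline{F}(t),\overline{F}(y))-1$ behaves like $(1-y/t)^{-\alpha\theta}-1\sim\alpha\theta y/t$ uniformly there; this is exactly what Assumption \ref{A3}, combined with the Potter/Taylor expansion, the domination by $g$, and dominated convergence, delivers in the paper's terms $I_{1}$ and $I_{3}$.

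A second, related gap is your treatment of $[\delta,1/2]$: there $\overline{F}(ty)\downarrow0$ as $t\to\infty$, so the convergence \eqref{part1}, which holds for a \emph{fixed} second argument $v\in(0,1]$, does not apply (no uniformity as $v\downarrow0$ is assumed, and $\varphi(1,v)$ typically blows up as $v\downarrow0$ since $\varphi(1,1/\cdot)\in\mathrm{RV}_{\beta}$). On that region the correct normalization is $(\overline{F}(t))^{\kappa-1}\ell(\overline{F}(t))$ with limit $\tau_{v}$ (Corollary \ref{coro: partial derivative of copula regular varying-1}), so its contribution is of order $(\overline{F}(t))^{\kappa}\ell(\overline{F}(t))$ and is negligible only because of condition \eqref{cond} --- a hypothesis your argument never invokes but which is indispensable here. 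So while the starting point is correct, the proof as proposed would fail: it throws away the dominant contribution and relies on a convergence statement in a region where it does not hold.
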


\begin{proof}
Let $\varepsilon$ be arbitrarily small. Using the same rewriting in the proof
of Lemma \ref{case1}, we have
\[
\frac{\Pr \left(  X+Y>t,X\leq t,Y\leq \frac{t}{2}\right)  }{\left(  \overline
{F}(t)\right)  ^{\theta}h\left(  \overline{F}(t)\right)  }=\int_{0}%
^{t/2}\left(  \frac{\widehat{C}_{v}\left(  \overline{F}(t-y),\overline
{F}\left(  y\right)  \right)  }{\widehat{C}_{v}\left(  \overline
{F}(t),\overline{F}\left(  y\right)  \right)  }-1\right)  \frac{\widehat
{C}_{v}\left(  \overline{F}(t),\overline{F}\left(  y\right)  \right)
}{\left(  \overline{F}(t)\right)  ^{\theta}h\left(  \overline{F}(t)\right)
}dF(y).
\]
Then
\begin{align*}
&  \int_{0}^{t/2}\left(  \frac{\widehat{C}_{v}\left(  \overline{F}%
(t-y),\overline{F}\left(  y\right)  \right)  }{\widehat{C}_{v}\left(
\overline{F}(t),\overline{F}\left(  y\right)  \right)  }-1\right)
\frac{\widehat{C}_{v}\left(  \overline{F}(t),\overline{F}\left(  y\right)
\right)  }{\left(  \overline{F}(t)\right)  ^{\theta}h\left(  \overline
{F}(t)\right)  }dF(y)-\int_{0}^{t/2}\left(  \left(  1-\frac{y}{t}\right)
^{-\alpha \theta}-1\right)  \varphi \left(  1,\overline{F}\left(  y\right)
\right)  dF(y)\\
&  =\int_{0}^{\delta t}\left(  \frac{\widehat{C}_{v}\left(  \overline
{F}(t-y),\overline{F}\left(  y\right)  \right)  }{\widehat{C}_{v}\left(
\overline{F}(t),\overline{F}\left(  y\right)  \right)  }-\left(  1-\frac{y}%
{t}\right)  ^{-\alpha \theta}\right)  \frac{\widehat{C}_{v}\left(  \overline
{F}(t),\overline{F}\left(  y\right)  \right)  }{\left(  \overline
{F}(t)\right)  ^{\theta}h\left(  \overline{F}(t)\right)  }dF(y)\\
&  +\int_{\delta t}^{t/2}\left(  \frac{\widehat{C}_{v}\left(  \overline
{F}(t-y),\overline{F}\left(  y\right)  \right)  }{\widehat{C}_{v}\left(
\overline{F}(t),\overline{F}\left(  y\right)  \right)  }-1\right)
\frac{\widehat{C}_{v}\left(  \overline{F}(t),\overline{F}\left(  y\right)
\right)  }{\left(  \overline{F}(t)\right)  ^{\theta}h\left(  \overline
{F}(t)\right)  }dF(y)\\
&  +\int_{0}^{\delta t}\left(  \left(  1-\frac{y}{t}\right)  ^{-\alpha \theta
}-1\right)  \left(  \frac{\widehat{C}_{v}\left(  \overline{F}(t),\overline
{F}\left(  y\right)  \right)  }{\left(  \overline{F}(t)\right)  ^{\theta
}h\left(  \overline{F}(t)\right)  }-\varphi \left(  1,\overline{F}\left(
y\right)  \right)  \right)  dF(y)\\
&  -\int_{\delta t}^{t/2}\left(  \left(  1-\frac{y}{t}\right)  ^{-\alpha
\theta}-1\right)  \varphi \left(  1,\overline{F}\left(  y\right)  \right)
dF(y)\\
&  :=I_{1}+I_{2}+I_{3}+I_{4}.
\end{align*}
Let $\Phi(x)=\int_{0}^{x}\varphi(1,v)dv$ for $0\leq x\leq1$. Denote
$I=\int_{0}^{t/2}\left(  \left(  1-\frac{y}{t}\right)  ^{-\alpha \theta
}-1\right)  \varphi \left(  1,\overline{F}\left(  y\right)  \right)
dF(y)=-\int_{0}^{t/2}\left(  \left(  1-\frac{y}{t}\right)  ^{-\alpha \theta
}-1\right)  d\Phi(\overline{F}\left(  y\right)  )$. We will show that
$I_{k}=o\left(  I\right)  $ for $k=1,2,3,4$.

We first analyze $I_{4}$, which will be used in the other three terms. Since
$0\leq \beta \leq1$, by Assumption \ref{A4}\ and Karamata's Theorem, we have
$\Phi(\overline{F}\left(  \cdot \right)  )\in \mathrm{RV}_{-\alpha(1-\beta)}$.
Note that $\alpha(1-\beta)\geq1$. By Lemma 2.4 of Mao and Hu
\cite{Mao-Hu-2013},
\begin{equation}
I\sim-\frac{\alpha \theta}{t}\int_{0}^{t/2}yd\Phi(\overline{F}\left(  y\right)
), \label{prop I}%
\end{equation}
which implies that $I_{4}=o\left(  I\right)  $.

For $I_{1}$, applying (\ref{combine AS}) leads to%
\begin{align*}
I_{1}  &  \lesssim \int_{0}^{\delta t}\left(  \alpha(\theta+\varepsilon
)\frac{y}{t}-\left(  1-\frac{y}{t}\right)  ^{-\alpha \theta}\right)
\frac{\widehat{C}_{v}\left(  \overline{F}(t),\overline{F}\left(  y\right)
\right)  }{\left(  \overline{F}(t)\right)  ^{\theta}h\left(  \overline
{F}(t)\right)  }dF(y)\\
&  \lesssim \frac{\varepsilon \alpha}{t}\int_{0}^{\delta t}y\frac{\widehat
{C}_{v}\left(  \overline{F}(t),\overline{F}\left(  y\right)  \right)
}{\left(  \overline{F}(t)\right)  ^{\theta}h\left(  \overline{F}(t)\right)
}dF(y).
\end{align*}
Since there exists a function $g$ to bound $\frac{\widehat{C}_{v}\left(
\overline{F}(t),\overline{F}\left(  y\right)  \right)  }{\left(  \overline
{F}(t)\right)  ^{\theta}h\left(  \overline{F}(t)\right)  }$, by the dominated
convergence theorem
\[
\lim_{t\rightarrow \infty}\int_{0}^{\infty}\frac{\widehat{C}_{v}\left(
\overline{F}(t),\overline{F}\left(  y\right)  \right)  }{\left(  \overline
{F}(t)\right)  ^{\theta}h\left(  \overline{F}(t)\right)  }dF(y)=\int
_{0}^{\infty}\varphi \left(  1,\overline{F}\left(  y\right)  \right)  dF(y).
\]
Then by (\ref{prop I}),%
\[
\frac{I_{1}}{I}\lesssim \frac{\varepsilon \int_{0}^{\delta t}y\frac{\widehat
{C}_{v}\left(  \overline{F}(t),\overline{F}\left(  y\right)  \right)
}{\left(  \overline{F}(t)\right)  ^{\theta}h\left(  \overline{F}(t)\right)
}dF(y)}{-\theta \int_{0}^{t/2}yd\Phi(\overline{F}\left(  y\right)
)}\rightarrow \frac{\varepsilon}{^{\theta}}.
\]
The other side can be obtained similarly. By the arbitrariness of
$\varepsilon$, it leads to $I_{1}=o\left(  I\right)  $.

For $I_{2}$, by Corollary
\ref{coro: partial derivative of copula regular varying-1} and Lemma
\ref{D1D2},
\begin{align*}
I_{2}  &  =\int_{\delta}^{1/2}\left(  \frac{\widehat{C}_{v}\left(
\overline{F}(t(1-y)),\overline{F}\left(  ty\right)  \right)  }{\left(
\overline{F}(t)\right)  ^{\kappa}\ell \left(  \overline{F}(t)\right)  }%
-\frac{\widehat{C}_{v}\left(  \overline{F}(t),\overline{F}\left(  y\right)
\right)  }{\left(  \overline{F}(t)\right)  ^{\kappa}\ell \left(  \overline
{F}(t)\right)  }\right)  \frac{\left(  \overline{F}(t)\right)  ^{\kappa
-\theta}\ell \left(  \overline{F}(t)\right)  }{h\left(  \overline{F}(t)\right)
}dF(y)\\
&  \sim \eta(\delta)\frac{\left(  \overline{F}(t)\right)  ^{\kappa-\theta}%
\ell \left(  \overline{F}(t)\right)  }{h\left(  \overline{F}(t)\right)  }.
\end{align*}
Then by (\ref{prop I}) and (\ref{cond}), $I_{2}=o\left(  I\right)  $.

For $I_{3}$, similar with the analysis of $I_{1}$, we obtain $I_{3}=o\left(
I\right)  $. This ends the proof.
\end{proof}

\bigskip

Lastly, we prove the two main theorems.

\begin{proof}
[Proof of Theorem \ref{TH1-P-D1}]For $t>0$, consider the following
decomposition%
\begin{align}
\{X+Y>t\}  &  =\left \{  X+Y>t,0<Y\leq \frac{t}{2}\right \}  +\left \{
X+Y>t,0<X\leq \frac{t}{2}\right \}  +\left \{  X>\frac{t}{2},Y>\frac{t}%
{2}\right \} \nonumber \\
&  =:A_{1}+A_{2}+A_{3}. \label{decomp}%
\end{align}
By Lemma \ref{case1} and the symmetry of $\widehat{C}$, we have for $i=1,2$%
\begin{align*}
\Pr \left(  A_{i}\right)   &  =\overline{F}(t)-\widehat{C}\left(  \overline
{F}(t),\overline{F}\left(  \frac{t}{2}\right)  \right)  +\Pr \left(
X+Y>t,X\leq t,Y\leq \frac{t}{2}\right) \\
&  =\overline{F}(t)+\left(  \eta-\tau \left(  1,2^{\alpha}\right)  \right)
\left(  \overline{F}(t)\right)  ^{\kappa}\ell \left(  \overline{F}(t)\right)
(1+o(1)).
\end{align*}
For $A_{3}$,
\begin{equation}
\Pr \left(  A_{3}\right)  =\widehat{C}\left(  \overline{F}\left(  \frac{t}%
{2}\right)  ,\overline{F}\left(  \frac{t}{2}\right)  \right)  \sim \tau \left(
2^{\alpha},2^{\alpha}\right)  \left(  \overline{F}(t)\right)  ^{\kappa}%
\ell \left(  \overline{F}(t)\right)  . \label{third}%
\end{equation}
Combining the above analysis for $A_{1}$, $A_{2}$ and $A_{3}$ gives the
desired result. \bigskip
\end{proof}

\begin{proof}
[Proof of Theorem \ref{Thm2}]We apply the decomposition in (\ref{decomp}). By
Lemma \ref{term2} and the symmetry of $\widehat{C}$, we have for $i=1,2$,%
\begin{align*}
\Pr \left(  A_{i}\right)   &  =\overline{F}(t)-\tau \left(  1,2^{\alpha}\right)
\left(  \overline{F}(t)\right)  ^{\kappa}\ell \left(  \overline{F}(t)\right)
(1+o(1))\\
&  +\int_{0}^{1/2}\left(  (1-y)^{-\alpha \theta}-1\right)  \varphi
(1,\overline{F}\left(  ty\right)  )dF(ty)\left(  \overline{F}(t)\right)
^{\theta}h\left(  \overline{F}(t)\right)  (1+o(1)).
\end{align*}
The analysis of $A_{3}$ is the same as given in (\ref{third}). Combining the
above analysis for $A_{1}$, $A_{2}$ and $A_{3}$ gives the desired
result.\bigskip
\end{proof}

\textbf{Availability of Data and Materials} Data sharing not applicable to
this article as no datasets were generated or analyzed during the current study.

\baselineskip14.5pt

\end{document}